\newcommand{\sv}[1]{}
\theoremstyle{plain}
\newcommand{\defproblem}[3]{
  \vspace{3mm}
\noindent\fbox{
  \begin{minipage}{.95\textwidth}
  \begin{tabular*}{\textwidth}{@{\extracolsep{\fill}}lr} #1  \\ \end{tabular*}
  {\bf{Input:}} #2  \\
  {\bf{Question:}} #3
  \end{minipage}
  }
  \vspace{2mm}
  }
\newtheorem{theorem}{\bf Theorem}
\newtheorem{definition}{\bf Definition}
\newtheorem{proposition}{\bf Proposition}
\newtheorem{corollary}{\bf Corollary}
\newtheorem{observation}{\bf Observation}
\newtheorem{reduction rule}{\bf Reduction Rule}
\newtheorem{branching rule}{\bf Branching Rule}
\newtheorem{lemma}{\bf Lemma}
\newtheorem{remark}{\bf Remark}
\newenvironment{reduction procedure}[1]
{\innercustomthm}
  {\endinnercustomthm}
\newcommand{\MonOneThreeSAT}{{\sc Monotone $1$-in-$3$ SAT}}
\newcommand{\vwsp}{{\sc Valued WSP}}
\newcommand{\vapep}{{\sc Valued APEP}}
\newcommand{\ev}[1]{\langle #1 \rangle}
\newcommand{\userProfile}{user profile}
\newcommand{\uP}{\operatorname{usr}}
\newcommand{\uPs}[2]{\ensuremath{\uP_{#1}(#2)}}
\newcommand{\uProfSet}{\mathcal{P}}
\newcommand{\maxdiff}{\rm maxdiff}
\newcommand{\hA}{\hat{A}}
\newcommand{\sE}{{\sf E}}
\newcommand{\sU}{{\sf U}}
\newcommand{\OPT}{\mbox{\rm OPT}}
\newcommand{\bodU}{{\sf BoD_{\sU}}}
\newcommand{\bodE}{{\sf BoD_{\sE}}}
\newcommand{\sodU}{{\sf SoD_{\sU}}}
\newcommand{\sodE}{{\sf SoD_{\sE}}}
\newcommand{\bN}{{\mathbb N}}
\newcommand{\cI}{{\mathcal I}}
\newcommand{\cO}{{\mathcal O}}
\newcommand{\BoDU}[1]{( #1 ,\leftrightarrow, \forall)}
\newcommand{\BoDE}[1]{( #1 ,\leftrightarrow, \exists)}
\newcommand{\SoDE}[1]{( #1 , \updownarrow, \exists)}
\newcommand{\SoDU}[1]{( #1 , \updownarrow, \forall)}
\newcommand{\card}[1]{( ,\leq, )}
\title{Valued Authorization Policy Existence Problem: Theory and Experiments}
\date{}
\author[1]{Jason Crampton}
\author[1]{Eduard Eiben}
\author[1]{Gregory Gutin}
\author[2]{Daniel Karapetyan}
\author[1]{Diptapriyo Majumdar}
\affil[1]{Royal Holloway, University of London, Egham, United Kingdom\\
    \texttt{\{jason.crampton|eduard.eiben|g.gutin|diptapriyo.majumdar\}@rhul.ac.uk}}
\affil[2]{University of Nottingham, United Kingdom\\
  \texttt{daniel.karapetyan@nottingham.ac.uk}}
\begin{document}

\maketitle

\begin{abstract}
Recent work has shown that many problems of satisfiability and resiliency in workflows may be viewed as special cases of the authorization policy existence problem (APEP), which returns an authorization policy if one exists and ``No'' otherwise. However, in many practical settings it would be more useful to obtain a ``least bad'' policy than just a ``No'', where ``least bad'' is characterized by some numerical value indicating the extent to which the policy violates the base authorization relation and constraints.
Accordingly, we introduce the Valued APEP, which returns an authorization policy of minimum weight, where the (non-negative) weight is determined by the constraints violated by the returned solution.

We then establish a number of results concerning the parameterized complexity of Valued APEP.
We prove that the problem is fixed-parameter tractable (FPT) if the set of constraints satisfies two restrictions, but is intractable if only one of these restrictions holds.
(Most constraints known to be of practical use satisfy both restrictions.)

We also introduce a new type of resiliency for workflow satisfiability problem, show how it can be addressed using Valued APEP and use this to build a set of benchmark instances for Valued APEP.
Following a set of computational experiments with two mixed integer programming (MIP) formulations, we demonstrate that the Valued APEP formulation based on the user profile concept has FPT-like running time and usually significantly outperforms a naive formulation.
\end{abstract}

%%
%% The code below is generated by the tool at http://dl.acm.org/ccs.cfm.
%% Please copy and paste the code instead of the example below.
%%
%\ccsdesc[100]{Networks~Network reliability}

%%
%% Keywords. The author(s) should pick words that accurately describe
%% the work being presented. Separate the keywords with commas.
%\keywords{User Authorization Query, Parameterized Complexity, W-hardness, Representative Families}

%% A "teaser" image appears between the author and affiliation
%% information and the body of the document, and typically spans the
%% page.
%\begin{teaserfigure}
%  \includegraphics[width=\textwidth]{sampleteaser}
%  \caption{Seattle Mariners at Spring Training, 2010.}
%  \Description{Enjoying the baseball game from the third-base
%  seats. Ichiro Suzuki preparing to bat.}
%  \label{fig:teaser}
%\end{teaserfigure}

%%
%% This command processes the author and affiliation and title
%% information and builds the first part of the formatted document.
\maketitle

\section{Introduction}\label{sec:intro}

Access control is a fundamental aspect of the security of any multi-user computing system.
Access control requirements are typically enforced by specifying an authorization policy and implementing a system to enforce the policy.
Such a policy identifies which interactions between users and resources are to be allowed (and denied) by the access control system.

Over the years, authorization policies have become more complex, not least because of the introduction of constraints -- often motivated by business requirements such as ``Chinese walls'' -- which further refine an authorization policy.
A separation-of-duty constraint (also known as the ``two-man rule'' or ``four-eyes policy'') may, for example, require that no single user is authorized for some particularly sensitive group of resources.
Such a constraint is typically used to prevent misuse of the system by a single user.

The use of authorization policies and constraints, by design, limits which users may access resources.
Nevertheless, the ability to perform one's duties will usually require access to particular resources, and overly prescriptive policies and constraints may mean that some resources are unavailable to users that need access to them.
In other words, there may be some conflict between authorization policies and operational demands: a policy that is insufficiently restrictive may suit operational requirements but lead to security violations; conversely, too restrictive a policy may compromise an organization's ability to meet its business objectives.

Recent work on workflow satisfiability and access control resiliency recognized the importance of being able to determine whether or not security policies prevent an organization from achieving its objectives~\cite{CrGuWa15,CrGuYe13,LiWaTr09,MaMoMo14,WaLi10}.
Berg\'e \emph{et al.} introduced the {\sc Authorization Policy Existence Problem} (APEP)~\cite{BergeCGW18}, which generalizes many of the existing satisfiability and resiliency problems in access control.
Informally, the APEP seeks to find an authorization policy, subject to restrictions on individual authorizations (defined by a base authorization relation) and restrictions on collective authorizations (defined by a set of authorization constraints).

APEP may be viewed as a decision or search problem.
An algorithm to solve either version of the problem returns ``no'' if no authorization policy exists, given the base authorization relation and the constraints that form part of the input to the instance.
Such a response is not particularly useful in practice: from an operational perspective, an administrator would presumably find it more useful if an algorithm to solve APEP returned some policy, even if that policy could lead to security violations, provided the risk of deploying that policy could be quantified in some way.

Hence, in this paper, we introduce a generalization of APEP, which we call {\sc Valued APEP}, where every policy is associated with a non-negative weight.
A solution to {\sc Valued APEP} is a policy of minimum weight; a policy of zero weight satisfies the base authorization relation and all the constraints.
% In order to define {\sc Valued APEP}, we replace the base authorization relation from APEP with a weighted authorization function, which associates each potential solution with an authorization weight, and the APEP constraints are replaced by weighted constraints, which associate each potential solution with a constraint weight.
% The weight of a potential solution is the sum of the authorization and constraint weights.

We establish the complexity of {\sc Valued APEP} for certain types of constraints, using multi-variate complexity analysis.
We prove that {\sc APEP} (and hence {\sc Valued APEP}) is fixed-parameter intractable, even if all the constraints are user-independent, a class of constraints for which the {\sc Workflow Satisfiability Problem} (WSP) -- a special case of APEP -- is fixed-parameter tractable.
However, we subsequently show that {\sc Valued APEP} is fixed-parameter tractable when all weighted constraints are user-independent and the set of constraints is $t$-weight-bounded ($t$-wbounded).
Informally, the identities of the users are irrelevant to the solution and there exists a solution (policy) containing no more than $t$ authorizations.
We show that sets of user-independent constraints that contain only particular kinds of widely used constraints are $t$-wbounded. 
Berg\'e \emph{et al.}~\cite{BergeCGW18} introduced and used a notion of a bounded constraint. Bounded and wbounded constraints have some similarities, but wbounded constraints are more refined and allow for more precise complexity analysis.
In particular, the notion of a bounded constraint cannot be used for {\sc Valued APEP} and we are able to derive improved complexity results for APEP using wbounded constraints.

A significant innovation of the paper is to introduce the notion of a user profile for a weighted constraint.
Counting user profiles provides a powerful means of analyzing the complexity of ({\sc Valued}) APEP, somewhat analogous to the use of patterns in the analysis of workflow satisfiability problems.
This enables us to \begin{inparaenum}[(i)]\item derive the complexity of {\sc Valued APEP} when all constraints are $t$-wbounded and user-independent, \item establish the complexity of {\sc Valued} APEP for the most common types of user-independent constraints, and \item improve on existing results for the complexity of APEP obtained by Berg\'e \emph{et al.}~\cite{BergeCGW18}\end{inparaenum}.
We also prove that our result for the complexity of APEP with $t$-wbounded, user-independent constraints cannot be improved, unless a well-known and widely accepted hypothesis in parameterized complexity theory is false.
Finally, we show that certain sub-classes of {\sc Valued APEP} can be reduced to the {\sc Valued Workflow Satisfiability Problem} ({\sc Valued WSP})~\cite{CramptonGK15} with user-independent constraints, thereby establishing that these sub-classes are fixed-parameter tractable.

For the first time in the APEP literature, we conduct computational experimentsbased on an application of \vapep{} in WSP\@.
Specifically, we introduce a concept of $\tau$-resiliency in WSP which seeks a solution that is resilient to deleting up to $\tau$ arbitrary users.
We build a set of \vapep{} benchmark instances that address $\tau$-resiliency in WSP and use it in computational experiments.
To solve \vapep{}, we use two mixed integer programming formulations.
One is a `naive' formulation of the problem whereas the other one exploits the user profile concept.
We demonstrate that the formulation based on the user profile concept has FPT-like solution times and usually outperforms the naive formulation by a large margin.
We also analyse and discuss the properties of $\tau$-resiliency.

In the next section, we summarize relevant background material.
We introduce the {\sc Valued APEP} in Section~\ref{sec:vapep} and define weighted user-independent constraints.
We also show that {\sc Valued WSP} is a special case of {\sc Valued APEP} and describe particular types of weighted user-independent constraints for APEP.
In Section \ref{sec:bound}, introduce the notion of a $t$-wbounded constraint and establish the complexity of {\sc Valued APEP} when all constraints are $t$-wbounded.
We prove the problem is intractable for arbitrary sets of $t$-wbounded constraints or user-independent constraints, but fixed-parameter tractable for $t$-wbounded, user-independent constraints.
In the following two sections, we establish the complexity of other sub-classes of {\sc Valued APEP}.
In Section~\ref{sec:resilient-wsp} we explain how APEP can be used to address questions of resiliency in workflows.
In the following two sections we introduce two MIP formulations for \vapep{} and test these formulations using the resiliency questions introduced in Section~\ref{sec:resilient-wsp}.
In Section~\ref{sec:related-work} we discuss how our contributions improve and extend related work.
We conclude the paper with a summary of our contributions and some ideas for future work in Section~\ref{sec:conclusions}.

\section{Background}\label{sec:background}

% Recently, Berg\'e et al.~\cite{BergeCGW18} introduced the {\sc Authorization Policy Existence Problem} (APEP) as a common generalization of a number of workflow satisfaction and resilience problems.
APEP is defined in the context of a set of \emph{users} $U$, a set of \emph{resources} $R$,  a \emph{base authorization relation} $\hA \subseteq U \times R$, and a set of \emph{constraints} $C$.
Informally, APEP asks whether it is possible to find an authorization relation $A$ that satisfies all the constraints and is a subset of $\hA$.

For an arbitrary authorization relation $A \subseteq U \times R$ and an arbitrary resource $r \in R$, we write $A(r)$ to denote the set of users authorized for resource $r$ by $A$; more formally, $A(r) = \{u \in U \mid (u, r) \in A\}$.
For a subset $T \subseteq R$, we define the set of users authorized for some resource in $T$ to be $A(T) = \bigcup_{r \in T} A(r)$.
For a user $u$, $A(u) = \{r \in R \mid (u,r) \in A\}$; and for $V \subseteq U$, $A(V)=\{r\in R \mid (u, r) \in A, u\in V\}$.

An authorization relation $A \subseteq U \times R$ is
\begin{itemize}
	\item {\em authorized} (with respect to $\hA$) if $A \subseteq \hA$,
	\item {\em complete} if for all $r \in R$, $A(r) \neq \emptyset$,
	\item {\em eligible} (with respect to $C$) if it satisfies all $c \in C$,
	\item {\em valid} (with respect to $\hA$ and $C$) if $A$ is authorized, complete, and eligible.
\end{itemize}
An instance of APEP is {\em satisfiable} if it admits a valid authorization relation $A$.
% Clearly, satisfiability of an APEP instance depends on $\hA$ and $C$.

\subsection{APEP constraints}\label{sec:apep-constraints}

In general, there are no restrictions on the constraints that can appear in an APEP instance, although the use of arbitrary constraints has a significant impact on the computational complexity of APEP (see Section~\ref{sec:complexity-wsp-apep}).
Accordingly, Berg\'e {\em et al.}~\cite{BergeCGW18} defined several standard types of constraints for APEP, summarized in Table~\ref{tbl:apep-constraints}, generalizing existing constraints in the access control literature.

 \begin{table}[h]\caption{Standard APEP constraints: $r,r' \in R$, $t \in \mathbb{N}$}\label{tbl:apep-constraints}
 \begin{tabular}{|l|l|l|l|}
 \hline
  \bf Description & \bf Notation & \bf Satisfaction criterion & \bf Constraint family \\
  \hline
  Universal binding-of-duty & $(r,r',\leftrightarrow,\forall)$ & $A(r) = A(r')$ & $\bodU$ \\
  Universal separation-of-duty & $(r,r',\updownarrow,\forall)$ & $A(r) \cap A(r') = \emptyset$ & $\sodU$ \\
  Existential binding-of-duty & $(r,r',\leftrightarrow,\exists)$ & $A(r) \cap A(r') \ne \emptyset$ & $\bodE$ \\
  Existential separation-of-duty & $(r,r',\updownarrow,\exists)$ & $A(r) \ne A(r')$ & $\sodE$ \\
Cardinality-Upper-Bound & $(r,\leq,t)$ & $|A(r)| \leq t$ & $\sf Card_{UB}$ \\
Cardinality-Lower-Bound & $(r,\geq,t)$ & $|A(r)| \geq t$ & $\sf Card_{LB}$ \\
  \hline
 \end{tabular}
  \end{table}

\subsection{WSP as a special case of APEP}\label{sec:wsp-special-case-of-apep}

Consider an instance of APEP which contains any of the constraints defined in Section~\ref{sec:apep-constraints}, and includes the set of cardinality constraints $\{(r,\leq,1) \mid r \in R\}$.
Any solution $A$ to such an APEP instance requires $|A(r)| = 1$ for all $r \in R$ (since completeness requires $|A(r)| > 0$).
Thus $A$ may be regarded as a function $A : R \rightarrow U$.
Since $|A(r)| = 1$, there is no distinction between existential and universal constraints (whether they are separation-of-duty or binding-of-duty): specifically, $A$ satisfies the constraint $(r,r',\circ,\exists)$ iff $A$ satisfies $(r,r',\circ,\forall)$ (for $\circ \in \{\updownarrow,\leftrightarrow\}$).

In other words, an APEP instance of this form is equivalent to an instance of WSP~\cite{WaLi10,CrGuYe13}, with separation-of-duty, binding-of-duty and cardinality constraints: resources correspond to workflow steps, the base authorization relation to the authorization policy, and an APEP solution to a plan.
% 
%  \begin{center}
%   \begin{tabularx}{\columnwidth}{|l|X|X|}
%     \hline
%     & \bf APEP & \bf WSP \\
%     \hline
%     Inputs & Users $U$ & Users $U$ \\
%     & Resources $R$ & Workflow steps $S$ \\
%     & Base authorization relation $\hA \subseteq U \times R$ & Authorization policy $UA \subseteq U \times S$ \\
%     & Constraints & Constraints \\
%     \hline
%     Solution & Authorization relation $A$ & Plan $\pi$ \\
%     \hline
%   \end{tabularx}  
%  \end{center}
% 
Accordingly, strong connections exist between APEP and WSP, not least because certain instances of APEP can be reduced to WSP~\cite{BergeCGW18}. In WSP, the set of resources is the set of {\em steps}, denoted by $S.$

\subsection{Complexity of WSP and APEP}\label{sec:complexity-wsp-apep}
In the context of WSP, the authorization policy (the base authorization relation in APEP) specifies which users are authorized for which steps in the workflow.
A solution to WSP is a plan $\pi$ that assigns a single user to each step in the workflow.
In general, WSP is {\sf NP}-complete \cite{WaLi10}. 

Let $k=|S|$ and $n=|U|.$ 
Then there are $n^k$ plans, and the validity of each plan can be established in polynomial time (in the size of the input). 
Thus WSP can be solved in polynomial time if $k$ is constant. 
It is easy to establish that APEP is harder than WSP in general.

\begin{proposition}
 {\sc APEP} is {\sf NP}-complete even when there is a single resource.
\end{proposition}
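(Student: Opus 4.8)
The plan is to establish membership in {\sf NP} and then {\sf NP}-hardness by a reduction from \textsc{Independent Set}. Membership is immediate: with a single resource $r$, an authorization relation $A$ is determined by the set $A(r)\subseteq U$, so a certificate is just a subset of $U$, and verifying that $A$ is authorized ($A(r)\subseteq\hA(r)$), complete ($A(r)\neq\emptyset$) and eligible (every $c\in C$ is satisfied) takes polynomial time, since the satisfaction test for an APEP constraint is, by the definition of the problem, polynomial-time checkable.

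For hardness I would reduce from \textsc{Independent Set}: given a graph $G=(V,E)$ and an integer $k\geq 1$, build the APEP instance with user set $U=V$, a single resource $r$, base authorization relation $\hA=U\times\{r\}$, and constraint set consisting of (i) for every edge $\{u,v\}\in E$ a constraint $c_{uv}$ with satisfaction criterion $|A(r)\cap\{u,v\}|\leq 1$ (i.e.\ $A$ does not authorize both $u$ and $v$ for $r$), and (ii) the cardinality-lower-bound constraint $(r,\geq,k)$. The construction is clearly polynomial and each constraint is trivially polynomial-time checkable. The key step is then the routine verification that for $W\subseteq V$ the relation $A$ with $A(r)=W$ is valid exactly when $W$ is an independent set of $G$ of size at least $k$: authorization is automatic; completeness follows from $|W|\geq k\geq 1$; the $c_{uv}$ are satisfied precisely when $W$ spans no edge of $G$; and $(r,\geq,k)$ holds precisely when $|W|\geq k$. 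Hence the APEP instance is satisfiable if and only if $G$ has an independent set of size at least $k$.

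The one point that needs care — and really the substance of the statement — is that this hardness must be driven by APEP permitting \emph{arbitrary} (polynomially verifiable) constraints, and not by the standard constraints of Table~\ref{tbl:apep-constraints}: on a single resource every binding- or separation-of-duty constraint of that table is vacuous, leaving only cardinality constraints, for which satisfiability of a single-resource instance is trivially decidable. The constraints $c_{uv}$ used above are not of the Table~\ref{tbl:apep-constraints} form, and this is exactly the freedom being exploited; equivalently, the whole family $\{c_{uv}:\{u,v\}\in E\}\cup\{(r,\geq,k)\}$ could be collapsed into the single constraint ``$A(r)$ is an independent set of $G$ of size at least $k$''. This is also why the proposition does not clash with WSP (and APEP with standard constraints) being polynomial-time solvable when the number of steps/resources is bounded.
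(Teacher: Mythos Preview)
Your proof is correct. Both membership and hardness are cleanly argued, and your closing paragraph correctly identifies the crux of the proposition: the hardness stems entirely from APEP allowing \emph{arbitrary} polynomially verifiable constraints, not from the standard families in Table~\ref{tbl:apep-constraints}.

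The paper takes essentially the same approach --- encode an {\sf NP}-hard problem using non-standard constraints over a single resource --- but reduces from \MonOneThreeSAT\ rather than \textsc{Independent Set}. In the paper's reduction, variables become users, $(x,r)\in A$ encodes ``$x$ is {\sc True}'', and each clause becomes a constraint requiring that exactly one of its three users is assigned to $r$. Your reduction differs in its source problem and in that you combine a standard cardinality-lower-bound constraint $(r,\geq,k)$ with the non-standard edge constraints $c_{uv}$, whereas the paper encodes everything into the clause constraints alone. Neither approach offers a real advantage over the other here; both are one-line reductions once the freedom to use arbitrary constraints is recognised, and your version has the minor expository benefit of making explicit (via your final paragraph) why the standard constraints alone cannot yield hardness on a single resource.
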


We provide a polynomial time reduciton from {\MonOneThreeSAT}~\cite{CSP78} problem.
We formally state the problem definition.

\defproblem{{\MonOneThreeSAT}}{A $3$-CNF formula $\phi$ such that no literal is a negated variable.}{Does $\phi$ have a satisfying assignment that assigns {\sc True} to only one literal from every clause?}

The proof uses a simple reduction from {\sc Monotone 1-in-3 SAT}~\cite{CSP78} to an instance of APEP in which there is a single resource $r$: the set of variables corresponds to the set of users; $(x,r) \in A$ corresponds to assigning the value {\sc True} to variable $x$; and every clause corresponds to a constraint comprising three ``users'', which is satisfied provided exactly one user is assigned to the resource.

% 
% The following reduction shows that no such algorithm exists for APEP, even when $k=|R|=1$, unless ${\sf P}={\sf NP}$. 
% In {\sc Monotone 1-in-3 SAT} we are given a CNF formula, where every clause has thee variables and contains no negations. 
% The formula is {\em satisfiable} if there is a truth assignment which assigns {\sc true} to only one variable in each clause. 
% It is {\sf NP}-complete to decide whether such a formula is satisfiable \cite{CSP78}.
% Let the only resource be {\sc true}, let every variable $x_i$ correspond to a user $u_i$ and let every constraint have three users. 
% Such a constraint is satisfied if exactly one of its users is assigned to the resource. 
% Thus, APEP is {\sf NP}-complete even when $k=1$. 

Wang and Li~\cite{WaLi10} introduced parameterization\footnote{We provide a brief introduction to parameterized complexity in Section \ref{sec:parameterized-complexity}.} of WSP by parameter $k$. 
This parameterization is natural because for many practical instances of WSP, $k=|S|\ll n=|U|$ and $k$ is relatively small.  
Wang and Li proved that WSP is intractable, even from the parameterized point of view. 
However, Wang and Li proved that WSP becomes computationally tractable from the parameterized point of view (i.e., fixed-parameter tractable) when the constraints are restricted to some generalizations of binary separation-of-duty (SoD) and binding-of-duty (BoD) constraints. 

Similarly, for APEP, we denote $k=|R|$ and $n=|U|.$ In the rest of the paper, we assume that $k$ is relatively small and thus consider it as the parameter. 
While the assumption that $k$ is small is not necessarily correct in some applications, our approach is useful where $k$ is indeed small, for example in special cases such as WSP.
Also, there are situations where strict controls are placed on the utilization of and access to (some small subset of system) resources by users.

\subsection{User-independent constraints}
Wang and Li's result has been extended to the much larger family of user-independent constraints, which includes the aforementioned SoD and BoD constraints and most other constraints that arise in practice~\cite{CoCrGaGuJo14,KarapetyanPGG19}. 
Informally, a constraint is called user-independent if its satisfaction does not depend on the identities of the users assigned to steps. 
(For example, it is sufficient to assign steps in a separation of duty constraint to different users in order to satisfy the constraint.)

The concept of a user-independent constraint for WSP can be extended formally to the APEP setting in the following way~\cite{BergeCGW18}. 
Let $\sigma: U \rightarrow U$ be a permutation on the user set.
Then, given an authorization relation $A \subseteq U \times R$, we write $\sigma(A) = \{(\sigma(u), r)|(u, r) \in A\}$.
A constraint $c$ is said to be {\em user-independent} if for every authorization relation $A$ that satisfies $c$ and every permutation $\sigma: U \rightarrow U$, $\sigma(A)$ also satisfies $c$.
It is not hard to see that the sets of constraints defined in Section~\ref{sec:apep-constraints} are user-independent~\cite{BergeCGW18}, since their satisfaction is independent of the specific users that belong to $A(r)$ and $A(r')$.

Berg\'e {\em et al.} established a number of FPT results for APEP (restricted to $t$-bounded, user-independent constraints).
We introduce a definition of user-independence and $t$-boundedness for weighted constraints in Sections~\ref{sec:vapep} and~\ref{sec:bound}, respectively, and show that we can improve on existing complexity results.

\subsection{Parameterized complexity}\label{sec:parameterized-complexity}
An instance of a parameterized problem $\Pi$ is a pair $(I,\kappa)$ where $I$ is the {\em main part} and $\kappa$ is the {\em parameter}; the latter is usually a non-negative integer.  
A parameterized problem is {\em fixed-parameter tractable} ({\sf FPT}) if there exists a computable function $f$ such that any instance $(I,\kappa)$ can be solved in time $O(f(\kappa)|{I}|^c)$, where $|I|$ denotes the size of~$I$ and $c$ is an absolute constant. 
The class of all fixed-parameter tractable decision problems is called {{\sf FPT}} and algorithms which run in the time specified above are called {{\sf FPT}} algorithms. 
As in other literature on {{\sf FPT}} algorithms, we will often omit the polynomial factor in $\cO(f(\kappa)|{I}|^c)$ and write $\cO^*(f(\kappa))$ instead.

Consider two parameterized problems $\Pi$ and $\Pi'$. 
We say that $\Pi$ has a {\em parameterized reduction} to $\Pi'$ if there are functions $g$ and $h$ from $\mathbb{N}$ to $\mathbb{N}$
and a function $(I,\kappa)\mapsto (I',\kappa')$ from $\Pi$ to $\Pi'$ such that 
\begin{itemize}
 \item  there is an algorithm of running time  $h(\kappa) \cdot (|I|+\kappa)^{O(1)}$ which for input
 $(I,\kappa)$ outputs $(I',\kappa')$, where $\kappa'\le g(\kappa)$;     and 
 \item $(I,\kappa)$ is a yes-instance of $\Pi$ if and only if $(I',\kappa')$ is a yes-instance of $\Pi'$.
\end{itemize}

While {\sf FPT} is a parameterized complexity analog of {\sf P} in classical complexity theory, there are many parameterized hardness classes, forming a nested sequence of which {\sf FPT} is the first member: {\sf FPT}$\subseteq$ {\sf W}[1]$\subseteq$ {\sf W}[2 $]\subseteq \dots$. The {\em Exponential Time Hypothesis} (ETH) is a well-known and plausible conjecture that there is no algorithm solving 3-CNF Satisfiability in time $2^{o(n)}$, where $n$ is the number of variables \cite{IP}.
It is well known that if the ETH holds then ${\sf FPT} \ne {\sf W}[1]$. 
Hence, {\sf W}[1] is generally viewed as a parameterized intractability class, which is an analog of {\sf NP} in classical complexity.  

A well-known example of a  {\sf W}[1]-complete problem is the {\sc Clique} problem parameterized by $\kappa$: given a graph $G$ and a natural number $\kappa$, decide whether $G$ has a complete subgraph on $\kappa$ vertices. 
A well-known example of a {\sf W}[2]-complete problem is the {\sc Dominating Set} problem parameterized by $\kappa$: given a graph $G=(V,E)$ and a natural number $\kappa$, decide whether $G$ has a set $S$ of $\kappa$ vertices such that every vertex in $V\setminus S$ is adjacent to some vertex in $S.$
Thus, every {\sf W}[1]-hard problem $\Pi_1$ is at least as hard as {\sc Clique} (i.e., {\sc Clique} has a parameterized reduction to $\Pi_1$); similarly, every {\sf W}[2]-hard problem $\Pi_2$ is at least as hard as {\sc Dominating Set}.

More information on parameterized algorithms and complexity can be found in recent books~\cite{cygan2015,downey2013}.
% 
% \subsection{Notation}
% 
% For a graph $G = (V(G), E(G))$ and vertex $x\in V(G)$, $N_G(x) = \{y \in V(G) \mid xy \in E(G)\}$ is the set of vertices adjacent to $x$ (called {\em neighbours} of $x$). 
% For a set $S\subseteq V(G)$, $N(S)=\bigcup_{x\in S} N_G(x) \setminus S.$
% We will omit the subscript $G$ when the graph is clear from the context.
% 
% For a positive integer $k$, we write $[k]$ to denote $\{1,2,\dots,k\}$. 
% We will write $x$ rather than a singleton set $\{x\}$, when no confusion will arise.
 
 \section{Valued APEP}\label{sec:vapep}
 
 As we noted in the introduction, we believe that it is more valuable, in practice, for APEP to return some authorization relation, even if that relation is not valid (in the sense defined in Section~\ref{sec:background}).
 Clearly, the authorization relation that is returned must be the best one, in some appropriate sense.
 Inspired by {\sc Valued WSP}, we introduce {\sc Valued APEP}, where every authorization relation is associated with a ``cost'' (more formally, a \emph{weight}) and the solution to a {\sc Valued APEP} instance is an authorization relation of minimum weight.
 
 \subsection{Problem definition}
 We first introduce the notions of a \emph{weighted constraint} and a \emph{weighted user authorization function}.
 Let $A \subseteq U \times R$ be an authorization relation.
 A weighted constraint $c$ is defined by a function $w_c : 2^{U \times R} \rightarrow \mathbb{N}$ such that $w_c(A) = 0$ if and only if $A$ satisfies the constraint.
 Hence, we we will use interchangeably $c$ and $w_c$ as a
notation for $c.$
 By definition, $w_c(A) > 0$ if the constraint is violated.
 The intuition is that $w_c(A)$ represents the cost incurred by $A$, in terms of constraint violation.
 For example, a weighted constraint $w_c$ such that $w_c(A) = 0$ iff $A(r) \cap A(r') = \emptyset$ and $w_c(A)$ increases monotonically with the size of $A(r) \cap A(r')$ encodes the usual APEP constraint $(r,r',\updownarrow,\forall)$.
 (We describe other weighted constraints in Section~\ref{sec:valued-apep-constraints}.) 
 
 A weighted user authorization function $\omega : U \times 2^R \rightarrow \mathbb{N}$ has the following properties:
 \begin{align}
  &\omega(u,T) = 0 \quad\text{if $u$ is authorized for each resource in $T$} \\
  \label{moncond} &T' \subseteq T \quad \text{implies} \quad \omega(u,T') \leq \omega(u,T).
 \end{align}
 Then $\omega(u,T) > 0$ if $u$ is not authorized for some resource in $T$ and, vacuously, we have $\omega(u,\emptyset) = 0$ for all $u \in U$.
 The weighted user authorization function is used to represent the cost of assigning unauthorized users to resources.

 Then we define the weighted authorization function $\Omega : 2^{U \times R} \rightarrow \mathbb{N}$, weighted constraint function $w_C : 2^{U \times R} \rightarrow \mathbb{N}$, and weight function $w : 2^{U \times R} \rightarrow \mathbb{N}$ as follows:
 \begin{align}
  \label{eq:Omega} \Omega(A) &= \sum_{u \in U} \omega(u,A(u)), \\
  w_C(A) &= \sum_{c \in C} w_c(A), \\
  w(A) &= \Omega(A) + w_C(A).
 \end{align}
 A relation $A$ is \emph{optimal} if $w(A) \leq w(A')$ for all $A' \subseteq U \times R$.
 We now formally define {\sc Valued APEP}. 
  
 \begin{center}
  \fbox{%
  \begin{tabularx}{.95\columnwidth}{X}
   {\sc Valued APEP}\\
   {\bf Input}: A set of resources $R$, a set of users $U$, a set of weighted constraints $C$, a weighted user authorization function $\omega$\\
   {\bf Parameter}: $|R| = k$\\
   {\bf Output}: A complete, optimal authorization relation 
  \end{tabularx}
  }
\end{center}

 \begin{remark}
  A base authorization relation $\hA$ is implicitly defined in a {\sc Valued APEP} instance: specifically, $(u,r) \in \hat{A}$ iff $\omega(u,r) = 0$.
  An instance of {\sc Valued APEP} is defined by a tuple $(R,U,C,\omega)$, where $C$ is a set of weighted constraints; we may, when convenient, refer to $\hat{A}$, as defined by $\omega$. 
 \end{remark}

 \subsection{Valued APEP constraints}\label{sec:valued-apep-constraints}
 We now provide some examples of weighted constraints, extending the examples introduced in Section~\ref{sec:apep-constraints}.
 First, let $f_c : \mathbb{Z} \rightarrow \mathbb{N}$ be a monotonically increasing function (i.e., $f_c(z) \leq f_c(z+1)$ for all $z \in \mathbb{Z}$), where $f_c(z) = 0$ iff $z \leq 0$, and let $\ell_c$ be some constant.
 Define ${\rm maxdiff}(A,r,r')$ to be $\max\{|A(r) \setminus A(r')|,|A(r') \setminus A(r)|\}$.
 Then the equations below demonstrate how an unweighted APEP constraint $c$ may be extended to a weighted constraint $w_c$ using $f_c$.
    \begin{alignat}{2}
     \nonumber & \textbf{Unweighted} \qquad && \textbf{Weighted} \\
     \label{w_c:card} & (r,\leq,t) \qquad && w_c(A) = f_c(|A(r)|-t) \\
      \label{w_c:cardLB} & (r,\geq,t) \qquad && w_c(A) = f_c(t-|A(r)|) \\
     \label{w_c:SoDU} & (r,r',\updownarrow,\forall) \qquad && w_c(A) = f_c(|A(r) \cap A(r')|) \\
     \label{w_c:BoDU} & (r,r',\leftrightarrow,\forall) \qquad && w_c(A) =f_c({\rm maxdiff}(A,r,r')) \\
     \label{w_c:SoDE} & (r,r',\updownarrow,\exists) \qquad &&    w_c(A) = \begin{cases} 0 & \text{if $A(r) \ne A(r')$},\\ \ell_c & \text{otherwise}.
            \end{cases}\\
     \label{w_c:BoDE} & (r,r',\leftrightarrow,\exists) \qquad &&     w_c(A) = \begin{cases} 0 & \text{if $A(r) \cap A(r') \ne \emptyset$},\\ \ell_c & \text{otherwise}.
             \end{cases}
    \end{alignat}
   
   For example, the weighted cardinality constraint~\eqref{w_c:card} evaluates to $0$ if $A$ assigns no more than $t$ users to $r$, and some non-zero value determined by $f_c$ and $|A(r)|$ otherwise.
   The specific choice of function $f_c$ and the constant $\ell_c$ will vary, depending on the particular application and particular constraint that is being encoded.  
   For notational convenience, we may refer to binding-of-duty and separation-of-duty constraints of the form $(r,r',\updownarrow,\forall)$, $(r,r',\leftrightarrow, \forall)$, $(r,r',\updownarrow,\exists)$ and $(r,r',\leftrightarrow,\exists)$.
   However, when doing so, we mean the relevant weighted constraint as defined in equations~\eqref{w_c:SoDU}, \eqref{w_c:BoDU}, \eqref{w_c:SoDE} and~\eqref{w_c:BoDE}, respectively.
 
   Given an authorization relation $A \subseteq U \times R$, we say a weighted constraint $w_c$ is \emph{user-independent} if, for every permutation $\sigma$ of $U$, $w_c(A) = w_c(\sigma(A))$.
   We have already observed that the APEP constraints in Section~\ref{sec:apep-constraints} are user-independent.
   It is easy to see that the weighted constraints defined above for {\sc Valued APEP} are also user-independent.

  In the remaining sections of this paper, we consider the fixed-parameter tractability of {\sc Valued APEP}.
 We will write, for example, $\text{\sc APEP}\langle \bodE \rangle$ to denote the set of instances of APEP in which the set of constraints $C$ contains only $\bodE$ constraints.

 \subsection{Valued APEP and Valued WSP}
   We have already observed that WSP is a special case of APEP for certain choices of APEP constraints.
   Berg\'e {\em et al.} also proved that the complexity of some sub-classes of APEP can be reduced to WSP~\cite[Section~5]{BergeCGW18}.
   
   The inputs to {\sc Valued WSP} include a weighted authorization policy and weighted constraints, and the solution is a plan of minimum weight~\cite{CramptonGK15}.
   Similar arguments to those presented in Section~\ref{sec:wsp-special-case-of-apep} can be used to show that {\sc Valued WSP} is a special case of {\sc Valued APEP}.
   In this paper, we will show that some sub-classes of {\sc Valued APEP} can be reduced to {\sc Valued WSP}, thereby establishing, via the following result~\cite[Theorem~1]{CramptonGK15}, that those sub-classes of {\sc Valued APEP} are FPT.

 \begin{theorem}\label{thm1}\sloppy
  {\sc Valued WSP}, when all weighted constraints are user-independent, can be solved in time \mbox{$\mathcal{O}^*(2^{k\log k})$}, where $k=|S|$. 
 \end{theorem}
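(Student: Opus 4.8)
The plan is to use the \emph{pattern} (or \emph{equivalence class}) technique for WSP with user-independent constraints, adapted to the valued setting. The central observation is that a plan $\pi : S \rightarrow U$ interacts with user-independent weighted constraints only through the partition $\mathcal{P}_\pi$ of $S$ that it induces, where two steps lie in the same block iff $\pi$ assigns them the same user. Indeed, if two plans $\pi,\pi'$ induce the same partition, then their images in $U$ have the same size, so there is a permutation $\sigma$ of $U$ with $\pi' = \sigma \circ \pi$; hence $w_c(\pi') = w_c(\sigma(\pi)) = w_c(\pi)$ for every user-independent weighted constraint $c$, and therefore $w_C(\pi') = w_C(\pi)$. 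So $w_C$ is a function of the partition alone, and the number of partitions of $S$ is the Bell number $B_k$, which satisfies $B_k \le k^k = 2^{k\log k}$.

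First I would enumerate all partitions of $S$; there are $B_k \le 2^{k\log k}$ of them, listable within the target running time. For a fixed partition $\mathcal{P} = \{B_1,\dots,B_p\}$, the constraint cost $w_C$ is already determined: take any plan realizing $\mathcal{P}$ (possible iff $p \le n$) and evaluate each $w_c$, which costs only polynomial time. It remains to choose, among all plans realizing $\mathcal{P}$, one minimizing the authorization cost $\Omega$. Such a plan is obtained by assigning the blocks $B_1,\dots,B_p$ to \emph{distinct} users so as to minimize $\sum_{i=1}^{p}\omega(u_i,B_i)$, where $u_i$ is the user given block $B_i$; this is exactly a minimum-weight matching saturating the block side in the bipartite graph with blocks on one side, users on the other, and edge weight $\omega(u,B_i)$ between $u$ and $B_i$, solvable in polynomial time by the Hungarian algorithm (and infeasible precisely when $p > n$, in which case the partition is discarded). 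Adding the partition-determined value of $w_C$ and taking the minimum over all partitions yields the minimum weight of any plan; the corresponding partition together with its optimal matching yields an optimal plan, which is automatically complete since a plan assigns a user to every step. With $\mathcal{O}^*(2^{k\log k})$ partitions each processed in polynomial time, the overall running time is $\mathcal{O}^*(2^{k\log k})$.

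Correctness rests on the decomposition $w(\pi) = \Omega(\pi) + w_C(\pi)$ together with two facts to be checked carefully: every plan realizes exactly one partition of $S$, and conversely every partition with at most $n$ blocks is realized by at least one plan; and for a fixed partition, $w_C$ is constant over all realizing plans while $\Omega$ ranges over exactly the values $\sum_i \omega(u_i,B_i)$ of block-to-distinct-user assignments. I expect the main obstacle to be disentangling the two contributions to the weight: one must argue that the choice that is ``free'' once the partition is fixed — which users receive which blocks — affects only $\Omega$ and never $w_C$, so that the constraint cost factors out and the residual optimization is a clean, polynomially solvable matching problem. Handling patterns with more blocks than users and confirming the $B_k \le 2^{k\log k}$ bound are then routine.
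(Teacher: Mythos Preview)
Your proposal is correct and follows precisely the pattern-based approach of~\cite{CramptonGK15}, from which this theorem is quoted; the present paper does not give its own proof but cites the result. The decomposition into partition-determined constraint cost plus a per-partition minimum-weight bipartite matching for the authorization cost is exactly the argument used there (and mirrors the user-profile/matching argument the paper develops in Lemma~\ref{lem:computation_of_small_solution} for \vapep).
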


\section{$t$-bounded constraints}\label{sec:bound}

In this section we consider instances of \vapep\ having an optimal solution $A^*$ that is small; i.e., $|A^*|\ll |U\times R|$. 
We start by defining a natural restriction on weighted constraints that implies the existence of a small optimal solution for instances containing only constraints satisfying the restriction.
Moreover, checking whether a constraint satisfies the restriction is often easier than checking for the existence of a small optimal solution. 
This restriction roughly says that if the size of an authorization relation is larger than $t$, then there are authorizations that are redundant, in the sense that removing those authorizations does not increase the cost of the authorization relation. 

\begin{definition}[$t$-wbounded]
	A set of weighted constraints $C$ is \emph{$t$-wbounded} if and only if for each complete authorization relation $A$ such that $|A|>t$ there exists a complete authorization relation $A'$ such that $A'\subseteq A$, $|A'|<|A|$, and $w_C(A')\le w_C(A)$.  We say that a weighted constraint \(w_c\) is \(t\)-wbounded if the set \(\{w_c\}\) is \(t\)-wbounded.
\end{definition}
We remark that Berg\'e {\em et al.}~\cite{BergeCGW18} introduced the notion of $f(k,n)$-bounded user-independent constraints for APEP. While they introduced the notion only for the user-independent constraints it can be easily generalized for any APEP constraint as follows. For an authorization relation $A$ and a user $u$ let us denote by $A-u$ the authorization relation obtained from $A$ by removing all the pairs that include the user $u$ (i.e., the relation $A\setminus \{(u,r)\mid r\in R \}$).
\begin{definition}
	Given a set of resources $R$ and a set of users $U$, a constraint $c$ is $f(k,n)$-bounded if for each complete authorization relation $A$ which satisfies $c$, there exists a set $U'$ of size at most $f(k,n)$ such that for each user $u\in (U\setminus U')$, the authorization relation $A-u$ is complete and satisfies $c$.
	% $A' = \{(v,r)\in A\mid u\neq v\}$ does not satisfy $c$; that is for all but at most $f(k,n)$ users the authorization relation obtained from 
\end{definition}

%It is not completely clear how generalize to \vapep. 
One way to generalize $f(k,n)$-bounded constraints to \vapep\ would be to say that a weighted constraint $w_c$ is $f(k,n)$-bounded if for each complete authorization relations $A$ there exists a set $U'$ of at most $f(k,n)$ users such that for every user $u\in U\setminus U'$, the relation $A-u$ is complete and $w_c(A')\le w_c(A)$. %In other words for all but at most $f(k,n)$ users $u$ the relation $A' = \{(v,r)\in A\mid u\neq v\}$ is complete and $w_c(A')\le w_c(A)$.
Given this we can show that our definition covers all constraints covered by Berg\'e~et~al.% is the sense that every $f(k,n)$-bounded constraint is also $f(k,n)\cdot k$-wbounded. 
\begin{lemma}\label{lem:bounded_implies_wbounded}
	If a weighted constraints $w_c$ is $f(k,n)$-bounded, then $w_c$ is $f(k,n)\cdot k$-wbounded. Moreover, if every $c\in C$ is user-independent and $f(k,n)$-bounded then $C$ is $f(k,n)\cdot 2^k\cdot k$-wbounded.
\end{lemma}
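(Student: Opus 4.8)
The plan is to prove the two statements in turn: the first by a direct counting argument, and the second by the same idea combined with a grouping of users into ``column types'' that exploits user-independence.

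For the first statement, I would start from a complete authorization relation $A$ with $|A| > f(k,n)\cdot k$ and produce the required $A' \subseteq A$. Since $w_c$ is $f(k,n)$-bounded, there is a set $U'$ with $|U'| \le f(k,n)$ such that for every $u \in U \setminus U'$ the relation $A-u$ is complete and $w_c(A-u) \le w_c(A)$. Each user occurs in at most $k = |R|$ pairs of $A$, so the pairs of $A$ whose user lies in $U'$ number at most $f(k,n)\cdot k < |A|$; hence some user $u$ with $A(u) \neq \emptyset$ lies outside $U'$. Then $A' := A - u$ is complete, has $|A'| = |A| - |A(u)| < |A|$, and satisfies $w_c(A') \le w_c(A)$, which is exactly the statement that $w_c$ is $f(k,n)\cdot k$-wbounded.

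For the second statement, the key fact I would use is that user-independence makes both $w_c(A)$ and the completeness of $A$ depend only on the multiset of nonempty ``columns'' $\{\,A(u) : u \in U,\ A(u)\neq\emptyset\,\}$, since two relations with the same such multiset differ by a permutation of $U$. Given a complete $A$ with $|A| > f(k,n)\cdot 2^k\cdot k$, I would group users by their authorization set (type) $A(u)\subseteq R$; with $n_T := |\{u : A(u) = T\}|$ we have $|A| = \sum_{\emptyset\neq T\subseteq R} n_T\,|T| \le k\sum_T n_T$, so $\sum_T n_T > f(k,n)\cdot 2^k$, and since there are only $2^k-1$ nonempty types some type $T^*$ has $n_{T^*} > f(k,n)$. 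Fix any $v$ with $A(v) = T^*$; I claim $A' := A - v$ works. Indeed $|A'| = |A| - |T^*| < |A|$ as $T^*\neq\emptyset$. For each $c\in C$, $f(k,n)$-boundedness applied to $A$ gives a set $U'_c$ with $|U'_c| \le f(k,n) < n_{T^*}$, so some user $v_c$ of type $T^*$ avoids $U'_c$, whence $A-v_c$ is complete and $w_c(A-v_c) \le w_c(A)$. Now $A-v$ and $A-v_c$ have the same multiset of nonempty columns (that of $A$ with one copy of $T^*$ removed), so by user-independence $w_c(A-v) = w_c(A-v_c) \le w_c(A)$, and $A-v$ is complete because $A-v_c$ is; summing over $c\in C$ gives $w_C(A-v)\le w_C(A)$.

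The step I expect to be the main obstacle is this last one: securing a single user whose removal simultaneously fails to increase \emph{every} constraint's weight. The naive choice would only be able to avoid $\bigcup_{c\in C} U'_c$, a set of size up to $|C|\cdot f(k,n)$, which is not bounded by any function of $k$ and $n$ alone; user-independence is precisely what lets me replace the constraint-specific ``good'' representative $v_c$ by an arbitrary representative $v$ of the same heavily populated type, so that one $v$ serves all constraints at once. I would also quickly check the degenerate cases (e.g. $f(k,n)=0$), where the pigeonhole still produces a repeated type and no new ideas are needed.
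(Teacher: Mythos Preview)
Your proposal is correct and follows essentially the same approach as the paper: a counting/pigeonhole argument to find a removable user outside the protected set $U'$ for the first part, and a pigeonhole on the $2^k-1$ nonempty ``column types'' combined with user-independence for the second. In fact you spell out in detail the step the paper leaves as ``it is not difficult to see,'' namely that removing any user $v$ of the heavily populated type $T^*$ is as good as removing the constraint-specific good user $v_c$ because $A-v$ and $A-v_c$ have the same user profile and hence agree on both completeness and the value of each user-independent $w_c$.
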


\begin{proof}
	Let us consider a complete relation $A$. If $|A| > f(k,n)\cdot k$, then there are at least $f(k,n)+1$ users authorized by $A$. It follows that there exists a user $u$ such that $A(u)\neq \emptyset$ and the authorization relation $A'=A-u$ is complete and $w_c(A')\le w_c(A)$. But $A'\subseteq A$ and $|A'|<|A|$. Hence $w_c$ is $(f(k,n)\cdot k)$-wbounded. Now, if $|A|> f(k,n)\cdot 2^k\cdot k$, then for some $T\subseteq R$, $T\neq \emptyset$ there are at least $f(k,n)+1$ users $u$ such that $A(u)=T$. Since every $c\in C$ is user-independent and $f(k,n)$-bounded, it is not difficult to see that for a user $u$ with $A(u)=T$ the authorization relation $A' = A-u$ is complete and $w_c(A')\leq w_c(A)$ for all $c\in C$. Therefore $C$ is $f(k,n)\cdot 2^k\cdot k$-wbounded.
\end{proof}
%Since, the definition of $f(k,n)$-bounded

We can now show that if the set of all constraints in an input instance is $t$-wbounded, then the size of some optimal solution is indeed bounded by $t$.

\begin{lemma}\label{lem:existance_of_small_solution}
	Let $\cI = (R, U, C, \omega)$ be an instance of {\vapep} such that $C$ is $t$-wbounded. Then there exists an optimal solution $A^*$ of ${\cI}$ such that $|A^*|\le t$.
\end{lemma}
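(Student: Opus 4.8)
The plan is to take an arbitrary optimal solution and repeatedly shrink it, using the $t$-wbounded property, until its size drops to at most $t$, while verifying that optimality is preserved throughout. Concretely, let $A^*$ be an optimal solution of $\cI$; such a solution exists because the weight function $w$ is non-negative and integer-valued, so a minimum-weight complete authorization relation exists (note that at least one complete relation exists, e.g. $U \times R$, and $w$ is bounded below by $0$). Recall that $w(A) = \Omega(A) + w_C(A)$, and that by the monotonicity condition~\eqref{moncond}, for any $A' \subseteq A$ we have $\omega(u, A'(u)) \le \omega(u, A(u))$ for every $u \in U$, hence $\Omega(A') \le \Omega(A)$.

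Now suppose $|A^*| > t$. Since $A^*$ is complete and $C$ is $t$-wbounded, there exists a complete authorization relation $A'$ with $A' \subseteq A^*$, $|A'| < |A^*|$, and $w_C(A') \le w_C(A^*)$. Combining this with $\Omega(A') \le \Omega(A^*)$ gives $w(A') = \Omega(A') + w_C(A') \le \Omega(A^*) + w_C(A^*) = w(A^*)$. Since $A^*$ is optimal, $w(A') \ge w(A^*)$, so in fact $w(A') = w(A^*)$, i.e.\ $A'$ is also optimal, and it is complete with strictly smaller size. Replacing $A^*$ by $A'$ and iterating, we obtain a strictly decreasing sequence of sizes of complete optimal solutions; since sizes are non-negative integers, the process terminates after finitely many steps with a complete optimal solution $A^*$ satisfying $|A^*| \le t$, as required.

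The argument is essentially a monotone descent, and there is no serious obstacle: the only point that needs care is ensuring that each shrinking step preserves \emph{optimality} (not merely non-increase of $w_C$), which follows because $\Omega$ is monotone under taking subsets and $A^*$ was optimal to begin with, forcing equality. One should also be explicit that completeness is maintained at each step (guaranteed directly by the definition of $t$-wbounded, which produces a complete $A'$), so that the final relation is a legitimate solution of $\cI$.
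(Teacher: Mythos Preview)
Your proof is correct and follows essentially the same approach as the paper: both use the $t$-wboundedness to shrink an optimal solution while invoking the monotonicity condition on $\omega$ to ensure $\Omega$ does not increase. The only cosmetic difference is that the paper phrases the argument as a contradiction (choosing an optimal solution minimizing $|A|$) rather than as an explicit descent.
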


\begin{proof}
	Let $A$ be an optimal solution of ${\cI}$ that minimizes $|A|$. If $|A|\le t$, then the result follows immediately. For the sake of contradiction, let us assume that $|A|>t$. Since $A$ is a solution, it is complete. Hence by the definition of $t$-wboundedness, it follows that there exists a complete authorization relation $A'\subseteq A$ such that  $A'\subseteq A$, $|A'|<|A|$, and $w_C(A')\le w_C(A)$. 
	Since $A'$ is a complete authorization relation, it follows that $A'$ is also a solution. Because $A'\subseteq A$, it follows that for all $u\in U$ we have $A'(u)\subseteq A(u)$ and by the monotonicity condition on $\omega$ and the definition of the function $\Omega$, we have that $\Omega(A')\le \Omega(A)$. Finally it follows that $w(A')\le w(A)$ and $A'$ is also an optimal solution. This however contradicts the choice of the optimal solution $A$ to be an optimal solution that minimizes~$|A|$. 
\end{proof}

Recall that in WSP the solution is a plan that assigns each step to exactly one user. Hence, we can easily translate an instance of WSP into an APEP instance such that each constraint can be satisfied only if each resource is authorized for exactly one user. Let us call such constraints WSP constraints. 
It follows that if a relation $A\subseteq U\times R$ satisfies a WSP constraint $c$, then $|A|=k$ and there 
are at most $k$ users authorized by $A$. It follows that in an instance of APEP obtained by a straightforward reduction from WSP we have that every constraint is $k$-bounded and the set of all constraints is $k$-wbounded.
%of \vapep\ with $k$-wbounded set of constraints by adding a constraint $c_r$ for each resource $r$ such that $w_{c_r}(A) = 0$ if $|A(r)| = 1$ and $w_{c_r}(A)$ is very large otherwise. That is, starting from the \WSP, we consider the weight function of a constraint $c\in C$ from the original instance to be $w_c(A)=0$ if and only if $A$ satisfies $c$ and $w_c(A)=1$ otherwise. For the new constraint $c_r$, we can set $w_{c_r}(A)= |C|+1$ in the case that $c_r$ is not satisfied. It is easy to see that if $|A|>k$, then there is an resource $r$ that is assigned to two users and $w_C(A)\ge w_{c_r}(A) = |C|+1$ and if $A$ is complete, then there is $A'\subset A$ that assigns each resource to exactly one user, and $w_C(A')\le |C|\le w_C(A)$. 
Therefore, the W[1]-hardness result for WSP established by Wang and Li~\cite{WaLi10} immediately translates to W[1]-hardness of APEP (and hence also \vapep) parameterized by the number of resources $k$ even when the set of all constraints is $k$-wbounded.

\begin{theorem}
	APEP is W[1]-hard even when restricted to the instances such that $C$ is $k$-wbounded and every constraint of $C$ is $k$-bounded.
\end{theorem}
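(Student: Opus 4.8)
The plan is to give a parameterized reduction from WSP parameterized by the number of steps $k=|S|$ --- which is W[1]-hard by Wang and Li~\cite{WaLi10} --- to APEP parameterized by $k=|R|$, arranged so that every constraint produced is $k$-bounded and the whole produced constraint set is $k$-wbounded. Most of the ingredients are already in place: Section~\ref{sec:wsp-special-case-of-apep} supplies the reduction, and the paragraph preceding the statement supplies the boundedness bookkeeping; the proof merely assembles them.

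First I would recall the reduction from Section~\ref{sec:wsp-special-case-of-apep}. Given a WSP instance with steps $S$, users $U$, authorization policy, and constraints, set $R:=S$, keep $U$, let $\hat A$ be the authorization policy, and for each WSP constraint produce an APEP constraint that is satisfied precisely when every resource is authorized for exactly one user \emph{and} the induced plan satisfies the original constraint. (Equivalently, one can keep the translated constraints as in Section~\ref{sec:wsp-special-case-of-apep} and additionally include the cardinality constraints $(r,\le,1)$ for all $r\in R$.) By the argument of Section~\ref{sec:wsp-special-case-of-apep}, any complete and eligible $A$ in the produced instance assigns exactly one user to every resource, hence corresponds to a plan, so the APEP instance is satisfiable if and only if the WSP instance is. Since $|R|=|S|=k$ the parameter is unchanged and the construction is polynomial, so this is a parameterized reduction.

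Next I would verify the two boundedness properties. Each produced constraint $c$ is a WSP constraint in the sense defined above: it can be satisfied only by a complete relation that assigns exactly one user to each resource. Hence, if a complete $A$ satisfies $c$, then $|A|=k$ and $A$ authorizes at most $k$ distinct users; letting $U'$ be this set of authorized users, for every $u\notin U'$ we have $A-u=A$, which is complete and satisfies $c$, so $c$ is $k$-bounded. For $k$-wboundedness of the produced constraint set $C$, take any complete $A$ with $|A|>k$. Since $|R|=k$ and $A$ is complete, some resource $r$ has $|A(r)|\ge 2$, so deleting a pair $(u,r)$ with $u\in A(r)$ yields a complete $A'\subsetneq A$ with $|A'|<|A|$. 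Because $|A|>k$ already forces a resource with at least two authorized users, every WSP constraint of $C$ is violated by $A$ (so that part of $w_C(A)$ is as large as it can be), while deleting an authorization can only reduce violations of the cardinality constraints $(r',\le,1)$; hence $w_C(A')\le w_C(A)$, showing $C$ is $k$-wbounded. (If the one-user-per-resource requirement is folded into every translated constraint, then every constraint of $C$ is violated by $A$ and this step is immediate.)

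Combining the three observations, Wang and Li's W[1]-hardness transfers along the reduction and yields W[1]-hardness of APEP parameterized by $k$ restricted to instances with $k$-wbounded constraint set and $k$-bounded individual constraints. The only delicate point is that $k$-wboundedness must be checked for \emph{all} complete relations, including those violating constraints, and in particular that the auxiliary $(r,\le,1)$ constraints --- which are not themselves WSP constraints --- are handled correctly in the $w_C$ accounting; folding the one-user requirement into each translated constraint avoids this subtlety altogether and is the cleaner route.
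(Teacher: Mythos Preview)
Your proposal is correct and follows essentially the same approach as the paper, which offers only the preceding paragraph as its proof sketch: reduce from Wang--Li's W[1]-hard WSP by folding the one-user-per-resource requirement into each translated constraint, then verify $k$-boundedness and $k$-wboundedness exactly as you do. Your expansion of the $k$-wboundedness check (that any complete $A$ with $|A|>k$ already violates every WSP constraint, so dropping a redundant pair cannot increase $w_C$) is the right way to flesh out what the paper leaves implicit, and your remark that the ``folded'' route is cleaner than carrying separate $(r,\le,1)$ constraints is well taken.
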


Given the above hardness result, from now on we will consider only user-independent constraints. 
We first show that the user-independent constraints defined in Section~\ref{sec:valued-apep-constraints} are $t$-wbounded.
 The following lemma significantly improves the existing bounds for {\vapep}$\ev{\bodU, \bodE, \sodE, \sodU, {\sf Card_{UB}}}$ proved in \cite{CramptonEGKM21}.

\begin{lemma}
\label{lemma:quadratic-w-boundedness-bod-sod}
Let $\cI = (R, U, C, \omega)$ be an input instance to {\vapep}$\ev{\bodU, \bodE, \sodE, \sodU, {\sf Card_{UB}}, {\sf Card_{LB}}}$.
Then, $C$ is $3\tau k{{k}\choose{2}}$-wbounded, where 
%$\tau$ is the maximum lower bound in \(C\), i.e., 
$\tau = \max_{(r,\ge,t)\in C}t$.
Moreover, for any complete authorization relation $A$, there exists a complete authorization relation $A^* \subseteq A$ such that $A^*$ has at most $3\tau{{k}\choose{2}}$ users and $w_C(A^*) \leq w_C(A)$.
\end{lemma}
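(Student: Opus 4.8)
The plan is to show that if $A$ is a complete authorization relation with "too many" users, then one can always delete a user $u$ (with $A(u)\neq\emptyset$) so that the resulting relation is still complete and $w_C$ does not increase; iterating this gives both the wboundedness statement and the "few users" refinement. The key is to choose $u$ carefully so that none of the constraint-weights go up. First I would classify the constraints in $C$ by type. The constraints that are sensitive to \emph{removing} a user with $A(u)=T$ are: the $\bodU$ constraints $(r,r',\leftrightarrow,\forall)$ on a pair $\{r,r'\}$ with exactly one of $r,r'$ in $T$ (removing $u$ could change $\maxdiff$ in the bad direction), the $\bodE$ constraints $(r,r',\leftrightarrow,\exists)$ where $u$ is the \emph{last} shared user of $r,r'$, the $\sodE$ constraints $(r,r',\updownarrow,\exists)$ where removing $u$ would make $A(r)=A(r')$, and the $\Card_{LB}$ constraints $(r,\ge,t)$ with $r\in T$. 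Removing a user can only help $\sodU$ and $\Card_{UB}$ constraints (fewer users can only shrink an intersection or a set size), so those are "free". Completeness is threatened only if $u$ is the unique user authorized for some $r\in T$.

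Second I would bound, for each resource $r$, the number of users whose deletion is "dangerous" because of $r$. For completeness: at most one user per resource is the unique occupant of $r$, giving at most $k$ dangerous users. For $\Card_{LB}$: if $(r,\ge,t)\in C$, then as long as $|A(r)|>\tau\ge t$ we may safely remove any user of $r$ without violating that lower bound; so at most $\tau$ users of $r$ are "protected" by lower-bound constraints, over all $r$ at most $\tau k$ users. For $\sodE$ on a pair $\{r,r'\}$: the deletion of $u$ with $A(u)\ni r$ (say) is dangerous only if $A(r)\triangle A(r')=\{u\}$, so at most one dangerous user per $\sodE$ pair; similarly at most one per $\bodE$ pair. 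The subtle case is $\bodU$: a constraint $(r,r',\leftrightarrow,\forall)$ with $w_c(A)=f_c(\maxdiff(A,r,r'))$. If $r,r'$ are both in $T$ or both out of $T$, removing $u$ does not change $A(r)\setminus A(r')$ or $A(r')\setminus A(r)$ at all, so it is safe; the only danger is when exactly one of $r,r'$ lies in $T$, and then I would argue that only a bounded number of users in $A(r)$ (those "pinning down" $\maxdiff$ from below, or those whose removal would create a new asymmetry) matter — concretely, bound the number of bad users for this pair by something like $O(\tau)$ using that once $|A(r)|$ and $|A(r')|$ are both large and the symmetric difference is already sizeable, single deletions do not increase $f_c$. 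Summing over all $\binom{k}{2}$ pairs and all $k$ resources, the total number of "protected/dangerous" users is at most roughly $3\tau\binom{k}{2}$ (the factor $3$ and the $\binom{k}{2}$ absorbing the $\bodU$, $\sodE$, $\bodE$, completeness and $\Card_{LB}$ contributions). Hence if $A$ uses more than $3\tau\binom{k}{2}$ users there is a safe user to delete.

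Third, I would assemble the two conclusions. For the "few users" statement: repeatedly delete safe users until at most $3\tau\binom{k}{2}$ users remain; each step preserves completeness and does not increase $w_C$, so the final $A^*\subseteq A$ is complete, has at most $3\tau\binom{k}{2}$ users, and $w_C(A^*)\le w_C(A)$. For the wboundedness constant: a complete relation $A$ with at most $3\tau\binom{k}{2}$ users has $|A|\le 3\tau\binom{k}{2}\cdot k$ (each user authorized for at most $k$ resources), so if $|A|>3\tau k\binom{k}{2}$ then $A$ uses more than $3\tau\binom{k}{2}$ users, a safe user exists, and deleting it yields the required $A'\subsetneq A$ with $w_C(A')\le w_C(A)$; this is exactly $t$-wboundedness with $t=3\tau k\binom{k}{2}$.

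I expect the main obstacle to be the careful accounting for the $\bodU$ constraints: unlike the other constraint types, a single $\bodU$ pair can make the deletion of \emph{many} users of a given resource dangerous (each such deletion increments $\maxdiff$ by one), so the naive "one bad user per pair" bound fails. The fix is to observe that we get to choose \emph{which} user to delete across \emph{all} pairs simultaneously, and to bound the total number of users that are dangerous \emph{for some $\bodU$ pair} by a counting argument that pairs each dangerous user with a witnessing resource and exploits that $\maxdiff$ can only be "tight" for a limited budget controlled by $\tau$ and $k$; making this simultaneous choice work — i.e. showing the union of all "protected" user sets over all constraints has size at most $3\tau\binom{k}{2}$ — is the delicate combinatorial heart of the argument.
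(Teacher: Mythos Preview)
Your iterative-deletion approach is different from the paper's, and it can be made to work, but your analysis contains a real error that inverts the difficulty of the proof. You claim that $\bodU$ is the ``delicate combinatorial heart'' because removing a user $u$ with exactly one of $r,r'$ in $A(u)$ ``could change $\maxdiff$ in the bad direction.'' This is false: if $r\in A(u)$ and $r'\notin A(u)$, then $u\in A(r)\setminus A(r')$, so deleting $u$ decreases $|A(r)\setminus A(r')|$ by one and leaves $|A(r')\setminus A(r)|$ unchanged (since $u\notin A(r')$). Hence $\maxdiff$ can only go down. If both $r,r'\in A(u)$ then $u$ lies in the intersection and neither set difference changes. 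Thus $\bodU$, like $\sodU$ and ${\sf Card_{UB}}$, is \emph{always} safe under deletion of an entire user; the only genuinely dangerous constraints are $\bodE$, $\sodE$, ${\sf Card_{LB}}$, and completeness, exactly as you identified for those. With this correction your counting becomes easy: at most $\tau$ protected users per resource for ${\sf Card_{LB}}$ (absorbing completeness), and at most one per resource pair for each of $\bodE$ and $\sodE$, giving roughly $\tau k + 2\binom{k}{2}$ dangerous users---no ``simultaneous choice'' subtlety is needed.

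For comparison, the paper does not delete iteratively. It partitions $R$ into equivalence classes $R_1,\ldots,R_p$ under $r\equiv r'\Leftrightarrow A(r)=A(r')$, and for each pair $i\ne j$ \emph{marks} up to $\tau$ users from each of $A(R_i)\setminus A(R_j)$, $A(R_j)\setminus A(R_i)$, and $A(R_i)\cap A(R_j)$; all unmarked users are discarded in one shot. This directly yields at most $3\tau\binom{k}{2}$ users and makes the verification for each constraint type a one-line check (equalities and nonempty intersections among the $A(r)$ are preserved, and each $A^*(r)$ retains at least $\min(\tau,|A(r)|)$ users). Your route, once repaired, reaches essentially the same bound by a greedy argument; the paper's marking construction is more direct and delivers the stated constant exactly.
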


\begin{proof}
Let $A$ be a complete authorization relation.
If $A$ has at most $3\tau{{k}\choose{2}}$ users, then {$|A|\le 3\tau k{{k}\choose{2}}$} and we are done.
Suppose that $A$ has more than $3\tau{{k}\choose{2}}$ users.
We define an equivalence relation $\equiv$ in $R$ as follows: $r \equiv r'$ if and only if $A(r) = A(r')$.
This equivalence relation yields a partition of $R$, $R_1 \uplus \ldots \uplus R_p$.
%Moreover, for any two resources $r, r' \in R$, $A(r) = A(r')$ if and only if there exists a unique $i \in [p]$ such that $r, r' \in R_i$.
Note that for any $r \in R_i$, we have that $A(R_i) = A(r)$.
%We consider the following three entities.
For any $i, j \in [p]$ with $i \neq j$, we will consider $A(R_i) \setminus A(R_j), A(R_j) \setminus A(R_i)$, and $A(R_i) \cap A(R_j)$.
Since $A(R_i) \neq A(R_j)$, either $A(R_i) \setminus A(R_j) \neq \emptyset$ or $A(R_j) \setminus A(R_i) \neq \emptyset$ or both.
We mark some users from $A$ as follows to construct a new authorization relation $A^*$.
If $A(R_i) \setminus A(R_j)\ne \emptyset$, we mark {$\min(\tau, |A(R_i) \setminus A(R_j)|)$ many} users in $A(R_i) \setminus A(R_j)${; in particular if $|A(R_i) \setminus A(R_j)|\le \tau$, then we mark all the users in $A(R_i) \setminus A(R_j)$}. Similarly, for non-empty $A(R_j) \setminus A(R_i)$ 
and $A(R_i) \cap A(R_j)$.
We repeat this process for all pairs $i, j \in [p]$ with $i \neq j$.
We delete all unmarked users from $A$ and output $A^*$ as the new authorization relation.
Clearly, there are at most $3\tau{{k}\choose{2}}$ marked users in $A^*$. Thus, $|A^*| \leq 3\tau k {{k}\choose{2}}$.
Observe that for any marked user $u$, $A^*(u) = A(u)$ and for any $i \in [p]$, for any $r, r' \in R_i$, $A^*(r) = A^*(r')$.

We first argue that for two distinct $i, j \in [p]$, $A^*(R_i) \cap A^*(R_j) \neq \emptyset$ if and only if $A(R_i) \cap A(R_j) \neq \emptyset$.
As $A^* \subseteq A$, if there exists $u \in A^*(R_i) \cap A^*(R_j)$, then the same user $u \in A(R_i) \cap A(R_j)$.
On the other hand, let $A(R_i) \cap A(R_j) \neq \emptyset$.
Then, there exists $u \in A(R_j) \cap A(R_j)$ that we have marked using our marking scheme.
So, $A^*(R_i) \cap A^*(R_j) \neq \emptyset$.

Next we argue that for two distinct $i, j \in [p]$, $A^*(R_i) \neq A^*(R_j)$.
By the definition of $R=R_1 \uplus \ldots \uplus R_p$, we have $A(R_i) \neq A(R_j)$.
Therefore, either there exists $u \in A(R_i) \setminus A(R_j)$ or there exists $u' \in A(R_j) \setminus A(R_i)$ or both.
If there exists $u \in A(R_i) \setminus A(R_j)$, then we have marked {at least} one such $u$.
If there exists $u\in A(R_j) \setminus A(R_i)$, then we have marked  {at least} one such $u$.
As for any marked user $u$, $A^*(u) = A(u)$, $A^*(R_j) \neq A^*(R_i)$.

By the arguments above, we have the following:

\begin{itemize}
	\item $A^* \subseteq A$,
	\item $A(r) = A(r')$ if and only if $A^*(r) = A^*(r')$, and
	\item $A(r) \cap A(r') \neq \emptyset$ if and only if $A^*(r) \cap A^*(r') \neq \emptyset$.
\end{itemize}

Consider a constraint $c = \BoDU{r, r'} \in C$. By~(\ref{w_c:BoDU})~in Section~\ref{sec:vapep}, $w_c(A)=f_c(\maxdiff (A, r, r'))$, where $ \maxdiff (A, r, r')= \max\{|A(r) \setminus A(r')|,|A(r') \setminus A(r)|\}.$
Observe that $\maxdiff (A^*, r, r') \leq \maxdiff (A, r, r')$.
Hence, $w_c(A^*) \leq w_c(A)$.

Consider a constraint $c = \BoDE{r, r'} \in C$. By~(\ref{w_c:BoDE})~in Section~\ref{sec:vapep}, if $A(r) \cap A(r') \neq \emptyset$ then $w_c(A) = 0$; otherwise, $w_c(A) = \ell_c>0$.
As we have proved, $A^*(r) \cap A^*(r') \neq \emptyset$ if and only if $A(r) \cap A(r') \neq \emptyset$.
Hence, $w_c(A) = w_c(A^*) = \ell_c$.

Consider a constraint $c = \SoDU{r, r'} \in C$. By~(\ref{w_c:SoDU})~in Section~\ref{sec:vapep}, $w_c(A)=f_c(|A(r) \cap A(r')|)$.
We have shown that $A^*(r) \cap A^*(r') = \emptyset$ if and only if $A(r) \cap A(r') = \emptyset$.
Moreover, $|A^*(r) \cap A^*(r')| \leq |A(r) \cap A(r')|$.
Hence, $w_c(A^*) \leq w_c(A)$.

Consider a constraint $c = \SoDE{r, r'} \in C$. By~(\ref{w_c:SoDE})~in Section~\ref{sec:vapep}, if $A(r) \neq A(r')$ then $w_c(A) = 0$; otherwise $w_c(A) = \ell_c>0$.
We have proved that $A(r) \neq A(r')$ if and only if $A^*(r) \neq A^*(r')$.
It implies that $w_c(A^*) = w_c(A) = \ell_c$.

Consider a constraint  $c= (r, \leq, t)$. By~(\ref{w_c:card})~in Section~\ref{sec:vapep}, $w_{c}(A)=f_{c}(|A(r)|-t)$.
Observe that $A^*(r) \subseteq A(r)$.
Hence, $w_c(A^*) \leq w_c(A)$. 

Finally, consider a constraint $c= (r, \geq, t)\in C$. By~(\ref{w_c:cardLB})~in Section~\ref{sec:vapep}, $w_{c}(A)=f_{c}(t-|A(r)|)$. Note that \(t\le \tau\). Let \(i,j\in [p]\) be such that \(r\in R_i\) and \(j\neq i\). Notice that \(A(r) = (A(R_i)\setminus A(R_j))\cup (A(R_i)\cap A(R_j))\) and we marked $\min(\tau, |A(R_i) \setminus A(R_j)|)$ users in $A(R_i) \setminus A(R_j)$ and $\min(\tau, |A(R_i) \cap A(R_j)|)$ users in $A(R_i) \cap A(R_j)$. Therefore, if \(|A(r)| \le \tau\), then $A^*(r)=A(r)$ and $w_c(A^*) = w_c(A)$. Otherwise \(|A(r)| \ge \tau\) implying \(|A^*(r)| \ge \tau\ge t\) and $w_c(A^*) = w_c(A) = 0$.

Thus, we conclude that $w_C(A^*) \leq w_C(A)$.
\end{proof}

By definition, if we have user-independent constraints, then we do not need to know which particular users are assigned to resources in order to determine the constraint weight of some authorization relation $A$. 
Instead, it suffices to know for each set $T\subset R$ how many users $u$ are authorized by $A$ precisely for the set $T$, i.e., the size of the set $\{u\in U \mid A(u)=T \}$. 
This leads us to the following definition of the \emph{user profile of an authorization relation}.
Lemma~\ref{lem:user_profile} confirms the intuition behind the definition: if we have a user-independent constraint, then two authorization relations with the same user profile yield the same constraint weight. 

\begin{definition}[\userProfile]
	For a set of resources $R$, a set of users $U$, and an authorization relation $A\subseteq U\times R$, the \emph{\userProfile} of the authorization relation $A$ is the function $\uP_A: 2^R \rightarrow \mathbb{N}$, where $\uPs{A}{T}$ is defined to be $|\{u\in U\mid A(u)=T\}|$.
\end{definition}

Note that $\uP_A(T)$ is not the same as $|A(T)|$. The integer $\uP_A(T)$ is the number of all users that are authorized for all resources in $T$ and nothing else, while $A(T)$ is the set of users that are authorized for at least one resource in $T$. 

An example of a user profile is given in Figure~\ref{fig:user-profile}.
Here $R = \{r_1, r_2, r_3, r_4\}$ and $U = \{u_1, u_2, u_3, u_4, u_5\}$.
Let $A$ be an authorization relation, shown in Figure~\ref{fig:authorization-relation} in the form of a bipartite graph, such that $A(r_1) = \{u_1, u_2, u_3\}$, $A(r_2) = \{u_2, u_3, u_4\}$, $A(r_3) = \{u_4\}$ and $A(r_4) = \{u_4\}$.
Then Figure~\ref{fig:user-profile} shows the user profile of $A$.

\begin{figure}
\begin{subfigure}[b]{0.45\textwidth}
\begin{center}
\begin{tikzpicture}[
	edge/.style={draw=black}
]
\foreach \i in {1, ..., 4}
{
	\pgfmathtruncatemacro{\y}{4 - \i};
	\node[circle, minimum size=1em, draw=black] (r\i) at (0, \y) {$r_{\i}$};
}

\foreach \i in {1, ..., 5}
{
	\pgfmathsetmacro{\y}{4.5 - \i};
	\node[circle, minimum size=1em, draw=black] (u\i) at (2, \y) {$u_{\i}$};
}

\draw[edge] (r1)--(u1);
\draw[edge] (r1)--(u2);
\draw[edge] (r1)--(u3);
\draw[edge] (r2)--(u2);
\draw[edge] (r2)--(u3);
\draw[edge] (r2)--(u4);
\draw[edge] (r3)--(u4);
\draw[edge] (r4)--(u4);
\end{tikzpicture}
\end{center}
\caption{Authorization relation}
\label{fig:authorization-relation}
\end{subfigure}
\hfill
\begin{subfigure}[b]{0.45\textwidth}
\begin{center}
\begin{tabular}{lc}
\toprule
$T$ & $\uP_A(T)$\\
\midrule
$\{r_1\}$ & 1 \\
$\{r_1, r_2\}$ & 2 \\
$\{r_2, r_3, r_4\}$ & 1 \\
$\emptyset$ & 1 \\
Other $T \subseteq R$ & 0 \\
\bottomrule
\end{tabular}
\end{center}
\vspace{5ex}
\caption{User profile}
\label{fig:user-profile}
\end{subfigure}

\caption{
	An example of an authorization relation and the corresponding user profile.
}
\end{figure}

%In other words, for an authorization relation $A$ and a subset of resources $T\subseteq R$, the value $\uPs{A}{T}$ is precisely the number of users in $U$ that are assigned by $A$ precisely to the resources in $T$ and nothing else. 

\begin{lemma}\label{lem:user_profile}
	Let $U$ be a set of users, $R$ a set of resources, $(c, w_c)$ a user-independent weighted constraint and $A_1,A_2\subseteq U\times R$ two authorization relations such that $\uPs{A_1}{T}=\uPs{A_2}{T}$ for all $T\subseteq R$. Then $w_c(A_1)=w_c(A_2)$. 
\end{lemma}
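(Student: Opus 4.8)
The plan is to show that the equality $\uP_{A_1}=\uP_{A_2}$ of user profiles implies that $A_2$ can be obtained from $A_1$ by a permutation $\sigma$ of $U$, and then invoke user-independence of $w_c$ to conclude $w_c(A_1)=w_c(\sigma(A_1))=w_c(A_2)$.

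First I would partition $U$ according to the set of resources each user is authorized for: for $T\subseteq R$, set $U_i^T=\{u\in U\mid A_i(u)=T\}$ for $i\in\{1,2\}$. These sets (over all $T\subseteq R$) are pairwise disjoint and together cover $U$, for each fixed $i$. The hypothesis $\uPs{A_1}{T}=\uPs{A_2}{T}$ says exactly that $|U_1^T|=|U_2^T|$ for every $T\subseteq R$. Hence for each $T$ I can fix an arbitrary bijection $\sigma_T:U_1^T\to U_2^T$. Since $\{U_1^T\}_{T\subseteq R}$ and $\{U_2^T\}_{T\subseteq R}$ are both partitions of $U$, gluing the $\sigma_T$ together yields a well-defined bijection $\sigma:U\to U$, i.e.\ a permutation of $U$.

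Next I would verify that $\sigma(A_1)=A_2$. Take any $(u,r)\in A_1$; let $T=A_1(u)$, so $u\in U_1^T$ and $r\in T$. Then $\sigma(u)=\sigma_T(u)\in U_2^T$, which by definition means $A_2(\sigma(u))=T\ni r$, so $(\sigma(u),r)\in A_2$; thus $\sigma(A_1)\subseteq A_2$. The reverse inclusion follows symmetrically (or by a counting argument, since $\sigma$ is a bijection and $|\sigma(A_1)|=|A_1|=\sum_T |T|\cdot\uPs{A_1}{T}=\sum_T |T|\cdot\uPs{A_2}{T}=|A_2|$). Hence $\sigma(A_1)=A_2$. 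Finally, since $w_c$ is user-independent, $w_c(\sigma(A_1))=w_c(A_1)$, and therefore $w_c(A_1)=w_c(A_2)$.

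There is no real obstacle here; the only thing to be slightly careful about is the bookkeeping that $\{U_i^T\}_{T\subseteq R}$ genuinely forms a partition of $U$ (every user $u$ lies in exactly one class, namely $T=A_i(u)$, including the possibility $T=\emptyset$), so that the piecewise-defined $\sigma$ is both well-defined and a bijection. Everything else is a direct unwinding of definitions, and the appeal to user-independence is immediate from the definition given in Section~\ref{sec:valued-apep-constraints}.
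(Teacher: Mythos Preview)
Your proposal is correct and follows essentially the same approach as the paper: partition $U$ according to $A_i(u)$, use the equality of profile counts to build bijections on each block, glue them into a permutation $\sigma$ with $\sigma(A_1)=A_2$, and invoke user-independence. The only cosmetic difference is that the paper fixes orderings on each $U^i_T$ and matches by index rather than naming abstract bijections $\sigma_T$, and it verifies $\sigma(A_1)=A_2$ via a single ``iff'' chain rather than two inclusions.
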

\begin{proof}
	We will define a permutation $\sigma: U \rightarrow U$ such that $\sigma(A_1)=A_2$. The lemma then immediately follows from the definition of user-independence.
	For $i\in \{1,2\}$ and $T\subseteq R$, let $U^i_T$ be the set of users that are assigned by $A_i$ precisely to the resources in $T$ and nothing else. That is 
	\(U^i_T = \{u\in U\mid A_i(u)=T\}. \)
	Now, let us fix for each $U^i_T$ an arbitrary ordering of the users in $U^i_T$ and let $u^i_{T,j}$ for $j\in [|U^i_T|]$ denote the $j$-th user in $U^i_T$. Note that for all $T\subseteq R$, we have $\uPs{A_1}{T}=\uPs{A_2}{T}$ by the assumptions of the lemma and hence $|U^1_T| = |U^2_T|$. Moreover, each user in $U$ is assigned exactly one (possibly empty) subset of resources in each of the authorization relations $A_1$ and $A_2$. Hence the sets $\bigcup_{T\subseteq R}\{U^1_T\}$ and $\bigcup_{T\subseteq R}\{U^2_T\}$ are both partitions of $U$.
	We are now ready to define the permutation $\sigma$ as $\sigma(u^1_{T,j}) = u^2_{T,j}$ for all $T\subseteq R$, $j\in [|U^1_T|]$. It remains to show that $\sigma(A_1)=A_2$. By the definition of the users $u^1_{T,j}$ and $u^2_{T,j}$, we get that for all $T\subseteq R$, all $j\in [|U^1_T|]$, and all $r\in R$ we have $(u^1_{T,j},r)\in A_1$ if and only if $r\in T$ if and only if $(u^2_{T,j},r)\in A_2$ and the lemma follows.  
\end{proof}

\begin{lemma}\label{lem:computation_of_small_solution}
	Let $\cI = (R, U, C, \omega)$ be an instance of {\vapep} such that all constraints in $C$ are user-independent and let $\uP: 2^R \rightarrow \mathbb{N}$ be a \userProfile. Then there exists an algorithm that finds a relation $A$ which minimizes $w(A)$ among all relations with user profile $\uP$.
\end{lemma}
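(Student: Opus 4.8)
The plan is to use Lemma~\ref{lem:user_profile} to split $w$ into a part that is constant over all relations with profile $\uP$ and a part that decomposes over users, and then to solve the resulting per-user assignment problem by minimum-cost bipartite matching.

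First I would observe that a relation $A$ with $\uP_A=\uP$ can exist only if $\sum_{T\subseteq R}\uP(T)=|U|$: every user $u$ is authorized for exactly one subset $A(u)\subseteq R$, and $\uP(T)$ counts the users with $A(u)=T$. If this identity fails, the algorithm simply reports that no relation has profile $\uP$; so assume it holds. Since every constraint in $C$ is user-independent, Lemma~\ref{lem:user_profile} gives $w_c(A_1)=w_c(A_2)$ for any two relations $A_1,A_2$ with $\uP_{A_1}=\uP_{A_2}=\uP$ and every $c\in C$, and summing over $c\in C$ produces a constant $W_{\uP}$ with $w_C(A)=W_{\uP}$ for every such $A$. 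Hence $w(A)=\Omega(A)+W_{\uP}$ on this class, so minimizing $w$ over relations with profile $\uP$ is the same as minimizing $\Omega(A)=\sum_{u\in U}\omega(u,A(u))$.

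The key feature now is that $\Omega$ is \emph{separable}: it depends only on the assignment $u\mapsto A(u)$, and it is a sum of per-user terms $\omega(u,A(u))$, each depending only on the single set allocated to $u$. I would therefore construct a bipartite graph whose left side is $U$ and whose right side contains, for each $T\subseteq R$, exactly $\uP(T)$ copies of a slot labelled $T$ (so the right side has $\sum_T\uP(T)=|U|$ vertices), where the edge joining $u$ to a $T$-labelled slot has weight $\omega(u,T)$. Perfect matchings of this graph are in bijection with relations having profile $\uP$, and the weight of a matching equals $\Omega$ of the corresponding relation; a minimum-weight perfect matching --- computable in polynomial time, e.g.\ by the Hungarian algorithm --- thus yields the desired $A$, and if the value is wanted one returns $\Omega(A)+w_C(A)$ by evaluating $w_C$ on that $A$. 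I do not expect any real obstacle beyond recognising this reduction; the only points needing care are the feasibility check $\sum_T\uP(T)=|U|$ and the remark that the graph uses at most $2^k$ distinct slot labels, so its size, and hence the running time, is polynomial in $|U|$ and $2^k$, i.e.\ in the size of the input.
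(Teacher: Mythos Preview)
Your proposal is correct and follows essentially the same approach as the paper: both reduce to minimizing $\Omega(A)$ via Lemma~\ref{lem:user_profile}, check the feasibility condition $\sum_{T\subseteq R}\uP(T)=|U|$, build a bipartite graph with $U$ on one side and $\uP(T)$ copies of each $T$ on the other with edge weights $\omega(u,T)$, and solve by minimum-cost perfect matching via the Hungarian method. One small clarification: the right side of your graph has exactly $|U|$ vertices (not $2^k$), so the matching instance has size polynomial in $|U|$ alone; the $2^k$ bound on distinct labels is not what controls the graph size.
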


\begin{proof}
	It follows from Lemma~\ref{lem:user_profile} and the fact that all constraints in $C$ are user-independent that $w_C(A)$ only depends on the \userProfile\ of $A$ and hence we only need to find an authorization relation $A$ with \userProfile\ $\uP_A=\uP$ such that $\Omega(A)$ is minimized.
	
	Note that if $\sum_{T\subseteq R}\uP(T) \neq |U|$, then there is no authorization relation with given user profile. This is because for every authorization relation $A$ and every user $u$, the set $A(u)$ is defined as is a (possibly empty) subset of $R$. Hence from now on we assume that $\sum_{T\subseteq R}\uP(T) = |U|$. We start by creating a weighted complete bipartite graph $G=(V_1\cup V_2, E)$, with parts $V_1, V_2$ such that $V_1=U$ and $V_2$ contains for each $T\subseteq R$ a set of $\uP(T)$ vertices; let us denote these vertices $\{v^T_1, v^T_2,\ldots, v^T_{\uP(T)}\}$. For a user $u\in U$ and a vertex $v^T_i$, the weight of the edge $uv^T_i$ is defined as $w(uv^T_i)=\omega(u,T)$. Since  $\sum_{T\subseteq R}\uP(T) = |U|$, it follows that $|V_1| = |V_2|$. We show that there is a correspondence between perfect matchings of the graph $G$ and authorization relations with \userProfile\ $\uP$. 
	
	First, let $A$ be an authorization relation such that $\uP_A = \uP$. Then, we can get a perfect matching $M_A$ of $G$ of weight $\Omega(A)$ as follows. Because, $\uP_A = \uP$, we have that for every $T\subseteq R$ there are exactly $\uP(T)$ many users $u\in U$ such that $A(u)=T$. Hence for every $T\subseteq R$ there is a perfect matching $M_A^T$ between these users and vertices $\{v^T_1, v^T_2,\ldots, v^T_{\uP(T)}\}$. Moreover, the cost of an edge between a user $u\in U$ such that $A(u)=T$ and a vertex $v^T_j$, $j\in \uP(T)$, is $\omega(u,T)$ which is precisely the contribution of the user $u$ to $\Omega(A)$. Hence the cost of the matching $M_A = \bigcup_{T\subseteq R}M_A^T$ is precisely $\Omega(A)$. 
	
	On the other hand if $M$ is a perfect matching in $G$, then we can define an authorization relation $A_M$ as $(u,r)\in A_M$, if and only if $u$ is matched to a vertex $v^T_i$ with $r\in T$. Clearly, every user $u$ is then matched by $M$ to a vertex $v^T_i$ such that $A_M(u)=T$ and weight of the edge in $M$ incident to $u$ is precisely $\omega(u, A_M(u))$, which is the contribution of $u$ to $\Omega(A)$.
	
	It follows that $G$ has a perfect matching of cost $W$ if and only if there is an authorization relation $A$ with \userProfile\ $\uP$ and $\Omega(A)=W$ and given a perfect matching of $G$, we can easily find such an authorization relation. Therefore, to finish the proof of the lemma we only need to compute a minimum cost perfect matching in the weighted bipartite graph $G$, which can be done using the well-known Hungarian method in $\cO(mn)$ time~\cite{kuhn1956variants}, where $n$ is the number of vertices and $m$ is the number of edges in $G$.
\end{proof}

The next ingredient required to prove our main result (Theorem~\ref{thm:t_bounded_algorithm}) is the fact that the number of all possible user profiles for all authorization relations of size at most $t$ is small and can be efficiently enumerated. 
Let $\cI = (R, U, C, \omega)$ be an instance of {\vapep} and $A\subseteq U\times R$ an authorization relation. 
Then for the \userProfile\ $\uP_A$ of $A$ we have that $\sum_{T\subseteq R} |T|\cdot \uP_{A}(T)= |A|$ and $\sum_{T\subseteq R} \uP_{A}(T) = |U|$. 
Moreover, if $A$ is complete, then $t\ge k$. Note that the number of users in an optimal solution for a $t$-wbounded set of weighted constraints is at most $t$ by Lemma~\ref{lem:existance_of_small_solution}. However, sometimes we are able to show that the number of users in an optimal solution is actually significantly smaller than the bound $t$ such that the set of weighted constraints $C$ is $t$-wbounded (see, e.g.,  { Lemma~\ref{lemma:quadratic-w-boundedness-bod-sod}}). Moreover, if $\uP_A$ is a \userProfile\ of an authorization relation with at most $\ell$ users, then $\sum_{T\subseteq R, T\neq \emptyset} \uP_{A}(T) \le \ell$.
The following lemma will be useful because we are only interested in complete authorization relations of size at most $t$ that use at most $\ell\le t$ users.

\begin{lemma}\label{lem:NrUsrProfiles}
	Let $\cI = (R, U, C, \omega)$ be an instance of {\vapep} such that $|R|=k$ and let $\ell\in \mathbb{N}$. 
	Then the number of possible \userProfile{s}, $\uP: 2^R\rightarrow \mathbb{N}$, such that $\sum_{T\subseteq R, T\neq \emptyset}\uP(T) \le \ell$
	% and $\sum_{T\subseteq R} \uP(T) = |U|$ 
	is $\binom{\ell+2^k-1}{\ell}$.
	%$\cO(\min(2^{tk}, t^{2^k-1}))$. 
	Moreover, we can enumerate all such functions in time $\cO^*(\binom{\ell+2^k-1}{\ell})$.%$\cO^*(\min(2^{tk}, t^{2^k-1}))$.
\end{lemma}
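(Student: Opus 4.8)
The plan is to reduce the count to a single stars-and-bars identity and then enumerate by a shallow, budget-bounded recursion.

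First I would observe that, for a fixed user set $U$, a \userProfile\ $\uP$ is determined by its values on the nonempty subsets of $R$ alone. Indeed, every user $u$ is authorized by a relation for exactly one (possibly empty) subset of $R$, so any \userProfile\ satisfies $\sum_{T\subseteq R}\uP(T)=|U|$, whence $\uP(\emptyset)=|U|-\sum_{T\subseteq R,\,T\neq\emptyset}\uP(T)$ is forced by the remaining values. Consequently, counting the \userProfile s with $\sum_{T\subseteq R,\,T\neq\emptyset}\uP(T)\le\ell$ is the same as counting the assignments of non-negative integers to the $2^k-1$ nonempty subsets of $R$ whose total is at most~$\ell$.

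Next I would carry out the count. Put $m=2^k-1$ and identify such an assignment with an $m$-tuple $(x_T)_{T\subseteq R,\,T\neq\emptyset}$ of non-negative integers with $\sum_T x_T\le\ell$. Adjoining a slack coordinate $x_\star=\ell-\sum_T x_T\ge 0$ gives a bijection between these tuples and the non-negative integer solutions of $x_\star+\sum_T x_T=\ell$ in $m+1=2^k$ unknowns. By the standard stars-and-bars formula the number of such solutions is $\binom{\ell+(2^k-1)}{2^k-1}=\binom{\ell+2^k-1}{\ell}$, which is the claimed bound.

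For the enumeration, I would describe a routine that fixes an arbitrary ordering $T_1,\dots,T_m$ of the nonempty subsets of $R$ and, for $i=1,\dots,m$, recursively branches over all values $\uP(T_i)\in\{0,1,\dots,\ell-\sum_{j<i}\uP(T_j)\}$, finally setting $\uP(\emptyset)$ so that the total equals $|U|$ (discarding the candidate if this would be negative) and outputting the profile at each leaf. Distinct leaves give distinct admissible profiles, so the recursion tree has exactly $\binom{\ell+2^k-1}{\ell}$ leaves; it has depth $m\le 2^k$, and since every internal node has at least one child every subtree contains a leaf, so the number of nodes at each level is at most the number of leaves and the total number of nodes is $\mathcal{O}(2^k\cdot\binom{\ell+2^k-1}{\ell})$. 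As the work per node is polynomial in the input size, the enumeration runs in $\mathcal{O}^*\!\big(\binom{\ell+2^k-1}{\ell}\big)$ time.

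I do not anticipate a genuine obstacle: the only point requiring care is the bookkeeping that keeps the size of the recursion tree (not merely its leaf count) within the stated bound, i.e., ensuring no branch leads to an empty subtree; this is immediate once one notes that any partial assignment can be completed by padding with zeros, and it is handled cleanly by the observation that each level has at most $\binom{\ell+2^k-1}{\ell}$ nodes.
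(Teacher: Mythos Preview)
Your proposal is correct and follows essentially the same approach as the paper: both reduce the count to weak compositions of $\ell$ into $2^k$ parts via a slack variable (stars-and-bars), and both enumerate by the same budget-bounded recursion over a fixed ordering of the nonempty subsets of $R$, recovering $\uP(\emptyset)$ at the end. Your bookkeeping on the recursion-tree size (bounding nodes per level by the leaf count) is slightly more explicit than the paper's, but the argument is otherwise the same.
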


\begin{proof}
	It is well known that the number of weak compositions of a natural number $q$ into $p$ parts (the number of ways we can assign non-negative integers to the variables $x_1, x_2, \ldots, x_p$ such that $\sum_{i=1}^p x_i = q$) is precisely $\binom{p+q-1}{q-1}= \binom{p+q-1}{p-1}$ (see, e.g.,~\cite{Jukna01}). Note that because $\sum_{T\subseteq R} \uP(T) = |U|$, each user profile $\uP$ is determined by assigning $\uP(T)$ for all $T\neq \emptyset$. It is not difficult to see that the number of ways in which we can assign $\uP(T)$ for all $T\neq \emptyset$ such that $\sum_{T\subseteq R, T\neq \emptyset}\uP(T) \le \ell$ is the same as the number of weak partitions of $\ell$ into $2^k$ parts - each of the first $2^k-1$ parts is identified with one of $2^k-1$ sets $T\subseteq R$ such that $T\neq \emptyset$. The last part is then a ``slack'' part that allows $\sum_{T\subseteq R, T\neq \emptyset}\uP(T)$ to be also smaller than $\ell$. 
	It follows that the number of possible user profiles is at most $\binom{\ell+2^k-1}{\ell}$. To enumerate them in $\cO^*(\binom{\ell+2^k-1}{\ell})$ time we can do the following branching algorithm: We fix some order $T_1, T_2, \ldots, T_{2^k-1}$ of the non-empty subsets of $R$. We first branch on $\ell+1$ possibilities for $\uP(T_1)$, then we branch on $\ell+1- \uP(T_1)$ possibilities for $\uP(T_2)$, and so on, until we branch on $\ell+1- \sum_{i\in [2^k-2]}\uP(T_i)$ possibilities for $\uP(T_{2^k-1})$. Afterwards, we compute $\uP(\emptyset)$ from $\sum_{T\subseteq R} \uP(T) = |U|$. Each leaf of the branching tree gives us a different possible \userProfile\ and we spend polynomial time in each branch. Hence the running time of the enumeration algorithm is $\cO^*(\binom{\ell+2^k-1}{\ell})$.	
\end{proof}
Because the number of possible \userProfile{s} that authorize at most $\ell$ users appears in the running time of our algorithms, 
it will be useful to keep in mind the following two simple observations about the combinatorial number $\binom{\ell+2^k-1}{\ell}$.
%\eeinline{I am not sure, we need to point out the following observation, but we can if we want to. }
\begin{observation}\label{obs:binom_bound1}
	$\binom{\ell+2^k-1}{\ell} \le 2^{\ell + 2^k-1}\le 4^{\max(\ell, 2^k-1)}$.
\end{observation}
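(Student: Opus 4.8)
The plan is to prove the two inequalities in turn; both are elementary and I do not expect any real difficulty, so the write-up will be short.

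First I would establish $\binom{\ell+2^k-1}{\ell} \le 2^{\ell+2^k-1}$ by invoking the standard fact that a single binomial coefficient never exceeds the sum of the whole row: for any $n \ge 0$ and any $0 \le j \le n$ we have $\binom{n}{j} \le \sum_{i=0}^{n}\binom{n}{i} = 2^{n}$. Here I set $n = \ell + 2^k - 1$ and $j = \ell$; since $2^k - 1 \ge 0$ we indeed have $\ell \le n$, so $j = \ell$ is a legitimate lower index and the bound applies directly, giving $\binom{\ell+2^k-1}{\ell} \le 2^{\ell+2^k-1}$.

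Next I would derive $2^{\ell+2^k-1} \le 4^{\max(\ell,\,2^k-1)}$ from the trivial estimate $a + b \le 2\max(a,b)$, applied with $a = \ell$ and $b = 2^k - 1$. This yields $\ell + (2^k - 1) \le 2\max(\ell,\,2^k-1)$, and hence $2^{\ell+2^k-1} \le 2^{2\max(\ell,\,2^k-1)} = 4^{\max(\ell,\,2^k-1)}$. Chaining the two inequalities finishes the proof.

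The only point requiring any care is bookkeeping: checking that the index $\ell$ is valid as a lower argument of the binomial coefficient (which it is, because $\ell + 2^k - 1 \ge \ell$), and observing that the degenerate situations — $\ell = 0$, or very small $k$ — present no problem since every quantity involved is a non-negative integer. There is no substantive obstacle here; this observation is recorded only because the bound $\binom{\ell+2^k-1}{\ell}$ controls the running time of the later algorithms and it is convenient to have it in these cruder exponential forms.
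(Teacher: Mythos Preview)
Your proof is correct. The paper states this observation without proof, treating it as self-evident; your two-step argument (any binomial coefficient is at most $2^n$, and $a+b \le 2\max(a,b)$) is exactly the standard justification one would supply if asked to spell it out.
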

\begin{observation}\label{obs:binom_bound2}
	If $\ell\ge 4$, then $\binom{\ell+2^k-1}{\ell} \le \min(2^{\ell k}, \ell^{2^k-1}) +1$.
\end{observation}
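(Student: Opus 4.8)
The plan is to prove the two inner inequalities
$\binom{\ell+2^k-1}{\ell}\le 2^{\ell k}$ and $\binom{\ell+2^k-1}{\ell}\le \ell^{2^k-1}+1$
separately. Since $\min(a,b)+1=\min(a+1,b+1)$ and $2^{\ell k}\le 2^{\ell k}+1$, combining the two bounds immediately yields $\binom{\ell+2^k-1}{\ell}\le \min\bigl(2^{\ell k}+1,\ \ell^{2^k-1}+1\bigr)=\min\bigl(2^{\ell k},\ \ell^{2^k-1}\bigr)+1$, which is the claim. Note that the first bound will in fact hold for all $\ell\ge 1$; the hypothesis $\ell\ge 4$ is needed only for the second one.

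For the first bound I would use the combinatorial reading already available from Lemma~\ref{lem:NrUsrProfiles}: $\binom{\ell+2^k-1}{\ell}$ equals the number of weak compositions of $\ell$ into $2^k$ parts, equivalently the number of multisets of size $\ell$ over a ground set of $2^k$ elements. Each such multiset is determined by a non-decreasing length-$\ell$ sequence over a $2^k$-element alphabet, so there are at most $(2^k)^\ell = 2^{\ell k}$ of them. Hence $\binom{\ell+2^k-1}{\ell}\le 2^{\ell k}$, with no additive term required here.

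For the second bound, write $m=2^k$ and split on $m\le 2$ versus $m\ge 3$. When $m=1$ the binomial equals $1=\ell^0$, and when $m=2$ it equals $\ell+1=\ell^{1}+1$; so both boundary cases are covered, and the $m=2$ case is exactly where the additive $1$ is used. For $m\ge 3$ I would prove the stronger statement $\binom{\ell+m-1}{m-1}\le \ell^{m-1}$. Using the standard identity $\binom{\ell+m-1}{m-1}=\prod_{i=1}^{m-1}\frac{\ell+i}{i}$, it suffices to check $\frac{1}{\ell^{m-1}}\binom{\ell+m-1}{m-1}=\prod_{i=1}^{m-1}\bigl(\tfrac1i+\tfrac1\ell\bigr)\le 1$. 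Since $\ell\ge 4$, the factor for $i=1$ is at most $\tfrac54$, the factor for $i=2$ is at most $\tfrac34$, their product is $\tfrac{15}{16}<1$, and every remaining factor ($i\ge 3$) is at most $\tfrac13+\tfrac14<1$; hence the whole product is strictly below $1$, giving $\binom{\ell+m-1}{m-1}<\ell^{m-1}\le \ell^{2^k-1}+1$.

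The argument is elementary once the multiset/weak-composition interpretation of $\binom{\ell+2^k-1}{\ell}$ is fixed, so there is no real obstacle. The one point that needs care — and the reason the statement carries ``$+1$'' rather than a clean power bound — is the degenerate case $2^k=2$ in the second inequality, where equality $\ell+1=\ell^{2^k-1}+1$ is attained and so the additive term cannot be dropped; the case split must isolate this value of $k$ before running the product estimate, which requires $2^k-1\ge 2$.
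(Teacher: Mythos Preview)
Your proof is correct, and in fact cleaner than the paper's. The paper argues the small cases $k=0,1,2$ by hand, then for $k\ge 3$ splits further on whether $\ell\ge 2^k$ or $\ell<2^k$; in each subcase it uses the crude bound $\binom{n}{r}\le n^r/r!$ together with estimates like $\tfrac{2^{2^k-1}}{(2^k-1)!}\le 1$ to land on whichever of $2^{\ell k}$ or $\ell^{2^k-1}$ is smaller. You sidestep all of that by proving the two upper bounds independently: the inequality $\binom{\ell+2^k-1}{\ell}\le (2^k)^\ell$ drops out of the multiset interpretation with no case analysis at all, and your product argument $\prod_{i=1}^{m-1}\bigl(\tfrac{1}{i}+\tfrac{1}{\ell}\bigr)\le 1$ for $m\ge 3$ is both elementary and tight enough to avoid the factorial bookkeeping. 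The only point worth noting is that since $m=2^k$ is always a power of two, your ``$m\ge 3$'' case is really ``$k\ge 2$'', but your estimate is valid for every integer $m\ge 3$, so this is harmless. What the paper's route buys is that it explicitly identifies which side of the $\min$ is active in each regime, but for the purpose of the observation this is unnecessary, and your decomposition is the shorter argument.
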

\begin{proof}
	If $k = 0$, then $\binom{\ell+2^k-1}{\ell}=\binom{\ell}{\ell}=1\le \min(2^{\ell\cdot 0}, \ell^{2^0-1}) +1$. 
	If $k = 1$, then $\binom{\ell+2^k-1}{\ell} = \ell +1\le \min(2^{\ell\cdot 1}, \ell^{2^1-1}) +1$. 
	If $k = 2$, then $\binom{\ell+2^k-1}{\ell} = 
	\binom{\ell+3}{3} = \frac{\ell^3+6\ell^2+11\ell + 1}{6}$ and since $\ell \ge 4$, it follows that $\binom{\ell+3}{3}\le \ell^3\le 2^{2\ell}$. 
	From now on, let us assume that $k\ge 3$ and $\ell \ge 4$.
	We distinguish between two cases depending on whether $\ell< 2^k$ or $\ell\ge 2^k$.
	Let us first consider $\ell\ge 2^k$. Note that in this case $\ell^{2^k-1}\le 2^{\ell k}$ and because $k\ge 2$ we have
	$\binom{\ell+2^k-1}{\ell} = \binom{\ell+2^k-1}{2^k-1} \le \frac{(\ell+2^k-1)^{2^k-1}}{(2^k-1)!}\le \frac{2^{2^k-1}}{(2^k-1)!}\cdot \ell^{2^k-1}\le \ell^{2^k-1}$.
	On the other hand, let us now assume $\ell \le 2^k-1$. Note that, because $k\ge 2$ and $\ell\ge 3$, it holds that $(2^k-1)^{\ell} \le \ell^{2^k-1}$. Furthermore, $(2^k-1)^{\ell}< 2^{\ell k}$. 
	Then $\binom{\ell+2^k-1}{\ell} \le \frac{(\ell+2^k-1)^\ell}{\ell!}\le \frac{2^\ell}{\ell!}\cdot (2^k-1)^{\ell}\le (2^k-1)^{\ell}.$
\end{proof}
We are now ready to state and prove the main lemma of this section, which establishes that there exists an FPT algorithm that finds the best solution among all solutions that 
authorize at most $\ell$ users. In particular, this algorithm finds an optimal solution for the case of user-independent $t$-wbounded constraints.

\begin{lemma}\label{lem:t_bounded_algorithm_users}
	Let $\cI = (R, U, C, \omega)$ be an instance of {\vapep} such that all weighted constraints in $C$ are user-independent and let $\ell\in \mathbb{N}$.
	% and $C$ is $t$-wbounded. 
	Then there exists an algorithm
	that in time $\cO^*(\binom{\ell+2^k-1}{\ell})$ computes a complete authorization relation $A$ such that $w(A) \leq w(A')$ for every complete authorization relation $A' \subseteq U \times R$ that authorizes at most $\ell$ users for some resource in $R$.
	%solving $\cI$ in time $\cO^*(\min(2^{tk}, t^{2^k-1}))$.
\end{lemma}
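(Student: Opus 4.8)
The plan is to reduce the (potentially unboundedly large) family of complete relations authorizing at most $\ell$ users to the (bounded) family of their user profiles: enumerate all relevant profiles, compute for each one a cheapest realizing relation via Lemma~\ref{lem:computation_of_small_solution}, and output the best relation found.

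First I would isolate the combinatorial objects to enumerate. If $A'$ is a complete authorization relation authorizing at most $\ell$ users, then each user that is authorized for at least one resource contributes to exactly one term $\uPs{A'}{T}$ with $T\neq\emptyset$, so $\sum_{T\subseteq R,\, T\neq\emptyset}\uPs{A'}{T}\le\ell$; moreover $\sum_{T\subseteq R}\uPs{A'}{T}=|U|$, and completeness of $A'$ is equivalent to: for every $r\in R$ there is some $T\ni r$ with $\uPs{A'}{T}\ge 1$. Call a function $\uP\colon 2^R\to\mathbb N$ a \emph{complete $\ell$-profile} if it satisfies this last covering condition together with $\sum_{T\neq\emptyset}\uP(T)\le\ell$ and $\sum_{T}\uP(T)=|U|$ (the equality determines $\uP(\emptyset)$, and if the implied value is negative the function is discarded). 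By Lemma~\ref{lem:NrUsrProfiles} we can enumerate all functions with $\sum_{T\neq\emptyset}\uP(T)\le\ell$ in time $\cO^*\!\big(\binom{\ell+2^k-1}{\ell}\big)$, and a polynomial-time filtering pass keeps exactly the complete $\ell$-profiles.

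Next, for each complete $\ell$-profile $\uP$ output by the enumeration, I would invoke Lemma~\ref{lem:computation_of_small_solution} to compute in polynomial time a relation $A_{\uP}$ with $\uP_{A_{\uP}}=\uP$ that minimizes $w(A_{\uP})$ over all relations with profile $\uP$; such $A_{\uP}$ is automatically complete, since for each $r$ the user assigned exactly to a set $T\ni r$ with $\uP(T)\ge1$ witnesses $A_{\uP}(r)\neq\emptyset$. The algorithm then returns an $A_{\uP}$ of minimum weight among all enumerated complete $\ell$-profiles. For correctness, take any complete $A'$ authorizing at most $\ell$ users: then $\uP_{A'}$ is a complete $\ell$-profile, hence was enumerated, and the algorithm examined $A_{\uP_{A'}}$, which satisfies $w(A_{\uP_{A'}})\le w(A')$ because $A'$ itself has profile $\uP_{A'}$; therefore the returned relation $A$ is complete and $w(A)\le w(A_{\uP_{A'}})\le w(A')$. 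For the running time: by Lemma~\ref{lem:NrUsrProfiles} there are at most $\binom{\ell+2^k-1}{\ell}$ functions to process, and each is handled with a polynomial amount of work (filtering, one call to the Hungarian-method routine of Lemma~\ref{lem:computation_of_small_solution}, one evaluation of the objective $w(A_{\uP})=\Omega(A_{\uP})+w_C(A_{\uP})$), giving total time $\cO^*\!\big(\binom{\ell+2^k-1}{\ell}\big)$.

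The only real obstacle is conceptual: we must beat \emph{every} complete relation using at most $\ell$ users, of which there may be exponentially many. The resolution is that for user-independent constraints the weight of a relation is a function of its user profile (Lemma~\ref{lem:user_profile}), so optimizing once per profile suffices; the remaining care is to express completeness purely at the profile level and to ensure the per-profile optimum is actually realizable — which is exactly what Lemma~\ref{lem:computation_of_small_solution} guarantees via its perfect-matching construction. A minor point to dispatch explicitly is that some enumerated functions correspond to no relation at all (when $\sum_{T\neq\emptyset}\uP(T)>|U|$ or the covering condition fails), and these are simply skipped; likewise, if no complete $\ell$-profile exists — which, given $|U|\ge1$, happens only in the degenerate case $\ell=0<|R|$ — the algorithm correctly reports that no such complete relation exists.
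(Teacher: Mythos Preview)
Your proposal is correct and follows essentially the same approach as the paper: enumerate all user profiles with $\sum_{T\neq\emptyset}\uP(T)\le\ell$ via Lemma~\ref{lem:NrUsrProfiles}, filter for the completeness condition, apply Lemma~\ref{lem:computation_of_small_solution} to each surviving profile, and return the best relation found. The paper's proof is organized identically, with the same correctness argument (the profile of any competitor $A'$ appears in the enumeration) and the same running-time accounting.
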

\begin{proof} 
	Note that it suffices to compute such an authorization relation $A$ that also authorizes at most $\ell$ users. Let $A^*$ be one such complete authorization relation that satisfies the statement of the theorem.
	Let $\uP_{A^*}: 2^R\rightarrow \mathbb{N}$ be the user profile of $A^*$. Observe that $\sum_{T\subseteq R, T\neq \emptyset} \uP_{A^*}(T)\le \ell$. Moreover, observe that $\sum_{T\subseteq R} \uP_{A^*}(T) = |U|$, as every user in $U$ is assigned to precisely one subset of resources by $A^*$. 
	Furthermore, notice that since $A^*$ is complete, for all $r\in R$ we have that $\sum_{\{r\}\subseteq T\subseteq R}\uP_{A^*}(T)\ge 1$ and we may restrict our attention to \userProfile{s} that also satisfy $\sum_{\{r\}\subseteq T\subseteq R}\uP_{A^*}(T)\ge 1$ for all $r\in R$.
	By Lemma~\ref{lem:NrUsrProfiles}, there exist $\binom{\ell+2^k-1}{\ell}$ different functions (possible \userProfile{s}) $\uP: 2^R\rightarrow \mathbb{N}$ such that $\sum_{T\subseteq R, T\neq \emptyset} \uP(T)\le \ell$ and $\sum_{T\subseteq R} \uP(T) = |U|$. Moreover, we can enumerate all of them in time $\cO^*(\binom{\ell+2^k-1}{\ell})$.
	
	Now, let $\uProfSet$ be the set of all such possible user profiles obtained by Lemma~\ref{lem:NrUsrProfiles} that also satisfy $\sum_{\{r\}\subseteq T\subseteq R}\uP_{A^*}(T)\ge 1$ for all $r\in R$.	
	Since $\uProfSet$ is a subset of functions computed by Lemma~\ref{lem:NrUsrProfiles}, it follows that $|\uProfSet|\le \binom{\ell+2^k-1}{\ell}$. 
	Moreover, it is easy to see that $\uP_{A^*}\in \uProfSet$. 
	The algorithm then branches on all possible profiles in $\uProfSet$ and for a profile $\uP_i\in \uProfSet$, $i\in [|\uProfSet|]$, it computes an authorization relation $A_{i}$ such that $\uP_{A_i}=\uP_i$ and $w(A_i)$ is minimized, which can be done in polynomial time by Lemma~\ref{lem:computation_of_small_solution}. 
	Finally, the algorithm outputs the authorization relation $A_i$ for the \userProfile\ $\uP_i$ that minimizes $w(A_i)$ among all $\uP_i\in \uProfSet$. 
	The running time of the whole algorithm is $\cO^*(\binom{\ell+2^k-1}{\ell})$. 
	
	To establish correctness, first notice that for all $i\in [|\uProfSet|]$ we have $\sum_{\{r\}\subseteq T\subseteq R}\uP_{i}(T)\ge 1$ for all $r\in R$, so the authorization relation $A_i$ is complete. 
	Furthermore, recall that $\uP_{A^*}\in \uProfSet$. 
	For $i\in[|\uProfSet|]$ such that $\uP_i = \uP_{A^*}$, we have that $w(A_i)\le w(A^*)\leq w(A')$ for all complete authorization relations $A' \subseteq U \times R$ that authorizes at most $\ell$ users for some resource in $R$.
\end{proof}
Note that it follows from Lemma~\ref{lem:existance_of_small_solution} that given an instance $\cI= (R, U, C, \omega)$ of {\vapep} such that all weighted constraints in $C$ are user-independent and $C$ is $t$-wbounded there exists an optimal solution $A^*$ of $\cI$ such that $|A^*|\le t$. Moreover, $|A^*|\le t$ implies that $A^*$ authorizes at most $t$ users for some resource. Hence, in combination with Observation~\ref{obs:binom_bound2}, we immediately obtain the main result of this section as a corollary, which establishes that there exists an FPT algorithm for the case of user-independent $t$-wbounded constraints.  

\begin{theorem}\label{thm:t_bounded_algorithm}
	Let $\cI = (R, U, C, \omega)$ be an instance of {\vapep} such that all weighted constraints in $C$ are user-independent and $C$ is $t$-wbounded. 
	Then there exists an algorithm solving $\cI$ in time $\cO^*(\binom{t+2^k-1}{t})= \cO^*(\min(2^{tk}, t^{2^k-1})) = \cO^*(2^{\min(kt, ({2^k-1})\log t)})$.
\end{theorem}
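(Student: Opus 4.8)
The plan is to derive the theorem as an immediate corollary of Lemmas~\ref{lem:existance_of_small_solution} and~\ref{lem:t_bounded_algorithm_users}, combined with the combinatorial estimates in Observations~\ref{obs:binom_bound1} and~\ref{obs:binom_bound2}. First I would apply Lemma~\ref{lem:existance_of_small_solution}: since $C$ is $t$-wbounded, there is an optimal solution $A^*$ of $\cI$ with $|A^*|\le t$. Because every user that $A^*$ authorizes for at least one resource contributes at least one pair to $A^*$, the number of users authorized by $A^*$ for some resource is at most $|A^*|\le t$.

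Next I would invoke Lemma~\ref{lem:t_bounded_algorithm_users} with $\ell = t$, which is legitimate since all constraints in $C$ are user-independent by hypothesis. This produces, in time $\cO^*\!\left(\binom{t+2^k-1}{t}\right)$, a complete authorization relation $A$ with $w(A)\le w(A')$ for every complete $A'\subseteq U\times R$ authorizing at most $t$ users for some resource. In particular $w(A)\le w(A^*)$, and since $A^*$ is optimal, $A$ is itself an optimal (and complete) solution of $\cI$; the algorithm outputs $A$. This settles correctness and the first running-time expression $\cO^*(\binom{t+2^k-1}{t})$.

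It then remains only to simplify the bound. I would first observe that $t\ge k$ in the regime of interest: a complete relation always exists (e.g.\ $U\times R$), so by iterating the $t$-wboundedness property one obtains a complete relation of size at most $t$, and any complete relation has at least $k$ pairs. With $t\ge k$ in hand, for $t\ge 4$ Observation~\ref{obs:binom_bound2} gives $\binom{t+2^k-1}{t}\le \min(2^{tk}, t^{2^k-1})+1 = \cO^*(\min(2^{tk}, t^{2^k-1}))$, while for the finitely many cases $t\le 3$ the quantity $\binom{t+2^k-1}{t}$ is a polynomial in $2^k$ of degree $t\le 3$, hence $\cO(2^{3k})=\cO(2^{tk})$ using $t\ge k$; so $\cO^*(\min(2^{tk}, t^{2^k-1}))$ holds in all cases. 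Finally, writing $t^{2^k-1}=2^{(2^k-1)\log t}$ and using that $x\mapsto 2^x$ is increasing yields $\min(2^{tk}, t^{2^k-1}) = 2^{\min(kt,(2^k-1)\log t)}$, which is the last expression in the statement.

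I do not expect any real obstacle here, since the theorem is essentially a bookkeeping combination of already-established results. The only mild care needed is the boundary handling when passing from $\binom{t+2^k-1}{t}$ to the closed-form bounds (small $t$, and verifying $t\ge k$ so that the two terms inside the $\min$ are the intended ones); everything else is a direct citation of Lemmas~\ref{lem:existance_of_small_solution} and~\ref{lem:t_bounded_algorithm_users}.
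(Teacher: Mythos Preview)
Your proposal is correct and follows essentially the same approach as the paper: the paper also derives the theorem directly from Lemma~\ref{lem:existance_of_small_solution} (to get an optimal $A^*$ with $|A^*|\le t$, hence at most $t$ authorized users) and Lemma~\ref{lem:t_bounded_algorithm_users} with $\ell=t$, then cites Observation~\ref{obs:binom_bound2} for the running-time simplification. If anything, you are slightly more careful than the paper, which does not explicitly address the boundary cases $t\le 3$ or the inequality $t\ge k$.
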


Using Theorem~\ref{thm:t_bounded_algorithm} and Lemma~\ref{lemma:quadratic-w-boundedness-bod-sod}, we have the following:

\begin{corollary}\label{cor:FPT_known_constraints}
\label{thm:improved-result}
Let $\tau = \max_{(r,\ge,t)\in C}t$. Then {\vapep}$\ev{\bodU, \bodE, \sodE, \sodU, {\sf Card_{UB}}, {\sf Card_{LB}}}$
 can be solved in $\cO^*(8^{\tau k^2 {{k}\choose{2}}})$ time. Thus, {\vapep}$\ev{\bodU, \bodE, \sodE, \sodU, {\sf Card_{UB}}, {\sf Card_{LB}}}$ parameterized by $k+\tau$ is FPT.
\end{corollary}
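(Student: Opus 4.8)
The corollary follows by combining two results proved earlier in the excerpt: Lemma~\ref{lemma:quadratic-w-boundedness-bod-sod}, which gives a concrete $t$-wboundedness bound for sets of the six listed constraint types, and Theorem~\ref{thm:t_bounded_algorithm}, which turns any $t$-wboundedness bound into an $\cO^*\bigl(\binom{t+2^k-1}{t}\bigr)$ algorithm when all constraints are user-independent. So the plan is essentially a bookkeeping exercise: instantiate $t$ with the value from the lemma, feed it into the theorem, and simplify.

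\textbf{Step 1.} Observe that every constraint family in the list $\{\bodU, \bodE, \sodE, \sodU, {\sf Card_{UB}}, {\sf Card_{LB}}\}$ is user-independent (this was noted in Section~\ref{sec:valued-apep-constraints}: the weighted versions in equations~\eqref{w_c:card}--\eqref{w_c:BoDE} are invariant under permutations of $U$). Hence any set $C$ drawn from these families consists of user-independent weighted constraints, so Theorem~\ref{thm:t_bounded_algorithm} is applicable provided we can exhibit a $t$ for which $C$ is $t$-wbounded.

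\textbf{Step 2.} Apply Lemma~\ref{lemma:quadratic-w-boundedness-bod-sod} with $\tau = \max_{(r,\ge,t)\in C} t$ (taking $\tau = 0$, or rather the max over an empty set, harmlessly if there are no lower-bound constraints): it states that $C$ is $3\tau k\binom{k}{2}$-wbounded. Plug this value of $t := 3\tau k\binom{k}{2}$ into Theorem~\ref{thm:t_bounded_algorithm}, which yields an algorithm running in time $\cO^*\bigl(\min(2^{tk}, t^{2^k-1})\bigr)$ with $t = 3\tau k\binom{k}{2}$. Using the bound $\cO^*(2^{tk}) = \cO^*(2^{t\cdot k})$ and $t = 3\tau k\binom{k}{2} \le 3\tau k^2\binom{k}{2}/k$, a crude simplification gives running time $\cO^*\bigl(2^{3\tau k^2\binom{k}{2}}\bigr) = \cO^*\bigl(8^{\tau k^2\binom{k}{2}}\bigr)$, which is the claimed bound. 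Since this is of the form $f(k+\tau)\cdot |\cI|^{\cO(1)}$, the problem parameterized by $k+\tau$ is FPT.

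\textbf{Main obstacle.} There is no real obstacle here — the corollary is a direct composition of the two cited results, and the only thing to be careful about is the arithmetic converting $\cO^*(2^{tk})$ with $t = 3\tau k\binom{k}{2}$ into the stated $\cO^*(8^{\tau k^2\binom{k}{2}})$ form, and checking the degenerate case where $C$ contains no ${\sf Card_{LB}}$ constraint (so $\tau$ is a max over the empty set; one should either set $\tau=0$ or note that then $C$ is $O(k^2)$-wbounded by a separate trivial argument and the bound still holds with $\tau$ replaced by a constant). The "FPT" conclusion is then immediate from the definition of fixed-parameter tractability with parameter $k+\tau$.
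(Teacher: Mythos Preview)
Your proposal is correct and mirrors the paper's own proof: it simply invokes Lemma~\ref{lemma:quadratic-w-boundedness-bod-sod} to obtain $t=3\tau k\binom{k}{2}$, plugs this into Theorem~\ref{thm:t_bounded_algorithm}, and simplifies $2^{kt}=2^{3\tau k^2\binom{k}{2}}=8^{\tau k^2\binom{k}{2}}$. The paper's proof is in fact terser (one line of arithmetic), and your explicit handling of user-independence and the degenerate no-${\sf Card_{LB}}$ case are details the paper leaves implicit.
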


\begin{proof}
As $t \le \tau 3k{{k}\choose{2}}$, we have $2^{kt} \le 8^{k^2{{k}\choose{2}}}$ completing the proof.
\end{proof}

In Sections~\ref{sec:U}~and~\ref{sec:vapep-bodE-sodU} we develop more specialized algorithms that solve \vapep\ more efficiently for instances where all constraints are from some specific subset of the above constraints. 
We conclude this section by showing that restricting attention to user-independent constraints is not sufficient to obtain an FPT algorithm parameterized by the number of resources $k$ even for APEP. 
Because all weighted constraints are necessarily $(k\cdot |U|)$-wbounded, as a corollary of our conditional lower-bound, we will also show that, unless the Exponential Time Hypothesis (ETH) fails, the algorithm in Theorem~\ref{thm:t_bounded_algorithm} is in a sense best that we can hope for from an algorithm that can solve \vapep\ with arbitrary user-independent $t$-wbounded weighted constraints. 

Because all the constraints we saw so far were $2^k$-wbounded, we will need to introduce new, more restrictive, constraints to obtain our W[2]-hardness and ETH lower-bound. As we consider only user-independent constraints, it is natural for a constraint $c$ 
%to consider the weight $w_c(A)$ 
to be a function of the \userProfile\ of $A$. To this end, we define a new type of user-independent constraint $c = (\tau, X, \bigvee)$, where $\tau\subseteq 2^R$ and $X\subseteq \mathbb{N}$. The constraint $(\tau, X, \bigvee)$ is satisfied if and only if there exist $T\in \tau$ and $x\in X$ such that $\uPs{A}{T} = x$.
%\[
%w_c(A) =  
%\begin{cases}
%0 & \text{if there exist $T\in \tau$ and $x\in X$ such that %$\uPs{A}{T} = x$}, \\
%1 & \text{otherwise}.
%\end{cases}
%\]
Less formally, $c$ is satisfied if and only if some specified number of users (in $X$) is authorized for some specified set of resources (in $\tau$).
To obtain our ETH lower-bound we make use of the following well-known result in parameterized complexity:

\begin{theorem}[\cite{LokshtanovMS11}]\label{thm:DomSetETH}
	Assuming ETH, there is no $f(k)n^{o(k)}$-time algorithm for \textsc{Dominating Set}, where $n$ is the number of the vertices of the input graph, $k$ is the size of the output set, and $f$ is an arbitrary computable function.
\end{theorem}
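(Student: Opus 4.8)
Since Theorem~\ref{thm:DomSetETH} is a well-known fact, the natural thing is to cite it, but here is how a self-contained argument goes. The plan is to combine the ETH-based lower bound for \textsc{Clique} with a parameter-preserving polynomial-time reduction from \textsc{Clique} to \textsc{Dominating Set}. First I would recall the starting point: assuming ETH, \textsc{Clique} admits no $f(k)\,n^{o(k)}$-time algorithm, where $n=|V(G)|$ and $k$ is the clique size. This is classical and follows from ETH via the Sparsification Lemma~\cite{IP} (so that $3$-CNF Satisfiability with a linear number of clauses has no $2^{o(n)}$-time algorithm) together with the textbook reduction from $3$-CNF Satisfiability to \textsc{Clique} that sends a formula with $m$ clauses to a graph whose clique parameter is exactly $m$ (see, e.g.,~\cite{cygan2015}).

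Next I would construct, from $(G,k)$ with $V(G)=\{v_1,\dots,v_n\}$, a graph $H$ having a dominating set of size $k$ if and only if $G$ has a $k$-clique. For each $i\in\{1,\dots,k\}$ add a \emph{selection gadget}: a fresh clique $V_i=\{v_1^i,\dots,v_n^i\}$ on a copy of $V(G)$, plus two mutually non-adjacent \emph{guard} vertices $a_i,b_i$, each adjacent to all of $V_i$ and to nothing else. For every pair $i<j$ and every \emph{forbidden} pair $(p,q)$ (meaning $p=q$ or $v_pv_q\notin E(G)$), add a \emph{test vertex} $z^{ij}_{pq}$ adjacent exactly to the vertices $v^i_{p'}$ with $p'\neq p$ and $v^j_{q'}$ with $q'\neq q$. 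Then $|V(H)|=O(k^2n^2)$ and $H$ is produced in polynomial time, and the target dominating-set size is exactly $k$.

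For correctness I would argue as follows. The guards force any dominating set of size $k$ to contain exactly one vertex of each $V_i$: dominating both $a_i$ and $b_i$ requires either a vertex of $V_i$ or both $a_i$ and $b_i$, the sets $V_i\cup\{a_i,b_i\}$ are pairwise disjoint, and the budget is $k$; moreover, since $V_i$ is a clique, that single selected vertex dominates its whole gadget, so no budget is left over. A test vertex $z^{ij}_{pq}$ is then dominated precisely when the vertices chosen in gadgets $i$ and $j$ are not the forbidden pair $(v_p,v_q)$, so all test vertices are dominated iff every chosen pair is an edge of $G$, i.e.\ iff the $k$ chosen vertices form a clique; conversely any $k$-clique of $G$, assigned to the $k$ gadgets in any order, yields such a dominating set. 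Finally, an $f(k)\,N^{o(k)}$-time algorithm for \textsc{Dominating Set} on $N$-vertex graphs, run on $H$ with $N=O(k^2n^2)$ and target size $k$, would solve \textsc{Clique} in time $f(k)\,(k^2n^2)^{o(k)}=f'(k)\,n^{o(k)}$, contradicting the first paragraph. The step needing the most care -- the ``main obstacle'' -- is the correctness of the gadget: one must check that the guard vertices genuinely forbid ``skipping'' a selection gadget (so that the whole budget is consumed by the selection, leaving the test vertices to be dominated only by the selected vertices) and that making $V_i$ a clique is exactly what lets the single selected vertex dominate its non-selected copies; once the gadget is calibrated, the parameter bookkeeping is routine.
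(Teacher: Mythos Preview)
The paper does not prove Theorem~\ref{thm:DomSetETH}; it is quoted from~\cite{LokshtanovMS11} and used as a black box in the proof of Theorem~\ref{mainneg}. Your proposal, by contrast, supplies an actual argument: the standard two-step route (ETH $\Rightarrow$ no $f(k)n^{o(k)}$ for \textsc{Clique}, then a parameter-preserving reduction from \textsc{Clique} to \textsc{Dominating Set}), which is exactly the approach of~\cite{LokshtanovMS11} and the textbook treatment in~\cite{cygan2015}. The gadget reduction you describe is correct: the two non-adjacent guards per selection gadget force exactly one vertex of $V_i$ into any size-$k$ dominating set, the clique on $V_i$ lets that vertex dominate its gadget, and the test vertices are calibrated so that they are all dominated iff the selected $k$-tuple is a clique in $G$.

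One small imprecision worth flagging: your description of the ETH $\Rightarrow$ \textsc{Clique} step (``the textbook reduction \dots\ that sends a formula with $m$ clauses to a graph whose clique parameter is exactly $m$'') is the classical NP-hardness reduction, which gives $k=\Theta(n)$ and $N=\Theta(n)$ and therefore does \emph{not} by itself rule out $f(k)N^{o(k)}$ algorithms (since $N^{o(k)}=n^{o(n)}=2^{o(n\log n)}$ is not subexponential in $n$). The actual argument in~\cite{LokshtanovMS11,cygan2015} groups the variables into $k$ blocks and builds a \textsc{Multicolored Clique} instance with $N\approx k\cdot 2^{n/k}$ vertices, so that $N^{o(k)}=2^{o(n)}$. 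Since you cite~\cite{cygan2015} for this step anyway, the slip is cosmetic, but if you intend the sketch to be self-contained you should adjust that sentence.
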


Given the above theorem, we are ready to prove the main negative result of this section.

\begin{theorem}\label{mainneg}
	APEP is W[2]-hard and, assuming ETH, there is no $f(|R|)\cdot |\cI|^{o(2^{|R|})}$-time algorithm solving APEP even when all constraints are user-independent and the base authorization relation is $U\times R$. 
\end{theorem}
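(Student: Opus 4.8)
The plan is to give a single parameterized reduction from \textsc{Dominating Set} (which is W[2]-hard, hence yields W[2]-hardness of APEP) that simultaneously has the parameter blow-up needed for the ETH lower bound via Theorem~\ref{thm:DomSetETH}. Given a \textsc{Dominating Set} instance $(G,\kappa)$ with $V(G)=\{v_1,\dots,v_n\}$, I would set the resource set $R$ to have size $k=\lceil\log_2 n\rceil + O(1)$, so that subsets of $R$ (equivalently, $0/1$-vectors of length $k$) can be used as \emph{names} for the vertices of $G$: fix an injection $\phi:V(G)\hookrightarrow 2^R$. The users $U$ are partitioned into $\kappa$ groups, one group per ``slot'' of the prospective dominating set; the base authorization relation is the full $U\times R$, as the statement permits, so \emph{every} authorization relation is authorized and completeness is the only structural requirement. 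The intended correspondence is: a solution $A$ encodes, for each of the $\kappa$ slots, one chosen vertex $v\in V(G)$, recorded by having exactly the appropriate number of users authorized precisely for the resource set $\phi(v)$. Since $2^{|R|}$ is polynomial in $n$, a running time of the form $f(|R|)\,|\cI|^{o(2^{|R|})}$ for APEP would give a $g(\kappa)\,n^{o(\kappa)}$ algorithm for \textsc{Dominating Set} after checking that $\kappa$ maps to something bounded by a function of $k$ (here $\kappa$ itself is essentially unbounded in terms of $k$, so one must instead track that $2^{|R|}=\Theta(n)$ and the number of slots is $\kappa$, and phrase the contradiction as an $n^{o(\kappa)}$-algorithm — I would be careful to set up the reduction so the exponent matches).

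The constraints are where the new constraint type $(\tau,X,\bigvee)$ does the work. First I would use constraints of this form to force the solution to be ``well-formed'': for each slot $i$, a constraint that is satisfied iff the number of users authorized for exactly one of the designated ``vertex-name'' resource-sets equals the prescribed slot size, and nothing spurious is authorized; concretely one encodes ``exactly one vertex is selected in slot $i$'' by a disjunction over the legal profiles. Second, and crucially, for each vertex $v_j\in V(G)$ I would add a constraint $c_j$ that is satisfied iff \emph{some} slot selects a vertex in the closed neighbourhood $N[v_j]$ — i.e.\ $\tau_j=\{\phi(u)\mid u\in N[v_j]\}$ and $X$ is the set of admissible slot-counts, using the $\bigvee$ semantics ``there exist $T\in\tau_j$ and $x\in X$ with $\uPs{A}{T}=x$''. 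Then $w_C(A)=0$ (an eligible, complete relation exists) iff the $\kappa$ selected vertices form a dominating set of $G$. One must check these constraints are user-independent (immediate, since they depend only on $\uP_A$, by Lemma~\ref{lem:user_profile} this is exactly the relevant notion) and that the whole instance has size polynomial in $n+\kappa$ and is computable in FPT time — here the number of constraints is $O(n+\kappa)$ and each $\tau_j$ has size $\le n$, so $|\cI|=\mathrm{poly}(n)$.

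The main obstacle I anticipate is making the encoding of ``each slot selects exactly one vertex'' airtight using only the $(\tau,X,\bigvee)$ primitive, because that primitive is a pure disjunction (a single satisfied $(T,x)$ pair suffices), whereas ``well-formedness of a slot'' is naturally a conjunction of ``at least one legal choice'' and ``no illegal extra authorizations.'' I would handle this by (i) choosing the slot sizes and the injection $\phi$ so that the only way to make the relevant family of $(\tau,X,\bigvee)$ constraints all evaluate to $0$ is the intended one — e.g.\ give slot $i$ exactly $b_i$ users where the $b_i$ are distinct powers of a fixed base so that the multiset $\{\uPs{A}{T}\}_T$ determines the slot-to-vertex assignment uniquely — and (ii) adding a global ``budget'' constraint on $\sum_{T\neq\emptyset}\uPs{A}{T}$ (again expressible as a $\bigvee$ over the single admissible total, applied to a resource set that is ``all of $R$'' via an auxiliary resource, or by the completeness requirement together with the disjointness of the allowed name-sets) to rule out padding. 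The second, easier-to-overlook point is verifying the ETH exponent: I would double-check that $2^{|R|}$ is $\Theta(n)$ (so $|\cI|^{o(2^{|R|})} = n^{o(n)}$) and reconcile this with Theorem~\ref{thm:DomSetETH}, which is stated with $k$ the \emph{output size}; the cleanest route is probably to also make $|R|$ grow so that $2^{|R|}=\Theta(\kappa\log n)$ or to invoke the stronger known ETH bound for \textsc{Dominating Set} parameterized jointly, and I would choose whichever variant of the \textsc{Dominating Set} lower bound makes the arithmetic match the claimed $|\cI|^{o(2^{|R|})}$ bound exactly.
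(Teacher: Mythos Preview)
Your reduction has the parameter relationship inverted, and this breaks both claims of the theorem. You set $|R|\approx\log n$ so that subsets of $R$ can \emph{name} vertices. But a parameterized reduction from \textsc{Dominating Set} (parameter $\kappa$) to APEP (parameter $|R|$) requires $|R|\le g(\kappa)$ for some computable $g$; with $|R|\approx\log n$ this fails, so you do not get W[2]-hardness. For the ETH part you correctly compute $2^{|R|}=\Theta(n)$, so an $f(|R|)\cdot|\cI|^{o(2^{|R|})}$ APEP algorithm would only yield an $n^{o(n)}$ algorithm for \textsc{Dominating Set}, whereas Theorem~\ref{thm:DomSetETH} rules out $f(\kappa)\,n^{o(\kappa)}$; you notice the mismatch but the proposed fixes (``make $2^{|R|}=\Theta(\kappa\log n)$'' or ``invoke a stronger bound'') do not work---the first still leaves $|R|$ unbounded in $\kappa$, and the second bound is not available.

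The missing idea is to encode vertex identities in the \emph{values} $\uPs{A}{T}$ rather than in the subsets $T$. The paper takes $|R|=\ell$ with $2^{\ell-1}\le \kappa<2^\ell$, so $2^{|R|}=\Theta(\kappa)$, and fixes an arbitrary family $\tau=\{T_1,\dots,T_\kappa\}$ of $\kappa$ nonempty subsets of $R$ to serve as the $\kappa$ slots. A solution $A$ then records, in slot $j$, the vertex of index $\uPs{A}{T_j}\in[n]$. The domination constraint for vertex $v_i$ is simply $(\tau,X_i,\bigvee)$ with $X_i=\{j:v_j\in N[v_i]\}$: it asks that some slot holds the index of a neighbour of $v_i$. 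No well-formedness gadgets are needed---in the reverse direction one just reads off $\{v_{\uPs{A}{T_j}}:\uPs{A}{T_j}\ge 1\}$ as a dominating set of size at most $\kappa$. With this setup $|R|=O(\log\kappa)$ gives a valid parameterized reduction, and $|\cI|^{o(2^{|R|})}=n^{o(\kappa)}$ matches Theorem~\ref{thm:DomSetETH} exactly.
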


Henceforth, we will write $[k]$ to denote $\{1,\dots,k\}$.
For a graph $G = (E,V)$ and vertex $x\in V$, $N_G(x) = \{y \in V \mid xy \in E\}$ is the set of vertices adjacent to $x$ in $G$; for a set $S\subseteq V(G)$, $N(S)=\bigcup_{x\in S} N_G(x) \setminus S.$

\begin{proof}
	To prove the theorem we give a reduction from the \textsc{Dominating Set} problem. 
	Let $(G,k)$ be an instance of the \textsc{Dominating Set} problem. 
	Let $|V(G)|= n$ and let $V(G)= \{v_1, v_2, \ldots, v_n\}$; that is we fix some arbitrary ordering of the vertices in $G$ and each vertex of $G$ is uniquely identified by its position in this ordering (index of the vertex). 
	For a vertex $v_i\in V(G)$ we let the set $X_i = \{i\}\cup \{j \mid v_j\in N_G(v_i) \}$. 
	In other words, for a vertex $v_i\in V(G)$, the set $X_i$ is the set of indices of the vertices in the closed neighbourhood of $v_i$. The aim of the \textsc{Dominating Set} problem is then to decide whether $G$ has a set $S$ of at most $k$ vertices such that for all $i\in [n]$ the set $X_i$ contains an index of some vertex in $S$.
	
	%We will construct an instance $\cI$ of APEP such that $(G,k)$ is YES-instance of \textsc{Dominating Set} if and only if $\cI$ is YES-instance of APEP. 	
	Let $\cI = (U, R, \hA, C)$ be an instance of APEP such that
	\begin{itemize}
		\item $R=\{r_1, \ldots, r_\ell\}$ such that $2^{\ell-1}\le k < 2^{\ell}$,
		\item $|U|= k\cdot n$,
		\item $C = \bigcup_{i\in [n]}\{(\tau, X_i, \bigvee)\}$, where $\tau\subset 2^R$ such that $\emptyset\notin \tau$ and $|\tau| =k$, and 
		\item $\hA = U\times R$. 
	\end{itemize} 
	
	We prove that $(G,k)$ is a YES-instance of \textsc{Dominating Set} if and only if $\cI$ is a YES-instance of APEP. 
	Let $\tau = \{T_1, T_2, \ldots, T_k\}$. Observe that because $2^{\ell-1}\le k$ and $T_i\neq T_j$ for $i\neq j$, it follows that every resource appears in $T_i$ for some $i\in [k]$. 
	
	Let $S = \{v_{q_1}, v_{q_2},\ldots,v_{q_k}\}$ be a dominating set of $G$ of size $k$ (note that if we have a dominating set of size at most $k$, then we have a dominating set of size exactly $k$). Let $A$ be an authorization relation such that $\uPs{A}{T_i}=q_i$. Because $|U|= k\cdot n$ and $1\le q_i\le n$ for all $i\in [k]$, it is easy to construct such an authorization relation. For each $i\in [k]$ we simply select $q_i$ many fresh users $u$ such that $A(u)=T_i$ and leave the remaining users not assigned to any resources ($A(u)=\emptyset$).
	Because $2^{\ell-1}\le k$ and for all $T_i\in \tau$ we have $\uP_A(T_i)\ge 1$, it is easy to see that $A$ is a complete authorization relation. Since $\hA = U\times R$, $A$ is authorized. %Since $\omega$ is the constant zero function, we have $\Omega(A)=0$. 
	It remains to show that $A$ is eligible w.r.t. $C$. Consider the constraint $c_i = (\tau, X_i, \bigvee)$. Since $S$ is a dominating set, the closed neighbourhood of $v_i$ contains a vertex $v_{q_j}\in S$. But then $q_j\in X_i$, $T_j\in \tau$, and $\uPs{A}{T_j}=q_j$, hence $c_i$ is satisfied. %It follows that $w_C(A) = \sum_{c_i\in C}w_{c_i}(A)=0$.
	
	On the other hand let $A$ be valid w.r.t. $\hA$. We obtain a dominating set of $G$ of size at most $k$ as follows.
	Without loss of generality, let as assume that if $i<j$, then $\uPs{A}{T_i}\ge \uPs{A}{T_j}$ and let $k'\in [k]$ be such that $\uPs{A}{T_{k'}}\ge 1$ and $\uPs{A}{T_{k'+1}}= 0$ (note that if $\uPs{A}{T_{k}}\ge 1$, then $k'=k$). We let $S = \bigcup_{i\in [k']} \{v_{\uPs{A}{T_i}}\}$. We claim that $S$ is a dominating set (clearly $|S|=k'\le k$). Let $v_i$ be arbitrary vertex in $V(G)\setminus S$. Consider the constraint $c_i = (\tau, X_i, \bigvee)$. Clearly $c_i$ is satisfied and there exists $T_j\in \tau$ and $x\in X_i$ such that $\uPs{A}{T_j}=x$. But by definition of $X_i$, $x\ge 1$ and $v_x$ is a neighbour of $v_i$. Moreover, by the definition of $S$, we have $v_x = v_{\uPs{A}{T_j}}\in S$. It follows that $S$ is a dominating set. 
	
	Now for each $i\in [n]$, the set $X_i$ has size at most $n$ and $\tau$ has size $k\le n$, so the size of the instance $\cI$ is polynomial in $n$. Moreover $2^{\ell-1}\le k < 2^{\ell}$, hence {APEP} is W[2]-hard parameterized by $|R|$ and an $f(|R|)\cdot |\cI|^{o(2^{|R|})}$ time algorithm for APEP yields an $f(k)n^{o(k)}$ time algorithm for \textsc{Dominating Set}, and the result follows from Theorem~\ref{thm:DomSetETH}.
\end{proof}

%We remark that since we show that it is hard to decide whether there is a solution of weight $0$, the above proof  also shows hardness for APEP. 

%\begin{corollary}
%	{\APEP} is W[2]-hard even when all constraints are user-independent and the base authorization relation is $U\times R$.
%\end{corollary}

The set $C$ is trivially $(|R|\cdot |U|)$-wbounded for every \vapep\ instance $\cI = (R,U, C, \omega)$, so we obtain the following result, which asserts that the lower bound asymptotically matches the running time of the algorithm from Theorem~\ref{thm:t_bounded_algorithm}. 
\begin{corollary}
	Assuming ETH, there is no $t^{o(2^{|R|})}\cdot n^{\cO(1)}$ time algorithm that given an instance $\cI =(R, U, C, \omega)$ of \vapep\ such that all constraints in $C$ are user-independent and $C$ is $t$-wbounded computes an optimal solution for $\cI$. 
\end{corollary}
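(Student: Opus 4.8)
The plan is to obtain this corollary directly from Theorem~\ref{mainneg}, by observing that \vapep{} restricted to user-independent $t$-wbounded constraints already contains a cheap encoding of the restricted class of APEP instances for which Theorem~\ref{mainneg} rules out $f(|R|)\cdot|\cI|^{o(2^{|R|})}$-time algorithms. The only thing to check is that passing from APEP to \vapep{}, and instantiating the generic bound $t$ by $|R|\cdot|U|$, costs nothing in the relevant parameters.

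First I would set up the embedding. Given an instance $\cI = (U, R, \hA, C)$ of APEP in which every constraint is user-independent and $\hA = U\times R$, form the \vapep{} instance $\cI' = (R, U, C', \omega)$ where $\omega(u,T) = 0$ for all $u\in U$ and $T\subseteq R$ (a legal, monotone weighted user authorization function whose associated base relation is exactly $U\times R$), and where $C'$ replaces each $c\in C$ by the weighted constraint $w_c$ defined by $w_c(A) = 0$ if $A$ satisfies $c$ and $w_c(A) = 1$ otherwise. Each $w_c$ is user-independent because $c$ is, and $\cI'$ is computable from $\cI$ in polynomial time with $|\cI'| = |\cI|^{\cO(1)}$. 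The crucial point is the observation stated just before the corollary: every $A\subseteq U\times R$ satisfies $|A|\le|U\times R| = |R|\cdot|U|$, so the defining condition of $t$-wboundedness holds vacuously for $t := |R|\cdot|U|$, and hence $C'$ is $(|R|\cdot|U|)$-wbounded. Finally, since $\Omega\equiv 0$ on $\cI'$, a minimum-weight complete authorization relation of $\cI'$ has weight $0$ if and only if $\cI$ admits a valid authorization relation; so an optimal solution of $\cI'$ lets us decide $\cI$ after polynomial post-processing.

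Next I would assemble the contradiction. Suppose some algorithm solves every \vapep{} instance with all constraints user-independent and $C$ being $t$-wbounded in time $t^{g(|R|)}\cdot|U|^{\cO(1)}$ for some $g$ with $g(|R|) = o(2^{|R|})$. Running it on $\cI'$ with $t = |R|\cdot|U|$, and using $|R|\le|\cI|$ and $|U|\le|\cI|$, a routine manipulation of the exponent (multiplying a $o(2^{|R|})$ function by a constant and adding a constant keeps it $o(2^{|R|})$) shows that APEP, restricted to user-independent constraints with base authorization relation $U\times R$, would be solvable in time $|\cI|^{o(2^{|R|})}$, and in particular in time $f(|R|)\cdot|\cI|^{o(2^{|R|})}$. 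By Theorem~\ref{mainneg} this contradicts ETH, proving the corollary.

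I do not expect any real obstacle. The only points needing a moment's care are (i) that the degenerate weight function $\omega\equiv 0$ and the $0/1$ weighted constraints are legal under the definitions of Section~\ref{sec:vapep} and preserve user-independence; (ii) that ``optimal solution'', read as a minimum-weight \emph{complete} relation, consistent with the algorithms of Section~\ref{sec:bound}, certifies APEP satisfiability; and (iii) the exponent bookkeeping that turns $t^{o(2^{|R|})}\cdot|U|^{\cO(1)}$ with $t = |R|\cdot|U|$ into $|\cI|^{o(2^{|R|})}$. All three are routine, so the substance of the result is already carried by Theorem~\ref{mainneg}.
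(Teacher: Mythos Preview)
Your proposal is correct and follows exactly the approach the paper intends: the paper merely states the one-line observation that every constraint set is trivially $(|R|\cdot|U|)$-wbounded and leaves the rest implicit, while you spell out the APEP-to-\vapep{} encoding and the exponent bookkeeping that turn Theorem~\ref{mainneg} into the claimed bound. There is nothing to add; your points (i)--(iii) are precisely the routine checks the paper suppresses.
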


%\subsection{Improved result}
\section{${\sodU}$ and ${\bodU}$ Constraints}\label{sec:U}

In this section, we will consider {\sc Valued APEP}, where all constraints are only ${\bodU}$ and  ${\sodU}$. 
We will show how to reduce it to {\sc Valued WSP} with user-independent constraints, with the number of steps equal to the number $k$ of resources in {\sc Valued APEP}. 
As a result, we will be able to obtain an algorithm for {\sc Valued APEP} with only ${\bodU}$ and  ${\sodU}$ constraints of running time   $\cO^*(2^{k\log k})$.

Let us start with {\sc Valued APEP}$\ev{\sodU}$. 
Recall that the weighted version of an ${\sodU}$ constraint $(r,r',\updownarrow,\forall)$ is $w_c(A) = f_c(|A(r) \cap A(r')|)$ for some monotonically increasing function $f_c$. 

The weight of a binary SoD constraint $c=(s',s'',\ne)$ in {\sc Valued WSP} is $0$ if and only if 
steps $s'$ and $s''$ are assigned to different users. 
{\sc Valued WSP} using only SoD constraints of this form will be denoted by {\sc Valued WSP}($\ne$). 

\begin{lemma}\label{lemma:only-sod-universal} \sloppy
Let ${\cI}=(R, U, C, \omega)$ be an instance of $\mbox{ }$ {\sc Valued APEP}$\ev{\sodU}$ and let $A^*$ be an optimal solution of $\cI$. 
Let $A'$  be arbitrary authorization relation such that $A'\subseteq A^*$ and $|A'(r)|=1$ for every $r\in R$. Then $A'$ is an optimal solution of $\cI$. Moreover, in polynomial time $\cI$ 
can be reduced to an instance ${\cI}'$ of {\sc Valued WSP}$(\ne)$ such that the weights of optimal solutions of $\cI$ and ${\cI}'$ are equal.
\end{lemma}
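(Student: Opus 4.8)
The plan is to prove the two assertions in order, the second being a reduction that relies on the first.

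\emph{The projection claim.} Since $A^*$ is a solution it is complete, so each $A^*(r)$ is nonempty; hence a relation $A'$ with $A'\subseteq A^*$ and $|A'(r)|=1$ for all $r\in R$ exists, and every such $A'$ is again complete and so a feasible solution. It then remains to show $w(A')\le w(A^*)$. From $A'\subseteq A^*$ we get $A'(u)\subseteq A^*(u)$ for every $u\in U$, so the monotonicity property~(\ref{moncond}) of $\omega$ gives $\omega(u,A'(u))\le\omega(u,A^*(u))$, and summing over $u$ yields $\Omega(A')\le\Omega(A^*)$. For the constraints, each $c=(r,r',\updownarrow,\forall)\in C$ satisfies $w_c(A)=f_c(|A(r)\cap A(r')|)$ with $f_c$ monotonically increasing (see~(\ref{w_c:SoDU})); since $A'(r)\subseteq A^*(r)$ and $A'(r')\subseteq A^*(r')$ imply $A'(r)\cap A'(r')\subseteq A^*(r)\cap A^*(r')$, we obtain $w_c(A')\le w_c(A^*)$, hence $w_C(A')\le w_C(A^*)$. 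Adding the two bounds gives $w(A')\le w(A^*)$, and since $A^*$ is optimal this is an equality, so the complete relation $A'$ is itself an optimal solution of $\cI$.

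\emph{The reduction.} I would build $\cI'$ with step set $S=R$ and the same user set $U$, take its weighted authorization policy to be $\omega$ itself (so that a plan $\pi\colon R\to U$ has authorization weight $\sum_{u\in U}\omega(u,\pi^{-1}(u))=\Omega(A_\pi)$, where $A_\pi=\{(\pi(r),r)\mid r\in R\}$), and replace each $c=(r,r',\updownarrow,\forall)\in C$ by a binary SoD constraint $(r,r',\ne)$ with violation weight the positive constant $f_c(1)$; this makes $\cI'$ a {\sc Valued WSP}$(\ne)$ instance computable in polynomial time. The key observation is that for a plan $\pi$ one has $|A_\pi(r)\cap A_\pi(r')|=1$ if $\pi(r)=\pi(r')$ and $0$ otherwise, so $w_c(A_\pi)$ equals $f_c(1)$ exactly when $\pi$ violates $(r,r',\ne)$ and $0$ otherwise. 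Hence plans of $\cI'$ are in weight-preserving bijection with the relations $A\subseteq U\times R$ of $\cI$ having $|A(r)|=1$ for all $r$, and therefore the optimal weight of $\cI'$ equals $\min\{w(A)\mid |A(r)|=1\text{ for all }r\in R\}$. This minimum is at most $w(A^*)$ by the projection claim and at least $w(A^*)$ because every such $A$ is a complete relation and so has weight at least that of the optimal solution $A^*$; thus it equals $w(A^*)$, i.e., the optimal weight of $\cI$.

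I do not expect a genuine obstacle; the step that needs care is the logical use of the projection claim inside the reduction, because {\sc Valued WSP} optimizes only over plans whereas {\sc Valued APEP} optimizes over arbitrary relations — the projection claim is exactly what guarantees that ignoring the relations that authorize several users for a resource does not change the optimum. The only bookkeeping is checking that the authorization weights transfer exactly, which is immediate once the {\sc Valued WSP} authorization policy is identified with $\omega$ and the $\sodU$ weight functions with their values at $0$ and $1$.
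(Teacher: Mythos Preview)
Your proposal is correct and follows essentially the same approach as the paper: both parts argue monotonicity of $\Omega$ and of each $f_c$ under the inclusion $A'\subseteq A^*$ to get the projection claim, and then build $\cI'$ on the same user and step sets with $(r,r',\ne)$ constraints of weight $f_c(1)$ and the same $\omega$. Your justification that the optimal weights coincide is in fact spelled out more carefully than in the paper, which simply observes that $\pi(r)=A'(r)$ is an optimal plan of $\cI'$.
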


\begin{proof}
Let $A'$ be an arbitrary relation such that $A'\subseteq A^*$ and $|A'(r)|=1$ for every $r\in R$. 
By definition, $A'$ is complete.
%By (\ref{eq:A}), $\omega_A(u,r)$ is non-negative. 
By (\ref{moncond}) and \eqref{eq:Omega}, $A'\subseteq A^*$ implies $\Omega(A') \le \Omega(A^*).$ 
Let $c = \SoDU{r, r'}\in C$. 
Since $A'(r)\cap A'(r')\subseteq A^*(r)\cap A^*(r')$, $w_c(A) = f_c(|A(r) \cap A(r')|)$ and $f_c$ is non-decreasing, we have $w_c(A')\le w_c(A^*)$. 
Thus, $\Omega(A')+w_C(A') \le\Omega(A^*)+w_C(A^*).$ Since $A^*$ is optimal, $A'$ is optimal, too. 

Define an instance ${\cI}'$ of {\sc Valued WSP}($\ne$) as follows: the set of steps is $R$, the set of users is $U$, and $(r,r',\ne)$ is a constraint of ${\cI}'$ if  $\SoDU{r, r'}$ is a constraint of $\cI$. 
The weight of $(r,r',\ne)$ equals $w_c(A')=f_c(1)$ (recall that $|A'(r)| = 1$ for all $r$) and the weights of $(u,T)$, $u\in U, T\subseteq R$, in both ${\cI}$ and ${\cI}'$ are equal.
% Recall that $|A'(r)|=1$ for every $r\in R$. 
Observe that $\pi:\ R\to U$ defined by $\pi(r)=A'(r)$ is an optimal plan of ${\cI}'$. 
Thus, the optimal solution of $\cI$ has the same weight as that of ${\cI}'$.
\end{proof}

We now will consider {\sc Valued APEP}$\ev{\bodU,\sodU}$. 
Recall that the weighted version of a ${\bodU}$ constraint $(r,r',\leftrightarrow,\forall)$ is given by $f_c(\max\{|A(r) \setminus A(r')|,|A(r') \setminus A(r)|\})$ for some monotonically increasing function $f_c$.

The weight of binary BoD constraint $c=(s',s'',=)$ in {\sc Valued WSP} is  $0$ if and only if  steps $s'$ and $s''$ are assigned the same user. 
{\sc Valued WSP}$(=)$ denotes {\sc Valued WSP} containing only BoD constraints; {\sc Valued WSP}$(=,\ne)$ denotes {\sc Valued WSP} containing only SoD and BoD constraints.
In fact, {\sc Valued WSP}($=$) is already {\sf NP}-hard which follows from Theorem 6.4 of \cite{CohenCJK05}. 
This theorem, in particular, shows that {\sc Valued WSP}($=$) is {\sf NP}-hard even if the weights are restricted as follows: $w_c(\pi)=1$ if a plan $\pi$ falsifies a constraint $c$, $\omega(u,r)=\infty$ if $(u,r)\not\in \hA.$  

\begin{lemma}
\label{lemma:only-bod-sod-universal}\sloppy
Let ${\cI}=(R, U, C,\omega)$ be an instance of {\sc Valued APEP}$\ev{\bodU,\sodU}$ and let $A^*$ be an optimal solution of ${\cI}$.
There is an optimal solution $A'$ of ${\cI}$ such that $A'\subseteq A^*$ and $|A'(r)|=1$ for every $r\in R$. Moreover, in polynomial time $\cI$ 
can be reduced to an instance ${\cI}'$ of \vwsp$(=,\ne)$. 
\end{lemma}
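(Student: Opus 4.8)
The plan is to adapt the argument from Lemma~\ref{lemma:only-sod-universal}, but the presence of $\bodU$ constraints requires more care: we cannot simply pick an \emph{arbitrary} authorization relation $A'\subseteq A^*$ with $|A'(r)|=1$ for all $r$, because shrinking $A^*(r)$ and $A^*(r')$ to singletons independently could make a $\bodU$ constraint $(r,r',\leftrightarrow,\forall)$ worse (e.g.\ if $A^*(r)=A^*(r')=\{u,v\}$ so $\maxdiff=0$, but we pick $u$ for $r$ and $v$ for $r'$, getting $\maxdiff=1$). So the first step is to define the right equivalence relation on $R$: say $r\approx r'$ iff $r$ and $r'$ are in the same connected component of the graph on $R$ whose edges are the $\bodU$ constraints of $C$. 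Within each class, we want to assign the \emph{same} user to all resources, chosen so as not to increase any constraint weight.

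\textbf{Constructing $A'$.} For each $\approx$-class $Q\subseteq R$, consider $A^*(Q)=\bigcup_{r\in Q}A^*(r)$. Because $A^*$ is complete, $A^*(r)\neq\emptyset$ for each $r\in Q$; we want to pick a single user $u_Q$ and set $A'(r)=\{u_Q\}$ for all $r\in Q$. The key claim is that there is a choice of $u_Q$ for which no $\bodU$ constraint internal to $Q$ gets a larger weight than under $A^*$. For a $\bodU$ constraint $c=(r,r',\leftrightarrow,\forall)$ with $r,r'\in Q$, $w_c(A')=f_c(\maxdiff(A',r,r'))$, which is $0$ if $u_Q\in A^*(r)$ iff we also need $u_Q$ to behave consistently. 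Actually the cleanest route: if any internal $\bodU$ constraint of $Q$ is already violated under $A^*$ (i.e.\ $A^*(r)\neq A^*(r')$ for some edge $rr'$ in $Q$), then that constraint already contributes positively, and we argue by monotonicity that $w_c(A')=f_c(1)\le w_c(A^*)$ — but this needs $f_c(1)\le f_c(\maxdiff(A^*,r,r'))$, which holds since $\maxdiff(A^*,r,r')\ge 1$. If \emph{all} internal $\bodU$ constraints of $Q$ are satisfied under $A^*$, then $A^*(r)$ is the same set for all $r\in Q$ (by connectedness), call it $B_Q\neq\emptyset$, and we simply pick any $u_Q\in B_Q$; then every internal $\bodU$ constraint of $Q$ has $A'(r)=A'(r')=\{u_Q\}$, so weight $0$, and every $\sodU$ constraint between $Q$ and another class, or internal to $Q$, has $|A'(r)\cap A'(r')|\le|A^*(r)\cap A^*(r')|$ hence weight not increased by monotonicity of $f_c$. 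The mixed case (some internal $\bodU$ violated) is handled by noting that for any choice of $u_Q\in A^*(r_0)$ for some fixed $r_0\in Q$, every edge $rr'$ in $Q$ gets $\maxdiff(A',r,r')\in\{0,1\}$, so $w_c(A')=f_c(\maxdiff(A',r,r'))\le f_c(1)$, and we must verify $f_c(1)\le w_c(A^*)$ whenever $w_c(A^*)=0$ would force a contradiction — here we instead observe that if the whole class has a violated edge then by a connectivity/path argument we can still choose $u_Q$ so that each edge contributes $\le$ its $A^*$-weight; alternatively (and more simply) we just bound $\sum_{c\text{ internal to }Q}w_c(A')\le\sum_{c\text{ internal to }Q}w_c(A^*)$ globally by observing the left side is at most (number of violated-under-$A^*$ edges)$\cdot\max f_c(1)$, which is the main technical point I expect to need a short dedicated argument for.

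\textbf{The reduction to \vwsp$(=,\ne)$.} Once $A'$ is in hand, build $\cI'$: steps $=R$, users $=U$; for each $\sodU$ constraint $(r,r',\updownarrow,\forall)$ of $\cI$ add $(r,r',\ne)$ with weight $f_c(1)$; for each $\bodU$ constraint $(r,r',\leftrightarrow,\forall)$ add $(r,r',=)$ with weight $f_c(1)$; keep $\omega(u,T)$ unchanged. Then the plan $\pi(r)=A'(r)$ (the unique element) is a feasible plan of $\cI'$, and $w_{\cI'}(\pi)=\Omega(A')+\sum_c(\text{weight of }c\text{ if }\pi\text{ violates }c)$. We show $w_{\cI'}(\pi)\le w_{\cI}(A^*)$ using the weight comparisons above, and conversely every plan $\pi'$ of $\cI'$ corresponds to the authorization relation $A_{\pi'}=\{(\pi'(r),r):r\in R\}$ with $w_\cI(A_{\pi'})=w_{\cI'}(\pi')$ exactly (since all $A_{\pi'}(r)$ are singletons, $\maxdiff$ and $|\cap|$ are each either $0$ or $1$, matching the binary constraint semantics). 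Hence $\OPT(\cI')\le w_\cI(A^*)=\OPT(\cI)\le\OPT(\cI')$, so they are equal.

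\textbf{Main obstacle.} The genuinely delicate part is proving the existence of the optimal singleton solution $A'\subseteq A^*$ in the presence of $\bodU$ components that are \emph{partially} violated under $A^*$ — i.e.\ showing one can collapse each $\approx$-class to a single user without any net increase in total weight. I expect to handle this by a component-wise exchange/averaging argument: within a class $Q$, if $A^*$ already pays for some $\bodU$ edges, route the single chosen user along a spanning tree of $Q$'s $\bodU$-graph so that each tree edge contributes $f_c(0)=0$ or $f_c(1)$, and bound the total by the $A^*$-cost via the monotonicity of each $f_c$ together with the fact that $A^*$ restricted to that class is ``at least as spread out.'' Everything else (completeness of $A'$, the $\sodU$ bound, the polynomial-time construction, and the two-way weight equality for the reduction) is routine and mirrors Lemma~\ref{lemma:only-sod-universal}.
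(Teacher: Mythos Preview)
Your construction has a genuine gap: defining the equivalence $\approx$ via connected components of the $\bodU$ constraint graph and then assigning a \emph{single} user $u_Q$ to every resource in a class $Q$ need not produce $A'\subseteq A^*$. For $A'\subseteq A^*$ you would need $u_Q\in\bigcap_{r\in Q}A^*(r)$, but this intersection can be empty. Concretely, take $Q=\{r_1,r_2,r_3\}$ with $\bodU$ edges $(r_1,r_2)$ and $(r_2,r_3)$, and $A^*(r_1)=\{u\}$, $A^*(r_2)=\{u\}$, $A^*(r_3)=\{v\}$ with $u\neq v$. No single user lies in all three sets, so your $A'$ cannot be a subset of $A^*$; yet the statement demands $A'\subseteq A^*$. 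The ``mixed case'' you flag as the main obstacle is a symptom of this: once a $\bodU$ component is partially violated under $A^*$, there is in general no single $u_Q$ that works.

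The paper sidesteps this entirely by using a different equivalence relation: $r\cong r'$ iff $A^*(r)=A^*(r')$ (a partition determined by $A^*$, not by the constraint graph). One then picks, for each class $R_i$, any $u_i\in A^*(r)$ for $r\in R_i$ (well-defined since $A^*(r)$ is constant on $R_i$) and sets $A'(r)=\{u_i\}$. This guarantees $A'\subseteq A^*$ by construction. For a $\bodU$ constraint $c=(r,r',\leftrightarrow,\forall)$: if $A^*(r)=A^*(r')$ then $r,r'$ lie in the same class, hence $A'(r)=A'(r')$ and $w_c(A')=0$; if $A^*(r)\neq A^*(r')$ then $\maxdiff(A^*,r,r')\ge 1$ while $\maxdiff(A',r,r')\le 1$, so $w_c(A')\le f_c(1)\le w_c(A^*)$. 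The $\sodU$ bound and the reduction to \vwsp$(=,\ne)$ then go through exactly as you outlined. So the fix is simply to change your equivalence relation; with the right one, your ``main obstacle'' disappears and no spanning-tree or exchange argument is needed.
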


\begin{proof} %The proof is similar to that of Lemma \ref{lemma:only-sod-universal} but there are differences.
Consider an optimal solution $A^* $ of $\cI$ and define $A'$ as follows. 
We first define an equivalence relation $\cong$ on $R$, where $r \cong r'$ if and only if $A^*(r) = A^*(r')$.
This gives a partition $R = R_1 \uplus \ldots \uplus R_p$ such that $p \leq k$. For each $R_i$, we choose $u_i\in A^*(r)$, where $r\in R_i.$
Then $A'=\cup_{i=1}^p \{(u_i,r):\ r\in R_i\}.$ 

By definition, $A'$ is complete. By (\ref{moncond}) and \eqref{eq:Omega},  $A'\subseteq A^*$ implies $\Omega(A') \le \Omega(A^*).$ 
Let $c = \BoDU{r, r'}\in C$. If $A^*$ satisfies $c$ then  $A'$ also satisfies $c.$ 
If $c$ is falsified by $A^*$ then $$ \max\{|A^*(r) \setminus A^*(r')|, |A^*(r') \setminus A^*(r)|\} \ge 1$$ but $ \max\{|A'(r) \setminus A'(r')|, |A'(r') \setminus A'(r)|\}= 1.$ Hence, $w_c(A^*) \ge f_c(1)=w_{c}(A')$. 
Now let $c = \SoDU{r, r'}\in C$. 
By the proof of Lemma \ref{lemma:only-sod-universal}, we have $w_c(A^*) \ge w_{c}(A').$

Thus, $\Omega(A')+w_C(A') \le \Omega(A^*)+w_C(A^*).$ Since $A^*$ is optimal, $A'$ is optimal, too. 
  
An instance of  ${\cI}'$ of {\sc Valued WSP}($=,\ne$) is defined as in Lemma~\ref{lemma:only-sod-universal}, but the constraints $(r,r',=)$ correspond to constraints \mbox{$ \BoDU{r, r'}$} in $C.$ 
It is easy to see that the optimal solution of $\cI$ has the same weight as that of ${\cI}'$.
\end{proof}

We are now able to state the main result of this section.
The result improves considerably on the running time for an algorithm that solves {\sc Valued APEP} for arbitrary weighted $t$-bounded user-independent constraints (established in Theorem~\ref{thm:t_bounded_algorithm}).

\begin{theorem}
\label{thm:bod-sod-universal-to-WSP-equality}
{\vapep}$\ev{\bodU,\sodU}$ is FPT and can be solved in time $\cO^*(2^{k\log k})$.
\end{theorem}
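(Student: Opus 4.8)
The plan is to combine Lemma~\ref{lemma:only-bod-sod-universal} with Theorem~\ref{thm1}. Lemma~\ref{lemma:only-bod-sod-universal} already gives us, in polynomial time, a reduction from an instance $\cI$ of {\vapep}$\ev{\bodU,\sodU}$ to an instance $\cI'$ of {\sc Valued WSP}$(=,\ne)$ whose optimal solution has the same weight as that of $\cI$; moreover, the number of steps in $\cI'$ equals the number $k$ of resources in $\cI$. So the bulk of the work is already done, and what remains is to tie the pieces together carefully.

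First I would recall that the BoD and SoD constraints appearing in $\cI'$ (i.e., the constraints $(r,r',=)$ and $(r,r',\ne)$ with their weight functions $f_c(1)$ when violated) are user-independent weighted constraints in the sense of {\sc Valued WSP}: whether a plan $\pi$ satisfies $(r,r',=)$ or $(r,r',\ne)$ depends only on whether $\pi(r)=\pi(r')$, which is preserved under any permutation of the user set, and the weights $\omega(u,T)$ are inherited unchanged. Hence $\cI'$ is an instance of {\sc Valued WSP} with user-independent weighted constraints, so Theorem~\ref{thm1} applies and $\cI'$ can be solved in time $\cO^*(2^{k\log k})$, where $k$ is the number of steps, which is exactly the number of resources in $\cI$.

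Then I would close the loop: solve $\cI'$ to obtain an optimal plan $\pi$; the weight of $\pi$ equals the weight of an optimal solution of $\cI$ by Lemma~\ref{lemma:only-bod-sod-universal}. To actually output an authorization relation (rather than just the optimal weight), note that the authorization relation $A'$ defined from $\pi$ by $A'(r)=\{\pi(r)\}$ is complete and, by the argument in the proof of Lemma~\ref{lemma:only-bod-sod-universal}, achieves weight equal to $w_C(\pi)+\Omega(\pi)$, which is optimal for $\cI$. This gives an algorithm running in time $\cO^*(2^{k\log k})$ (the polynomial reduction is subsumed in the $\cO^*$ notation), and FPT membership follows immediately.

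There is essentially no hard obstacle here — the real content is in the two preceding lemmas and in Theorem~\ref{thm1}. The only point requiring a little care is the bookkeeping: verifying that the reduction of Lemma~\ref{lemma:only-bod-sod-universal} preserves user-independence of the weighted constraints (so that Theorem~\ref{thm1}, which requires user-independent constraints, is applicable) and that it keeps the number of steps equal to $k$ (so that the running time is $\cO^*(2^{k\log k})$ in the original parameter). Both are straightforward from the explicit construction in the lemma, so the proof is short.
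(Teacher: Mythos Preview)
Your proposal is correct and follows essentially the same approach as the paper: reduce via Lemma~\ref{lemma:only-bod-sod-universal} to an instance of {\sc Valued WSP}$(=,\ne)$ with $k$ steps, observe that the resulting $(r,r',=)$ and $(r,r',\ne)$ constraints are user-independent, and apply Theorem~\ref{thm1}. Your additional remarks on recovering an optimal authorization relation from the optimal plan are a harmless elaboration not present in the paper's (terser) proof.
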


\begin{proof}
Let $\cI$ be an instance of {\sc Valued APEP}$\ev{\bodU,\sodU}$. 
By Lemma \ref{lemma:only-bod-sod-universal}, $\cI$ can be reduced to an instance 
${\cI}'$ of {\sc Valued WSP}($=,\ne$). 
It remains to observe that ${\cI}'$ can be solved in time $\cO^*(2^{k\log k})$ using the algorithm of Theorem~\ref{thm1}, as $(r,r',=)$ and $(r,r',\ne)$ are user-independent constraints.
\end{proof}

\section{${\bodE}$ and ${\sodU}$ constraints}
\label{sec:vapep-bodE-sodU}

In this section, we consider {\vapep}$\ev{\bodE, \sodU}$.
We provide a construction that enables us to reduce an instance $\cI$ of {\vapep}$\ev{\bodE, \sodU}$ with $k$ resources to an instance $\cI'$ of {\vwsp} with only user-independent constraints containing at most $k(k-1)$ steps.
% If $\cI$ has $k$ resources, then $\cI'$ can have at most $k(k-1)$ steps.
Moreover, the construction yields a {\sc Valued WSP} instance in which the weight of an optimal plan is equal to the weight of an optimal solution for the {\sc Valued APEP} instance.
Finally, we show that it is possible to construct the optimal solution for the {\sc Valued APEP} instance from an optimal plan for the {\sc Valued WSP} instance.

\sloppy Let $\cI = (R, U, C, \omega)$ be an instance of {\vapep}$\ev{\bodE, \sodU}$.
(The weights of these types of constraints are defined by equations~\eqref{w_c:BoDE} and~\eqref{w_c:SoDU} in Section~\ref{sec:apep-constraints}.)
Let $R = \{r_1,\ldots,r_k\}$.
Then we construct an instance $\cI' = (S', U', C', \omega')$ of {\vwsp} as follows.

\begin{itemize}
	\item Set $U' = U$.
	\item For every $i \in [k]$, first initialize a set $\Gamma(r_i) = \emptyset$.
	Then, for every $\bodE$ constraint $\BoDE{r_i, r_j} \in C$, add $r_j$ to $\Gamma(r_i)$ and $r_i$ to $\Gamma(r_j)$.
	\item For each resource $r_i \in R,$ we create a set of steps $S' = \bigcup_{i=1}^k S^i$ where
	\[
	 S^i = \begin{cases}
	        \{s^i\} & \text{if $\Gamma(r_i) = \emptyset$},\\
	        \{s_j^i \mid r_j \in \Gamma(r_i)\} & \text{otherwise}.
	       \end{cases}
	\]
% 
% If $\Gamma(r_i) = \emptyset$, then $S^i = \{s^i\}$.
% Otherwise, $S^i = \{s_j^i \mid r_j \in \Gamma(r_i)\}$.
% We denote $\bigcup_{i = 1}^k S^i$ by $S'$.
\end{itemize}
Observe that $|S'| \leq k(k-1)$.
Given a plan $\pi: S' \rightarrow U'$, we write $\pi(S^i)$ to denote $\{\pi(s) \mid s \in S^i\}$.
Let $\Pi$ be the set of all possible complete plans from $S'$ to $U'$.

We define the set of constraints $C'$ and their weights $w'_{c'}: \Pi \rightarrow \bN$ as follows.

\begin{itemize}
	\item For each $c = \BoDE{r_i, r_j} \in C$, we add constraint $c' = (s^i_j, s^j_i, =)$ to $C'$, and define
	\[
        w'_{c'}(\pi) = 
        \begin{cases}
         0 & \text{if $\pi(s^i_j) = \pi(s^j_i)$}, \\
         \ell_c & \text{otherwise}.
        \end{cases}
	\]
% _{c'}(\pi) = 0$ when $\pi(s^i_j) = \pi(s^j_i)$.
% 	Otherwise, if $\pi(s^i_j) \neq \pi(s^j_i)$, then set $w'_{c'}(\pi) = \ell_c$ (see Equation (\ref{w_c:BoDE})).
	Note that $c'$ is user-independent.
	
	\item For each $c = \SoDU{r_i, r_j} \in C$, we add constraint $c' = (S^i, S^j, \emptyset)$, where $c'$ is satisfied iff $\pi(S^i) \cap \pi(S^j) = \emptyset$.
% 	The semantics of the constraint $(S^i, S^j, \emptyset)$ is the following.
	%We say that $(S^i, S^j, \emptyset)$ 
% 	The constraint $c'$ is satisfied by $\pi$ if and only if $\pi(S^i) \cap \pi(S^j) = \emptyset$. 
% 	Recall that $w_c(A)$ is defined as a nondecreasing function $f_c(t)$  such that $t=|A(r_i) \cap A(r_j)|$ (see Equation (\ref{w_c:SoDU})).
	Then define $w'_{c'}(\pi)= f_{c}(|\pi(S^i) \cap \pi(S^j)|)$, where $f_c$ is the function associated with the weighted constraint $(r_i,r_j,\updownarrow,\forall)$.
	%Clearly, if $\pi(S^i) \cap \pi(S^j) = \emptyset$, then $w_{c'}(\pi) = 0$.
	Observe that $(S^i, S^j, \emptyset)$ is a user-independent constraint.
%	and otherwise dependent on $|\pi(S^i) \cap \pi(S^j)|$.
%	Cost of violation of $c'$ is also determined by the function $g_{c'}$.
	
	\item Let $C'$ denote the set of all constraints in $\cI'$ and define $$w'_{C'}(\pi) = \sum\limits_{c' \in C'} w'_{c'}(\pi).$$
\end{itemize}

We then define authorization weight function $\omega': U' \times 2^{S'} \rightarrow \bN$ as follows.
Initialize $\omega'(u, \emptyset) = 0$.
We set $\omega'(u, S^i) = \omega(u, \{r_i\})$.
For a subset $T \subseteq S'$, let $R_T = \{r_i \in R \mid \ T \cap S^i \neq \emptyset\}$. 
We set $\omega'(u, T) = \omega(u, R_T)$.
Given a plan $\pi: S' \rightarrow U'$, we denote $\sum_{u \in U'} \omega'(u, \pi^{-1}(u))$ by $\Omega'(\pi)$.
Finally, define the weight of $\pi$ to be $\Omega'(\pi) + w'_{C'}(\pi)$.
See Example~\ref{fig:bodE-sodU-example} for an illustration.

Based on the construction described above, we have the following lemma:

\begin{lemma}
\label{lemma:bodE-sodU-cost-optimal-preserved}
Let $\cI$ be a {\vapep}$\langle \bodE,\sodU \rangle$ instance and $\cI'$ be the {\vwsp} instance obtained from $\cI$ using the construction above.
Then $\OPT(\cI) = \OPT(\cI')$, where $\OPT(\cI)$ and $\OPT(\cI')$ denote the weights of optimal solutions of $\cI$ and $\cI'$ respectively.
Furthermore, given an optimal plan for $\cI'$, we can construct an optimal authorization relation for $\cI$ in polynomial time.
\end{lemma}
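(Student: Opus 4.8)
The plan is to establish the equality $\OPT(\cI) = \OPT(\cI')$ by two inequalities, constructing explicit solutions in both directions, and then to read off the polynomial-time reconstruction from the ``$\le$'' direction. Throughout I will rely on the fact that all constraints in $\cI'$ are user-independent (already noted in the construction) together with the correspondence between the resources $r_i$ of $\cI$ and the ``blocks'' $S^i$ of steps in $\cI'$.

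\textbf{Direction $\OPT(\cI') \le \OPT(\cI)$.} Start from an optimal authorization relation $A$ for $\cI$. By Lemma~\ref{lemma:only-sod-universal}-style reasoning (or directly: $\sodU$ and $\bodE$ weights are monotone under shrinking $A(r)$ for $\sodU$, and $\bodE$ only needs nonempty intersection), I first argue we may assume $A$ has small, structured fibres; but more to the point, I will define a plan $\pi$ on $S'$ by setting, for each $i$, the users $\pi(s^i_j)$ (for $r_j\in\Gamma(r_i)$) to be chosen from $A(r_i)$ so that the $\bodE$ pairs are matched when $A$ satisfies them, i.e.\ if $A(r_i)\cap A(r_j)\ne\emptyset$ pick a common user for $s^i_j$ and $s^j_i$. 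For a resource $r_i$ with $\Gamma(r_i)=\emptyset$ we just pick any user of $A(r_i)$ for the single step $s^i$. Then $\pi(S^i)\subseteq A(r_i)$, so $|\pi(S^i)\cap\pi(S^j)|\le|A(r_i)\cap A(r_j)|$ and monotonicity of $f_c$ gives $w'_{c'}(\pi)\le w_c(A)$ for $\sodU$ constraints; the $\bodE$ constraint weights are preserved exactly by the matching choice; and $\omega'(u,\pi^{-1}(u))\le\omega(u,A(u))$ because $\pi^{-1}(u)\subseteq\{s\in S^i : r_i\in A(u)\}$ so $R_{\pi^{-1}(u)}\subseteq A(u)$ and $\omega$ is monotone (condition~\eqref{moncond}). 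Summing gives the claimed inequality, and this construction is clearly polynomial, which also establishes the ``furthermore'' part once we apply the same map to an optimal $\pi$ in the reverse direction.

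\textbf{Direction $\OPT(\cI) \le \OPT(\cI')$.} Start from an optimal plan $\pi$ for $\cI'$ and define an authorization relation $A$ by $A(r_i) = \pi(S^i)$, i.e.\ $(u,r_i)\in A$ iff $u=\pi(s)$ for some $s\in S^i$. Then $A$ is complete since each $S^i$ is nonempty. For a $\sodU$ pair, $A(r_i)\cap A(r_j) = \pi(S^i)\cap\pi(S^j)$ so the weights match exactly. For a $\bodE$ pair $(r_i,r_j)$: if $\pi(s^i_j)=\pi(s^j_i)$ then that common user lies in both $\pi(S^i)$ and $\pi(S^j)$, so $A(r_i)\cap A(r_j)\ne\emptyset$ and $w_c(A)=0=w'_{c'}(\pi)$; otherwise $w_c(A)\le\ell_c = w'_{c'}(\pi)$. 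For the authorization cost, $A(u) = R_{\pi^{-1}(u)}$ by construction (a resource $r_i$ is authorized for $u$ iff $u$ appears somewhere on $S^i$ iff $\pi^{-1}(u)\cap S^i\ne\emptyset$), so $\omega(u,A(u)) = \omega'(u,\pi^{-1}(u))$. Summing, $w(A)\le$ weight of $\pi$, giving $\OPT(\cI)\le\OPT(\cI')$, and this map from plans to authorization relations is the promised polynomial-time reconstruction.

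\textbf{Main obstacle.} The delicate point is the $\bodE$ direction in the first inequality: one has to be careful that picking a \emph{single} common user $u\in A(r_i)\cap A(r_j)$ to place on both $s^i_j$ and $s^j_i$ does not create conflicts with the choices made for other pairs sharing the same block $S^i$ — but since the blocks' steps are distinct ($s^i_j$ for distinct $j$ are different steps) and the $\sodU$ weights only go down when $\pi(S^i)$ shrinks, there is in fact no conflict: any assignment of users of $A(r_i)$ to the steps of $S^i$ is fine, so we are free to satisfy each $\bodE$ pair independently. The other mild subtlety is checking $R_T\subseteq A(u)$ with the right containment direction so that monotonicity~\eqref{moncond} is applied the correct way; I would state this as a one-line sub-claim before assembling the sums. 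Once these two points are pinned down, both inequalities and the reconstruction follow by routine summation over $C$ and over $U$.
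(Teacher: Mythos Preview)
Your proposal is correct and follows essentially the same approach as the paper: the two directions use the same constructions (from an optimal $A$ build $\pi$ by choosing $\pi(s^i_j)\in A(r_i)$ and matching common users on $\bodE$ pairs; from an optimal $\pi$ build $A$ by setting $A(r_i)=\pi(S^i)$), and the per-constraint and per-user weight comparisons are the same. The only cosmetic difference is that the paper treats the directions in the opposite order and is slightly more explicit about what to do when a $\bodE$ pair is violated by $A$; your remark that the polynomial-time reconstruction comes from the $\pi\mapsto A$ map (your second direction) is the right one, though the sentence at the end of your first paragraph pointing to it is a little tangled.
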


\begin{proof}
% We prove this statement by proving two inequalities, $\OPT(\cI) \leq \OPT(\cI')$ and $\OPT(\cI) \geq \OPT(\cI')$.
We first prove that $\OPT(\cI) \leq \OPT(\cI')$.
Let $\pi: S' \rightarrow U'$ be an optimal plan for the instance $\cI'$.
We construct $A$ for the instance $\cI$ as follows.
For all $i \in [k]$, if $u \in \pi(S^i)$, then we put $(u, r_i)$ into $A$.
This completes the construction of $A$ from $\pi$.
Since $\pi$ is complete, $A$ is also complete.
This can be implemented in polynomial time.
Observe that $r_i \in A(u)$ if and only if there exists $s \in S^i$ such that $s \in \pi^{-1}(u)$.
Equivalently, suppose that $T = \pi^{-1}(u)$.
Then, $R_T = A(u)$.
It implies that $\omega'(u, \pi^{-1}(u)) = \omega(u, A(u))$.
Hence, we have
\[
\Omega'(\pi) = \sum\limits_{u \in U'} \omega'(u, \pi^{-1}(u)) = \sum\limits_{u \in U} \omega(u, A(u)) = \Omega(A)
\]
% \begin{align*}
% \Omega'(\pi) &= \sum\limits_{u \in U'} \omega'(u, \pi^{-1}(u)) \\
%  		&= \sum\limits_{u \in U} \omega(u, A(u)) = \Omega(A)
% \end{align*}

We now prove that $w_C(A) \leq w'_{C'}(\pi)$.
Consider a $\bodE$ constraint $c = \BoDE{r_i, r_j} \in C$.
Then, $r_j \in \Gamma(r_i)$, and $r_i \in \Gamma(r_j)$, and the corresponding constraint in $\cI'$ is $c' = (s^i_j, s^j_i, =)$.
By construction if $\pi(s^i_j) = \pi(s^j_i)$, then there exists $u \in A(r_i) \cap A(r_j)$.
Hence, $w_c(A) \leq w'_{c'}(\pi)$.
Now consider an $\sodU$ constraint $c = \SoDU{r_i, r_j}$.
The corresponding constraint in $\cI'$ is $c' = (S^i, S^j, \emptyset)$.
Observe that by construction, if $\pi(S^i) \cap \pi(S^j) = \emptyset$, then $A(r_i) \cap A(r_j) = \emptyset$.
Otherwise, if $|\pi(S^i) \cap \pi(S^j)| = t > 0$, then by construction $|A(r_i) \cap A(r_j)| = t$.
Hence, $w'_{c'}(\pi) = w_c(A)$.
We obtain an authorization relation $A$ in polynomial time such that $w_{C}(A) + \Omega(A) \leq w'_{C'}(\pi) + \Omega'(\pi)= \OPT(\cI')$.
Therefore, $\OPT(\cI) \leq \OPT(\cI')$.

To complete the proof we prove that $\OPT(\cI) \geq \OPT(\cI')$.
Let $A$ be an optimal authorization relation for $\cI$.
We construct $\pi: S' \rightarrow U'$ as follows.
If $\Gamma(r_i) = \emptyset$, we choose an arbitrary $u \in A(r_i)$ and set $\pi(s^i) = u$.
Otherwise, $\Gamma(r_i) \neq \emptyset$, and two cases may arise.
\begin{itemize}
	\item For $r_j \in \Gamma(r_i)$, $\BoDE{r_i, r_j}$ is satisfied by $A$. Then, we choose an arbitrary $u \in A(r_i) \cap A(r_j)$ and set $\pi(s^i_j) = \pi(s^j_i) = u$.
	\item For $r_j \in \Gamma(r_i)$, $\BoDE{r_i, r_j}$ is not satisfied by $A$. Then, we just choose arbitrary $u \in A(r_i), v \in A(r_j)$ and set $\pi(s^i_j) = u$, and $\pi(s^j_i) = v$.
\end{itemize}
This completes the construction of $\pi$. Note that $\pi$ is complete.

Let $T = \pi^{-1}(u)$.
Observe that by construction, if $u \in \pi(S^i)$, then $u \in A(r_i)$.
Equivalently, if there exists $i \in [k]$ such that $\pi^{-1}(u) \cap S^i \neq \emptyset$, then $r_i \in A(u)$.
Therefore, $R_T \subseteq A(u)$.
Using the monotonicity property of $\omega$, we have that $\omega(u, R_T) \leq \omega(u, A(u))$.
This means that $\omega'(u, \pi^{-1}(u)) = \omega(u, R_T) \leq \omega(u, A(u))$.
Therefore, we have the following:
\[
\Omega'(\pi) = \sum\limits_{u \in U'} \omega'(u, \pi^{-1}(u)) \leq \sum\limits_{u \in U}  \omega(u, A(u)) = \Omega(A) 
\]
% \begin{align*}
% \Omega'(\pi) &= \sum\limits_{u \in U'} \omega'(u, \pi^{-1}(u)) \\
%                  &\leq \sum\limits_{u \in U}  \omega(u, A(u)) = \Omega(A) 
% \end{align*}
Hence, $\Omega'(\pi) \leq \Omega(A)$.

Consider a $\bodE$ constraint $c = \BoDE{r_i, r_j} \in C$.
By construction, $c$ is satisfied by $A$ if and only if $c' = (s^i_j, s^j_i, =)$ is satisfied by $\pi$.
Hence, $w_{c'}(\pi) = w_c(A)$.
On the other hand, consider an $\sodU$ constraint $c = \SoDU{r_i, r_j} \in C$. If $c$ is satisfied by $A$, then by construction $c' = (S^i, S^j, \emptyset)$ is also satisfied by $A$.
Finally, if $c$ is violated by $A$, then let $t = |A(r_i) \cap A(r_j)| > 0$.
By construction, $\pi(S^i) \cap \pi(S^j) \subseteq A(r_i) \cap A(r_j)$.
Hence, $w'_{c'}(\pi) \leq w_c(A)$, implying $w'_{C'}(\pi) \leq w_C(A)$.
Therefore, $\OPT(\cI') \leq \OPT(\cI)$.
% 
% Since $\OPT(\cI) \leq \OPT(\cI')$ and $\OPT(\cI') \leq \OPT(\cI)$, we have $\OPT(\cI) = \OPT(\cI')$.
% This completes the proof of the lemma.
\end{proof}

\begingroup
\begin{figure*}[t]
\centering
	\begin{subfigure}[b]{0.4\linewidth}
		\centering
		\includegraphics[scale=.25]{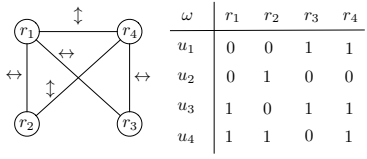}

		\caption{\vapep{} instance}
		\label{fig:sec-6-APEP}
	\end{subfigure}
	\hfill
	\begin{subfigure}[b]{0.55\linewidth}
		\centering
		\includegraphics[scale=.25]{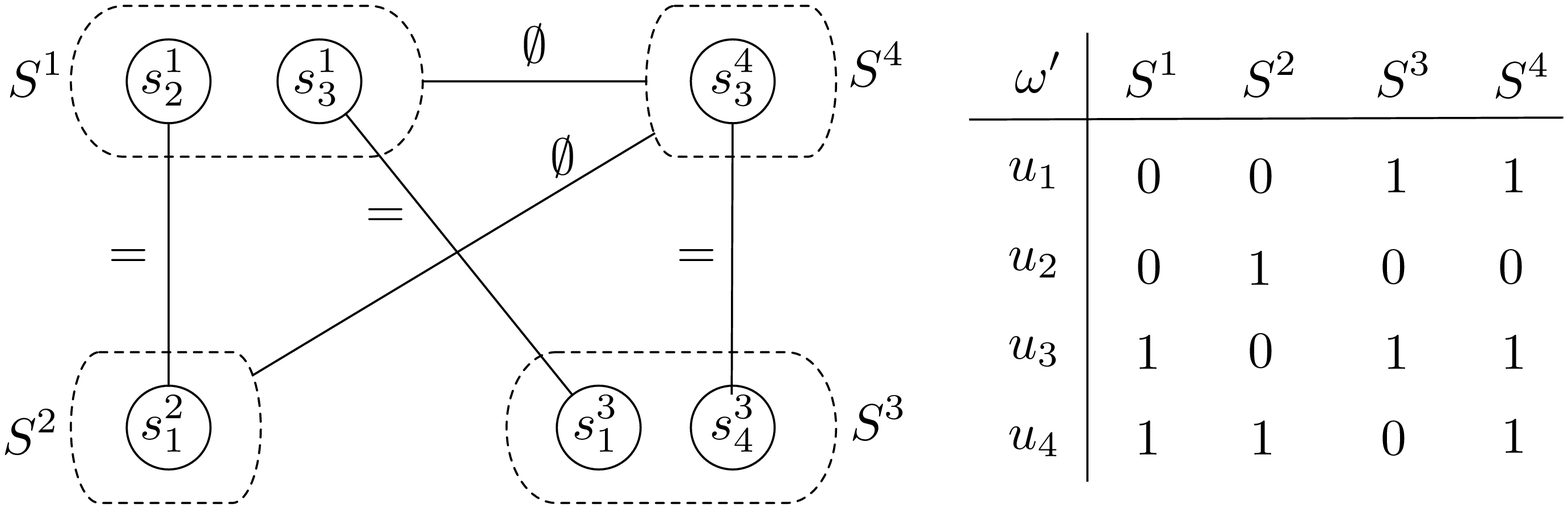}
		\caption{\vapep{} instance}
		\label{fig:sec-6-WSP}
	\end{subfigure}
	\caption{An example illustrating the construction. Note that $R = \{r_1, r_2,r_3,r_4\}$ and $S' = S^1 \cup S^2 \cup S^3 \cup S^4$.} %The set of weighted constraints in {\vapep} is $C = \{c_1,c_2, c_3, c_4, c_5\}$ such that $c_1 = \BoDE{r_1,r_2}, c_2 = \BoDE{r_1, r_3}, c_3=\BoDE{r_3, r_4}, c_4 = \SoDU{r_1,r_4}$, and $c_5 = \SoDU{r_2, r_4}$. We will describe their weight specification during illustration.}
\label{fig:bodE-sodU-example}
\end{figure*}
\endgroup

{\bf Example~\ref{fig:bodE-sodU-example}.}
We illustrate the construction of {\vwsp} instance from {\vapep} instance, and the proof of Lemma~\ref{lemma:bodE-sodU-cost-optimal-preserved} using  Figure~\ref{fig:bodE-sodU-example}.
In addition, given an optimal solution for the corresponding {\vwsp} instance, we illustrate how to construct an optimal solution of the {\vapep} instance as described in Lemma~\ref{lemma:bodE-sodU-cost-optimal-preserved}.
As per the figure, there are three ${\bodE}$ constraints ($c_1 = \BoDE{r_1,r_2}$, $c_2 = \BoDE{r_1, r_3}$, and $c_3=\BoDE{r_3, r_4}$) and two $\sodU$ constraints ($c_4 = \SoDU{r_1,r_4}$, and $c_5 = \SoDU{r_2, r_4}$).
We define their weights as follows.
 % in this example illustrated in Figure~\ref{fig:bodE-sodU-example}.
For an authorization relation $A \subseteq U \times R$ and $i\in \{1,2,3\},$ we set $w_{c_i}(A) = 0$ if $A$ satisfies $c_i$, and $w_{c_i}(A) = 1,$ otherwise.
%Similarly, for both $c_2$ and $c_3$, when an authorization relation $A$ satisfies $c_2$ (or respectively $c_3$), $w_{c_2}(A) = 0$ (or respectively $w_{c_3}(A) = 0$).
%Otherwise when $A$ violates $c_2$ (or $c_3$), $w_{c_2}(A) = 1$ (respectively $w_{c_3}(A) = 1$).
Let $w_{c_4}(A) = 0$ if $A$ satisfies $c_4$;
otherwise, $w_{c_4}(A) = |A(r_1) \cap A(r_4)|$. 
Similarly, $w_{c_5}(A) = 0$ if $A$ satisfies $c_5$;
otherwise, $w_{c_5}(A) = |A(r_2) \cap A(r_4)|$.

For every $\emptyset  \neq T \subseteq R$ and $u \in U,$  let $\omega(u, T) = \sum_{r \in T} \omega(u, \{r\}).$

Observe that in this example there is no authorization relation $A$ such that $w_C(A) + \Omega(A) = 0$.
It means that if we look for an authorization relation $A$ such that $w_C(A) = 0$, then we will have $\Omega(A) > 0$.
Consider an authorization relation $A^*$ such that $A^*(r_1) = \{u_1, u_2\}, A^*(r_2) = \{u_1, u_3\}, A^*(r_3) = u_2$, and $A^*(r_4) = \{u_4\}$.
Observe that $w_C(A^*) = 0$ but $\Omega(A^*) = 1$.
Conversely, if we look for an authorization relation $A$ such that $\Omega(A) = 0$, then we will have $w_C(A) > 0$.

Consider the {\vwsp} instance constructed in this example.
Based on the construction, $c_1' = (s^1_2, s^2_1, =), c_2' = (s^1_3, s^3_1, =), c_3' = (s^3_4, s^4_3, =), c_4' = (S^1, S^4, \emptyset)$, and $c_5' = (S^2, S^4, \emptyset)$.
Observe that for a given plan $\pi: S' \rightarrow U'$, we have the following:
\begin{itemize}
	\item $w_{c_1'}(\pi) = 0$ if $\pi$ satisfies $c_1'$ and $w_{c_1'}(\pi) = 1$ otherwise,
	\item $w_{c_2'}(\pi) = 0$ if $\pi$ satisfies $c_2'$ and $w_{c_2'}(\pi) = 1$ otherwise,
	\item $w_{c_3'}(\pi) = 0$ if $\pi$ satisfies $c_3'$ and $w_{c_3'}(\pi) = 1$ otherwise,
	\item $w_{c_4'}(\pi) = 0$ if $\pi$ satisfies $c_4'$ and $w_{c_4'}(\pi) = |\pi(S^1) \cap \pi(S^4)|$ otherwise, and
	\item $w_{c_5'}(\pi) = 0$ if $\pi$ satisfies $c_5'$ and $w_{c_5'}(\pi) = |\pi(S^2) \cap \pi(S^4)|$ otherwise.
\end{itemize}

Consider an optimal plan $\pi: S' \rightarrow U'$ defined as follows:
%\begin{itemize}
	%\item 
	$\pi(s^1_2) = u_1$,
	 $\pi(s^1_3) = u_2$,
	 $\pi(s^2_1) = u_1$,
	 $\pi(s^3_1) = u_2$,
	 $\pi(s^3_4) = u_4$, and
	 $\pi(s^4_3) = u_4$.
%\end{itemize}
%
Observe that $w'_{C'}(\pi) = 0$ as all constraints $c_1',c_2',c_3',c_4'$, and $c_5'$ are satisfied. 
Then, $\Omega'(u_1,\{s^1_2, s^2_1\}) = \Omega(u_1, \{r_1,r_2\}) = 0$, $\Omega'(u_2,\{s^1_3,s^3_1\}) = \Omega(u_2,\{r_1,r_3\}) = 0$, and $\Omega'(u_4,\{s^3_4,s^4_3\}) = \Omega(u_4,\{r_3, r_4\}) = 1$.
Finally, $\Omega'(u_3,\emptyset) = 0$.
Thus, $\Omega'(\pi) = 1$.

We construct $A$ from $\pi$ as in the first part of the the proof of Lemma \ref{lemma:bodE-sodU-cost-optimal-preserved}:
%\begin{itemize}
	$A(r_1) = \{u_1, u_2\}$,
	$A(r_2) = \{u_1\}$,
	$A(r_3) = \{u_2, u_4\}$, and
	$A(r_4) = \{u_4\}.$
%\end{itemize}
Observe that $w_{C}(A) = 0$ as all constraints $c_1,\ldots,c_5$ are satisfied by $A$ and $\Omega(A) = 1$.

We can now state the main result of this section.

\begin{theorem}
\label{thm:bodE-sodU-FPT}
{\vapep}$\ev{\bodE, \sodU}$ is fixed-parameter tractable and can be solved in $\cO^*(4^{k^2 \log k})$ time.
\end{theorem}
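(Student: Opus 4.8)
The plan is to combine the reduction of Lemma~\ref{lemma:bodE-sodU-cost-optimal-preserved} with the FPT algorithm for {\vwsp} with user-independent constraints from Theorem~\ref{thm1}. Given an instance $\cI = (R, U, C, \omega)$ of {\vapep}$\ev{\bodE, \sodU}$ with $|R| = k$, I would first apply the construction preceding Lemma~\ref{lemma:bodE-sodU-cost-optimal-preserved} to obtain, in polynomial time, a {\vwsp} instance $\cI' = (S', U', C', \omega')$ with $|S'| \le k(k-1)$ steps. By the remarks in that construction, every constraint in $C'$ (both the equality constraints $(s^i_j, s^j_i, =)$ arising from $\bodE$ constraints and the set-disjointness constraints $(S^i, S^j, \emptyset)$ arising from $\sodU$ constraints) is user-independent, so $\cI'$ is an instance of {\vwsp} with user-independent constraints.

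Next I would invoke Theorem~\ref{thm1}, which solves {\vwsp} with user-independent constraints in time $\cO^*(2^{k'\log k'})$ where $k'$ is the number of steps. Here $k' = |S'| \le k(k-1) < k^2$, so the running time is $\cO^*(2^{k^2 \log(k^2)}) = \cO^*(2^{2k^2 \log k}) = \cO^*(4^{k^2 \log k})$, as claimed. This yields an optimal plan $\pi$ for $\cI'$, and by Lemma~\ref{lemma:bodE-sodU-cost-optimal-preserved} we have $\OPT(\cI) = \OPT(\cI')$; moreover the same lemma tells us how to construct an optimal authorization relation $A$ for $\cI$ from $\pi$ in polynomial time. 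Hence the overall procedure solves $\cI$ within the stated time bound, establishing that {\vapep}$\ev{\bodE, \sodU}$ is fixed-parameter tractable parameterized by $k$.

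There is essentially no hard step remaining: all the substantive work — correctness of the construction, equality of the optima, and polynomial-time recoverability of the {\vapep} solution — has already been discharged in Lemma~\ref{lemma:bodE-sodU-cost-optimal-preserved}, and the FPT running time of {\vwsp} is Theorem~\ref{thm1}. The only point that needs a moment of care is the arithmetic bounding $2^{|S'|\log|S'|}$ by $4^{k^2\log k}$, using $|S'| \le k(k-1) \le k^2$ and $\log(k^2) = 2\log k$ (for $k \ge 2$; the small cases are trivial). I would simply present the proof as a short chaining of these two results together with this elementary estimate.

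\begin{proof}
Let $\cI = (R, U, C, \omega)$ be an instance of {\vapep}$\ev{\bodE, \sodU}$ with $|R| = k$. Using the construction described above, we obtain in polynomial time an instance $\cI' = (S', U', C', \omega')$ of {\vwsp} with $|S'| \le k(k-1)$. As observed in the construction, every constraint in $C'$ is user-independent. By Theorem~\ref{thm1}, $\cI'$ can be solved in time $\cO^*(2^{|S'|\log|S'|})$. Since $|S'| \le k(k-1) \le k^2$, we have $2^{|S'|\log|S'|} \le 2^{k^2 \log(k^2)} = 2^{2k^2\log k} = 4^{k^2\log k}$, so $\cI'$ can be solved in time $\cO^*(4^{k^2\log k})$. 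By Lemma~\ref{lemma:bodE-sodU-cost-optimal-preserved}, $\OPT(\cI) = \OPT(\cI')$ and, given an optimal plan for $\cI'$, an optimal authorization relation for $\cI$ can be constructed in polynomial time. Hence {\vapep}$\ev{\bodE, \sodU}$ is fixed-parameter tractable and can be solved in $\cO^*(4^{k^2\log k})$ time.
\end{proof}
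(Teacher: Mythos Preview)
Your proposal is correct and follows essentially the same approach as the paper: reduce $\cI$ to the {\vwsp} instance $\cI'$ via the construction, invoke Theorem~\ref{thm1} to solve $\cI'$, and use Lemma~\ref{lemma:bodE-sodU-cost-optimal-preserved} to recover an optimal solution for $\cI$. The paper's proof is slightly terser about the running-time arithmetic, but your explicit bound $2^{|S'|\log|S'|} \le 2^{k^2\log(k^2)} = 4^{k^2\log k}$ is exactly the estimate it relies on.
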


\begin{proof}
Let $\cI = (R, U, C, \omega)$ be an instance of {\vapep}$\ev{\bodE,\sodU}$.
We construct an instance $\cI' = (S', U, C', \omega')$ of {\vwsp} in polynomial time.
We then invoke Theorem~\ref{thm1} to obtain an optimal plan $\pi: S' \rightarrow U'$.
Finally, we invoke Lemma~\ref{lemma:bodE-sodU-cost-optimal-preserved} to construct an optimal authorization relation $A$ for $\cI$ such that  $\Omega(A) + w_{C}(A) = \OPT(\cI')$.
% Therefore, $w(A) = w'(\pi)$.
The algorithm described in Theorem~\ref{thm1} runs in $\cO^*(2^{|S'|\log |S'|})$ time.
Since $|S'| \leq k(k-1)$, the running time of this algorithm to solve {\vapep}$\ev{\bodE, \sodU}$ is $\cO^*(4^{k^2 \log k})$.
\end{proof}

\section{Using Valued APEP to address resiliency in workflows}
\label{sec:resilient-wsp}

Resiliency, in the context of access control, is a generic term for the ability of an organization to continue to conduct business operations even when some authorized users are unavailable~\cite{LiWaTr09}. Resiliency is particularly interesting when an organization specifies authorization policies and separation of duty constraints, as is common in workflow systems, as separation of duty constraints become harder to satisfy when fewer (authorized) users are available.

Early work by Wang and Li showed that determining whether a workflow specification is resilient is a hard problem~\cite{LiWaTr09}. More recent work has established the precise complexity of determining static resiliency~\cite{Fong19}, and that the problem is FPT, provided all constraints in the workflow specification are user-independent~\cite{CramptonGKW17}. 

We introduce the idea of an extended plan for a workflow specification and define resiliency in the context of an extended plan. 
% Resiliency of an extended plan means the workflow specification itself to be resilient, only that an extended plan allocating specific (multiple users) to workflow steps remains valid in the absence of some users.
We then explain how Valued APEP can be used to compute extended plans of minimal cost. In Section~\ref{sec:formulations}, we provide two MIP formulations for Valued APEP. Then in Section~\ref{sec:experiments} we discuss our experimental framework and results making use of these two formulations.

	Suppose we are given a workflow specification (defined by a set of workflow steps $S$, a set of users $U$, an authorization relation $\hat{A} \subseteq U \times S$ and a set of constraints $C$) and an integer $\tau \ge 0$.
	We call a function $\Pi : S \rightarrow 2^U$ an \emph{extended plan}; we say $\Pi$ is \emph{valid} if there exists a valid plan $\pi : S \rightarrow U$ such that $\pi(s) \in \Pi(s)$ for all $s \in S$ and $\pi$ is valid. 
	We say $\Pi$ is \emph{$\tau$-resilient} if for any subset of $\tau$ users $T \subseteq U$, there exists a valid plan $\pi' : S \rightarrow (U \setminus T)$ such that $\pi'(s) \in \Pi(s)$ for all $s \in S$.
	Wang and Li introduced an alternative notion of resiliency in workflows~\cite{WaLi10}, where a workflow specification is said to be \emph{statically $t$-resilient} if for all $U' \subseteq U$ such that $|U| - |U'| \le t$, $(S,U',C,\hat{A}')$, where $\hat{A}' = \hat{A} \cap (S \times U')$, is satisfiable.
	
	The two notions of resiliency are rather different.
	Our notion requires an extended plan to be resilient, so that having committed to an extended plan for a workflow we know the instance can complete even if $\tau$ users are unavailable.
	In contrast, Wang and Li require that the workflow specification itself is resilient.
	Crampton, Gutin, Karapetyan and Watrigant showed that determining whether a workflow is statically $t$-resilient is FPT~\cite{CramptonGKW17} for WSP with UI constraints only.
	
	It is not obvious that the methods used by Crampton \emph{et al.}\ can be adapted to determine whether there exists a $\tau$-resilient extended plan.
	%However, note that determining whether an extended plan is $\tau$-resilient is an instance of WSP, since it amounts to extracting a plan given an authorization relation (in this case defined by interpreting $\Pi : S \rightarrow 2^U$ as a relation) \DK{I think this is incorrect; establishing if a plan is $\tau$-resilient is much harder.}. 
	Nor is it obvious whether the problem of deciding if there exists a $\tau$-resilient extended plan can be framed as an instance of APEP\@.
	%It seems to be something worth considering though\dots
	
	APEP, however, can be used to produce an extended plan that is $\tau$-resilient.
	%Determining whether there exists an extended plan in which $|\Pi(s)| > t$ (for all $s \in S$) is an instance of APEP though.
	%And finding an extended plan of miniumum cost in which $|\Pi(s)| > t$ is an instance of \vapep{}.
	Moreover, \vapep{} can be used to solve the softer problem of finding an extended plan that aims to be $\tau$-resilient (but may not be) and that also minimises the number of users involved.
	%So we can use (valued) APEP to find extended plans that are $\tau$-resilient or close to be $\tau$-resilient.

To generate a $\tau$-resilient extended plan for a WSP instance $(S, U, C, \hat{A})$ with SoD constraints, we can produce the following APEP instance $(R', U', C', \hat{A}')$:
\begin{itemize}
	\item
	Let $R' = S$, $U' = U$ and $\hat{A}' = \hat{A}$.
	
	\item
	Let $C' = \emptyset$.
	%For every $c \in C$, where $c$ is a BoD, add a corresponding BoD$_U$ to $C'$.
	For every $c \in C$, add a corresponding SoD$_U$ to $C'$ (recall that we consider WSP with SoD constraints only).
	
	\item
	Add Cardinality-Lower-Bound constraints $(r, \ge, \tau + 1)$ for every $r \in R$.
\end{itemize}
Any authorization relation $A$ that satisfies such an APEP instance is a $\tau$-resilient extended plan in the original WSP instance.
(Note that it is sufficient for an extended plan $\Pi$ to be a solution of an APEP instance, however it is not necessary; some $\tau$-resilient extended plans may not be solutions for an APEP instance.)

The requirement to have at least $\tau + 1$ users assigned to each resource may lead to solutions that involve too many users.
In practice, we may want to keep the number of users involved in $\Pi$ as small as possible.
Also, where an instance is not $\tau$-resilient, we may want to accept solutions that are not completely $\tau$-resilient, i.e.\ solutions where excluding $\tau$ users may render the extended plan invalid to some (acceptably limited) extent.

To meet the above requirements, we can use the \vapep{} to model $\tau$-resiliency in WSP.
Let $p_\text{SoD}$ and $p_\text{Card}$ be penalties for violation of the corresponding constraints.
Let $p_A$ be a penalty for assigning a user to a resource to which this user is not authorized.
Compose a APEP instance $(R, U, C, \hat{A})$ as described above and replace each constraint with the following weighted constraints:
\begin{itemize}	
	\item
	For every SoD$_U$ constraint $c = (r_1, r_2, \updownarrow, \forall)$, let $w_c(A) = p_\text{SoD} \cdot |A(r_1) \cap A(r_2)|$; i.e., there is a $p_\text{SoD}$ penalty for every user assigned to both resources in the scope.
	
	\item
	For every cardinality-lower-bound constraint $c = (r, \ge, \tau + 1)$, let $w_c = \max \{ 0,\, p_\text{Card} \cdot (\tau + 1 - |A(r)|) \}$. 
	
	\item
	Finally, we add a constraint $c$, which we call \emph{User Count Constraint}, with the scope $R$ such that $w_c = f_\Pi(|A(R)|)$, where $f_\Pi(\cdot)$ is a monotonically growing function.
	Specifically, we use $w_c = |A(R)|^2$.
\end{itemize}
Also let $\omega(u, T) = p_A \cdot \ell$, where $\ell$ is the number of resources $r \in T$ such that $(r, u) \notin \hat{A}$.
In other words, there is a $p_A$ penalty for each unauthorized assignment of a user to a resource.

Assuming $p_A$, $p_\text{SoD}$ and $p_\text{Card}$ are positive numbers and the original APEP instance is satisfiable, a solution to this \vapep{} instance will be a $\tau$-resilient extended plan with at least $\tau+1$ users assigned to each resource, with all the SoD$_U$ constraints and authorizations satisfied and with the number of users involved in the extended plan minimized.
%However, if any of $p_\text{SoD}$ and $p_\text{Card}$ are small relative to the values of $f_\Pi$ then the solution may correspond to an extended plan that is not $\tau$-resilient, however it is as close as possible to being $\tau$-resilient.

\section{Mixed Integer Formulations of \vapep{}}
\label{sec:formulations}

While it is common to implement bespoke algorithms to exploit the FPT properties of a problem, it was noted recently that off-the-shelf solvers may also be efficient on such problems given appropriate formulations~\cite{KarapetyanPGG19,KaGu2021}.
In this section we give two mixed integer programming (MIP) formulations of the \vapep{}\@.
The formulation given in Section~\ref{sec:naive} is a straightforward interpretation of the problem; it uses binary variables to define an assignment of users to resources.
The formulation given in Section~\ref{sec:up}, however, makes use of the concept of user profiles, used earlier to prove FPT results.
While both formulations are generic enough to support any \vapep{} constraints, we focus on the constraints used to model $\tau$-resiliency of WSP extended plans, see Section~\ref{sec:resilient-wsp}.

\subsection{Naive formulation}
\label{sec:naive}

The \emph{Naive} formulation of a \vapep{} instance $(R, U, C, \omega)$ is based on binary variables $x_{r,u}$ linking resources to users; $x_{r,u} = 1$ if and only if user $u$ is assigned to resource $r$.

The core of the formulation is as follows:
\begin{align}
& \text{minimize } \sum_{c \in C} p_c + p_A \cdot \sum_{(r, u) \notin \hat{A}} x_{r,u} \\
& \text{subject to} \nonumber \\
& x_{r,u} \in \{ 0, 1 \} 
	&& \forall r \in R,\ \forall u \in U,\\
& p_c \in [0, \infty]
	&& \forall c \in C.
\end{align}
The encodings of the \vapep{} constraints linking the solution to variables $p_c$ are discussed below.
%Some of them may not have to be declared as MIP variables; instead, corresponding expressions can be substituted directly into the objective function.

The User Count constraint $c$ is encoded as follows:
\begin{align}
& p_c = f_\Pi(z), \\
& z = \sum_{u \in U} y_u, \\
& y_u \ge x_{r,u} 
	&& \forall r \in R,\ \forall u \in U,\\
& y_u \in [0, 1]
	&& \forall u \in U,\\
& z \in [0, n].
\end{align}
Variable $z$ is introduced to count the number $|A(R)|$ of users generated by the solution.
The formulation depends on the function $f_\Pi$; for $f_\Pi(z) = z^2$, we use the following encoding as a discretization of a parabola:
\begin{align}
& p_c \ge f_i(z)
	&& \forall i \in \{1, 2, \ldots, n - 1 \},
\label{eq:parabola}
\end{align}
{where $f_i(z) = (2i + 1) z - (i + 1) i$.
An illustration of how it enforces $p_c \ge z^2$ is given in Figure~\ref{fig:parabola}.
Note that $z^2 \ge f_i(z)$ for every integer $z$ and $i$.}

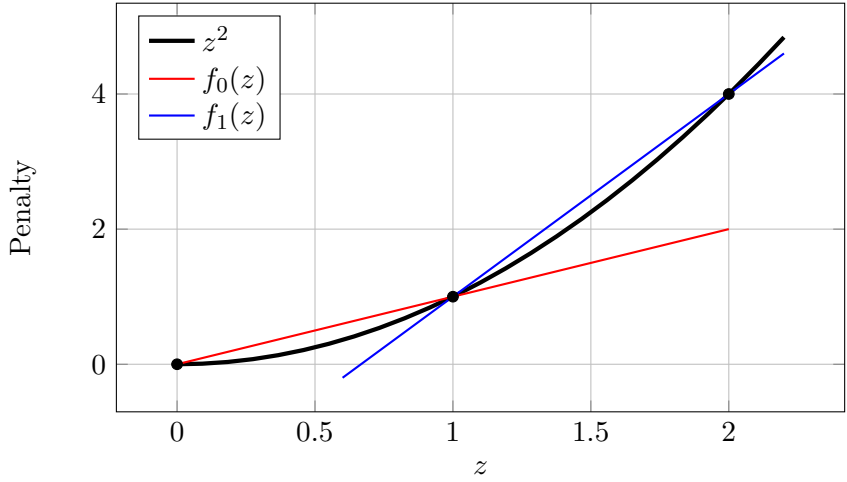
\begin{figure}[htb]
\begin{tikzpicture}
\begin{axis}[
    xlabel={$z$},
    ylabel={Penalty},
    legend pos={north west},
    grid=major,
    height=7cm,
    width=0.7\linewidth,
    legend cell align={left},
    %axis equal
]
    \addplot[black, ultra thick, domain=0:2.2] {x^2};
    \addlegendentry{$z^2$}
    
    \addplot[red, thick, domain=0:2] {x};
    \addlegendentry{$f_0(z)$}

    \addplot[blue, thick, domain=0.6:2.2] {3 * x - 2};
    \addlegendentry{$f_1(z)$}

    %\addplot[green, thick, domain=1.8:3.2] {5 * x - 6};
    %\addlegendentry{$y = f_2(z)$}

    \addplot[black, only marks, domain=0:2, samples=3] {x^2};
\end{axis}
\end{tikzpicture}

\caption{Illustration of how equations (\ref{eq:parabola}) enforce $p_c \ge z^2$.}
\label{fig:parabola}
\end{figure}

Each Cardinality-Lower-Bound constraint $c = (r, \ge, \tau+1)$ is encoded as follows:
\begin{align}
& p_c \ge p_\text{Card} \cdot \left[ (\tau + 1) - \sum_{u \in U} x_{r,u}\right].
\end{align}

%For a BoD constraint $c$ with scope $\{ r_1, r_2 \}$,
%\begin{align}
%& p_c = p_\text{BoD} \cdot z, \\
%& z \ge \sum_{u \in U} y^{(i)}_u
%	&& \forall i \in \{ 1, 2 \},\\
%& y^{(1)}_u >= x_{r_1,u} - x_{r_2,u}
%	&& \forall u \in U, \\
%& y^{(2)}_u >= x_{r_2,u} - x_{r_1,u}
%	&& \forall u \in U,\\
%& y^{(i)}_u \in \{0, 1\} 
%	&& \forall u \in U,\ \forall i \in \{1, 2\},\\
%& z \in \{ 0, 1, \dots, n \}.
%\end{align}

Each SoD constraint $c = (r_1, r_2, \updownarrow, \forall)$ is encoded as follows:
\begin{align}
& p_c = p_\text{SoD} \cdot \sum_{u \in U} y_u, \\
& y_u >= x_{r_1,u} + x_{r_2,u} - 1
	&& \forall u \in U,\\
& y_u \in \{0, 1\}
	&& \forall u \in U.
\end{align}

\subsection{User-Profile formulation}
\label{sec:up}

The \emph{User-Profile} (UP) formulation is based on the concept of user profiles making it an FPT-aware formulation.
Let $\mathcal{T}$ be the power set of $R$.
The UP formulation defines a binary variable $x_{T, u}$ for every $T \in \mathcal{T}$ and $u \in U$.
We then require that each user $u$ is assigned exactly one $T \in \mathcal{T}$.

The core of the formulation is as follows:
\begin{align}
& \omit\rlap{minimize $\displaystyle \sum_{c \in C} p_c + p_A \cdot \sum_{T \in \mathcal{T}} \sum_{u \in U} x_{T,u} \cdot | \{ r \in T : (r, u) \notin \hat{A} \} |$} \\
& \text{subject to} \nonumber \\
& x_{T,u} \in \{ 0, 1 \} 
	&& \forall T \in \mathcal{T},\ \forall u \in U, \hspace{8em}\\
& p_c \in [0, \infty]
	&& \forall c \in C.
\end{align}
The encodings of the \vapep{} constraints linking the solution to variables $p_c$ are discussed below.
%Some of them may not have to be declared as MIP variables; instead, corresponding expressions can be substituted directly into the objective function.

The encoding of the User Count constraint $c$ is very similar to its implementation in the Naive formulation:
\begin{align}
& p_c = f_\Pi(z), \\
& z = \sum_{u \in U} y_u, \\
& y_u \ge x_{T,u} 
	&& \forall T \in \mathcal{T} \setminus \{ \emptyset \},\ \forall u \in U,\\
& y_u \in [0, 1]
	&& \forall u \in U,\\
& z \in [0, n].
\end{align}
{Specifically, we use $f_\Pi(z) = z^2$, which we encode as follows (see (\ref{eq:parabola}) for details):}
\begin{align}
& p_c \ge f_i(z)
	&& \forall i \in \{1, 2, \ldots, n - 1 \}.
\end{align}

Each Cardinality-Lower-Bound constraint $c = (r, \ge, \tau+1)$ is encoded as follows:
\begin{align}
& p_c \ge p_\text{Card} \cdot \left[ (\tau + 1) - \sum_{T \in \mathcal{T},\, r \in T}\sum_{u \in U} x_{T,u}\right].
\end{align}

Each SoD constraint $c = (r_1, r_2, \updownarrow, \forall)$ is encoded as follows:
\begin{align}
& p_c = p_\text{SoD} \cdot \sum_{T \in \mathcal{T},\, r_1, r_2 \in T} \sum_{u \in U} x_{T,u}.
\end{align}

\section{Computational experiments}
\label{sec:experiments}

The aims of our computational study are to:
\begin{enumerate}
	\item
	design an instance generator, to support future experimental studies of the \vapep{} and enable fair comparison of \vapep{} solution methods;
	
	\item
	give a new approach to address resiliency in WSP;
	
	\item
	compare the performance of the two formulations discussed in Section~\ref{sec:formulations};
	
	\item
	test if either of the formulations has FPT-like running times, i.e.\ scales polynomially with the instance size given that the small parameter is fixed;
	
	\item
	analyse the structure of optimal solutions and how it depends on the instance generator inputs; and	
	
	\item
	make the instance generator and the solvers based on the two formulations publicly available.
\end{enumerate}

%We discuss our instance generator in Section~\ref{sec:instance-generator} and discuss computational experiments in Section~\ref{sec:experiments}.

\subsection{Benchmark instances}
\label{sec:instance-generator}

For our computational experiments, we built a pseudo-random instance generator for the \vapep{}\@.
It is designed around the concept of $\tau$-resiliency of WSP, see Section~\ref{sec:resilient-wsp}.
The inputs of the instance generator are detailed in Table~\ref{tab:inputs}.

\begin{table}[htb]
\renewcommand{\arraystretch}{1.3}
\begin{tabular}{lp{30em}l}
\toprule
Input & Description & Default value \\
\midrule
$n$ & the number of users, also referred to as the size of the problem & -- \\
$k$ & the number of WSP steps & $\lfloor 0.1n \rfloor$ \\
$\tau$ & the desired degree of $\tau$-resiliency, i.e.\ the number of users that can be excluded from the extended plan & $\lfloor 0.05 n \rfloor$ \\
$\alpha$ & an input enabling us to adjust the balance between the penalties associated with workflow and resiliency violations & 1 \\
\bottomrule
\end{tabular}

%$q_\text{SoD}$ & the number of SoD constraints & $\lfloor 0.05 k (k+1) \rfloor$ \\
%$p_\text{SoD}$ & unit penalty for violation of an SoD constraint & $10 \alpha$ \\
%$p_\text{Card}$ & unit penalty for violation of an Cardinality-Lower-Bound constraint & $10$ \\
%$p_A$ & penalty for violation of an authorization & $\alpha$ \\

\caption{The instance generator inputs and their default values.}
\label{tab:inputs}
\end{table}

%\begin{itemize}
%	\item
%	$k$ -- the number of WSP steps and the number $n$ of users;
%	
%	\item
%	$q_\text{SoD}$ -- the number of SoD constraints;
%	
%	\item
%	$p_\text{SoD}$ and $p_\text{Card}$ -- penalties for violation of the SoD/Cardinality constraints; and
%	
%	\item
%	$p_A$ -- penalty for breaking an authorization.
%\end{itemize}

While inputs $n$ and $k$ define the size of the instance and $\tau$ is an inherent input of $\tau$-resiliency, $\alpha$ is an artifact of the experimental set-up, introduced to control the weight of the workflow constraints and authorizations relative to resiliency.  
Varying the value of $\alpha$ enables us to investigate the effects of emphasizing the importance of satisfying workflow constraints (over resiliency) and vice versa.  
The greater the value of $\alpha$, the greater the penalties for violating workflow constraints and authorizations, meaning that satisfying resiliency becomes correspondingly less significant.

The instance generator first creates a WSP instance $(S, U, C, \hat{A})$ and then converts that instance into a \vapep{} instance $(R, U, C', \omega)$, as described in Section~\ref{sec:resilient-wsp}.
The constraint penalties are set as following: $p_\text{SoD} = 10\alpha$, $p_\text{Card} = 10$ and $p_A = \alpha$.
The WSP instance is generated in the following way:
\begin{enumerate}
	\item
	create steps $S = \{ s_1, s_2, \ldots, s_k \}$ and users $U = \{ u_1, u_2, \ldots, u_n \}$;
		
	\item
	the authorizations are created in the same way as in the WSP instance generator, see~\cite{KarapetyanPGG19}: for each user $u \in U$, select randomly and uniformly from $[1, \lfloor 0.5 \cdot (k - 1) \rfloor]$ the number of steps to which $u$ is authorized, and then randomly select which steps they are authorized to; and
		
	\item
	produces $q_\text{SoD}$ constraints SoD, selecting the scope of each of them randomly and independently (the generator may produce several SoD constraints with the same scope).
\end{enumerate}

%Note that, for any instance produced by the above generator, there exists a number of users $z$ such that for any authorization relation $A$, where $|A(R)| > z$, it holds that $w_C(A) > w_C(A_0)$, where $A_0$ is an empty authorization relation.
%Hence, the optimal solution to such a \vapep{} instance involves at most $z$ users.

%Unless stated otherwise, we use the following instance generator parameter values:
%\begin{itemize}
%	\item
%	The number $q_\text{SoD}$ of SoD constraints is set to $\lfloor 0.05 k (k+1) \rfloor$.
%	
%	\item
%	The penalty $p_\text{SoD}$ for breaking an SoD constraint is $10 \alpha$.
%	
%	\item
%	The penalty $p_\text{Card}$ for breaking a cardinality constraint is $10$.
%
%	\item
%	The penalty $p_A$ for breaking an authorization is $\alpha$.
%	
%	\item
%	$f_\Pi(z) = z^2$, where $z$ is the number of users involved in the solution.
%	
%	\item
%	$\tau = \lfloor 0.05 n \rfloor$ unless specified otherwise.
%\end{itemize}
%
%Here, $\alpha$ is a parameter controlling the importance of the WSP constraints and authorizations relative to the importance of resiliency and the number of users in the extended plan.
%By default, we use $\alpha = 1$.

\subsection{$t$-wboundness of the User Count constraint}

It is a trivial observation that the User Count constraint is 0-wbounded.
However, we can also establish the $t$-wboundness of a set of constraints $C$, where $C$ includes a User Count constraint.

\begin{proposition}
\label{rem:user-count}
Let $C_\text{SoD}$ be a set of SoD$_U$ constraints.
Let $C_\text{Card}$ be a set of Cardinality-Lower-Bound constraints with the penalty function $w_c(A) = \max \{ 0,\, p_\text{Card} \cdot (\ell - |A(r)|) \}$ for some $\ell$.
Let $c_\Pi$ be a User Count constraint with the penalty function $w_{c_\Pi}(A) = |A(R)|^2$.
Let $C = C_\text{SoD} \cup C_\text{Card} \cup \{ c_\Pi \}$.
Then $C$ is $0.5k \cdot (|C_\text{Card}| \cdot p_\text{Card} + 1)$-bounded.
\end{proposition}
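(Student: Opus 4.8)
By the (set version of the) definition of $f(k,n)$-boundedness, we must produce, for every complete authorization relation $A$, a set $U'\subseteq U$ with $|U'|\le 0.5k\cdot(|C_\text{Card}|\cdot p_\text{Card}+1)$ such that for each $u\in U\setminus U'$ the relation $A-u$ is complete and $w_C(A-u)\le w_C(A)$. The plan is to take $U'$ to consist of the users that are ``irreplaceable'' in $A$ (together, when necessary, with the few users that are used at all), and to bound the effect of deleting any other user by a weight-accounting argument.

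The first step is to analyse a single deletion. Fix $u$ with $A(u)\neq\emptyset$, suppose $A-u$ is still complete, and write $m=|A(R)|$. I would bound $w_C(A-u)-w_C(A)$ term by term: (i) for each constraint $c=\SoDU{r,r'}$, since $(A-u)(r)\cap(A-u)(r')\subseteq A(r)\cap A(r')$ and $f_c$ is non-decreasing, $w_c$ does not increase; (ii) for the User Count constraint $c_\Pi$ we have $|(A-u)(R)|=m-1$, so $w_{c_\Pi}$ decreases by exactly $m^2-(m-1)^2=2m-1$; (iii) for each Cardinality-Lower-Bound constraint $c=(r,\ge,\ell)$ in $C_\text{Card}$, deleting $u$ lowers $|A(r)|$ by at most one, so $w_c$ grows by at most $p_\text{Card}$, and summing over all $|C_\text{Card}|$ such constraints the total increase is at most $|C_\text{Card}|\cdot p_\text{Card}$. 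Hence $w_C(A-u)-w_C(A)\le |C_\text{Card}|\cdot p_\text{Card}-(2m-1)$, which is $\le 0$ as soon as $m\ge 0.5(|C_\text{Card}|\cdot p_\text{Card}+1)$. Deleting a user $u$ with $A(u)=\emptyset$ leaves $A$ unchanged and is always harmless.

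The second step is to record when $A-u$ fails to be complete: precisely when $u$ is the unique authorized user of some resource, i.e.\ $A(r)=\{u\}$ for some $r\in R$. Call such users \emph{forced}; assigning to each forced user one resource it owns gives an injection into $R$, so there are at most $k$ forced users. I would then split on the value of $m=|A(R)|$. If $m\ge 0.5(|C_\text{Card}|\cdot p_\text{Card}+1)$, let $U'$ be the set of forced users, so $|U'|\le k\le 0.5k\cdot(|C_\text{Card}|\cdot p_\text{Card}+1)$; any $u\notin U'$ is either unused (harmless) or used and non-forced, in which case $A-u$ is complete and, by the first step, $w_C(A-u)\le w_C(A)$. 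If instead $m< 0.5(|C_\text{Card}|\cdot p_\text{Card}+1)$, let $U'=A(R)$, so $|U'|=m\le 0.5(|C_\text{Card}|\cdot p_\text{Card}+1)\le 0.5k\cdot(|C_\text{Card}|\cdot p_\text{Card}+1)$ and every $u\notin U'$ is unused. Both bounds on $|U'|$ use only $k\ge 1$ and $|C_\text{Card}|\cdot p_\text{Card}\ge 1$.

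The one genuine obstacle is balancing the two sources of slack: removing a used user can raise the cardinality penalties by as much as $|C_\text{Card}|\cdot p_\text{Card}$, and the only quantity that can absorb this is the quadratic drop $2m-1$ in the User Count penalty -- so the deletion argument is available exactly when the instance already uses ``many'' users, which is precisely the regime in which the crude bound $k$ on the number of forced users does the rest. Everything else is routine bookkeeping about how $\max\{0,p_\text{Card}(\ell-|A(r)|)\}$ changes when $|A(r)|$ drops by one.
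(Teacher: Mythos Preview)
Your core calculation---that deleting a single used user $u$ from $A$ changes $w_C$ by at most $|C_\text{Card}|\cdot p_\text{Card}-(2m-1)$ with $m=|A(R)|$---is exactly the computation the paper carries out, so on that level the two arguments coincide.

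The difference is one of framing and of which notion of ``bounded'' is being proved. You read the word ``bounded'' literally and target the (set version of the) $f(k,n)$-bounded definition from Section~\ref{sec:bound}, building an explicit protected set $U'$ and splitting into the two regimes $m\ge 0.5(|C_\text{Card}|\,p_\text{Card}+1)$ and $m<0.5(|C_\text{Card}|\,p_\text{Card}+1)$. The paper, however, intends $t$-\emph{w}bounded (note the subsection title ``$t$-wboundness of the User Count constraint''; the missing ``w'' in the proposition is a typo). Its proof is shorter: assume $A$ minimises $w_C$ and has $|A|>t$; then $|A(R)|>0.5(|C_\text{Card}|\,p_\text{Card}+1)$, so deleting \emph{any} used user yields $A'$ with $w_C(A')<w_C(A)$, contradicting minimality. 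The paper never constructs a $U'$, never splits cases, and---unlike you---does not discuss whether $A'$ remains complete.

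So your argument is correct for the notion you targeted, and is in fact more careful about completeness than the paper's own proof. If you want to match the paper, drop the $U'$ construction and phrase the large-$m$ case as a direct contradiction to minimality of $w_C$; the small-$m$ case then disappears since $|A|>t$ forces $m>0.5(|C_\text{Card}|\,p_\text{Card}+1)$ via $|A|\le k\cdot|A(R)|$.
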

\begin{proof}
Let us assume that $A$ is an authorization relation that minimizes $w_C(A)$ and that $|A| > 0.5k \cdot (|C_\text{Card}| \cdot p_\text{Card} + 1)$.
Observe that $|A(R)| > 0.5 \cdot (|C_\text{Card}| \cdot p_\text{Card} + 1)$.
We will show a contradiction by constructing an authorization relation $A'$ such that $w_C(A') < w_C(A)$ and $|A'(R)| = |A(R)| - 1$.

Select an arbitrary user $u \in U$ such that $A(u) \neq \emptyset$.
Let $A'$ be an authorization relation such that $A'(u') = A(u')$ for $u' \in U \setminus \{ u \}$ and $A'(u) = \emptyset$.
(Effectively, we exclude one user involved in the authorization relation.)
Note that
\begin{enumerate}
	\item
	$w_c(A') - w_c(A) \le 0$ for every $c \in C_\text{SoD}$ as excluding a user cannot increase the SoD penalty.
	
	\item
	$w_c(A') - w_c(A) \le p_\text{Card}$ as the penalty of a Cardinality-Lower-Bound constraint can only increase by $p_\text{Card}$ following an exclusion of a single user.
	
	\item
	 $w_{c_\Pi}(A') - w_{c_\Pi}(A) = |A'(R)|^2 - |A(R)|^2 = (|A(R)| - 1)^2 - |A(R)|^2 = -2|A(R)| + 1$.
\end{enumerate}
%Note that $w_c(A') - w_c(A) \le 0$ for every $c \in C_\text{SoD}$ as removing a user cannot increase the SoD penalty.
%Also note that $w_c(A') - w_c(A) \le p_\text{Card}$ as the penalty of a Cardinality-Lower-Bound constraint can only increase by $p_\text{Card}$ following a removal of a single user.
%Finally, $w_{c_\Pi}(A') - w_{c_\Pi}(A) = |A'(R)|^2 - |A(R)|^2 = (|A(R)| - 1)^2 - |A(R)|^2 = -2|A(R)| - 1$.
Hence, $w_C(A') - w_C(A) \le |C_\text{Card}| \cdot p_\text{Card} - 2|A(R)| + 1$.
Since $|A(R)| > 0.5 \cdot (|C_\text{Card}| \cdot p_\text{Card} + 1)$,
\[
w_C(A') - w_C(A) < |C_\text{Card}| \cdot p_\text{Card} - 2 \cdot 0.5 \cdot (|C_\text{Card}| \cdot p_\text{Card} + 1) + 1 = 0.
\]
In other words, $w_C(A') < w_C(A)$ which is a contradiction to the assumption that $A$ minimizes $w_C(A)$.
Hence, an optimal authorization relation $A$ cannot be of size greater than $0.5k \cdot (|C_\text{Card}| \cdot p_\text{Card} - 1)$, i.e.\ $C$ is $0.5k \cdot (|C_\text{Card}| \cdot p_\text{Card} + 1)$-wbounded.
\end{proof}

Proposition~\ref{rem:user-count} is important because it shows that the instances produced by our instance generator are $t$-wbounded and that $t$ does not depend on $\alpha$, $\tau$ or $n$.
Hence, an FPT algorithm is expected to scale polynomially with $n$ if $k$ is fixed, even though $\tau$ is a function of $n$.
We will use this as a test for FPT-like running times.

\subsection{Computational results}

We used IBM CPLEX~20.1 to solve the MIP formulations.
The formulations were generated using Python~3.8.8 scripts available at~\url{doi.org/10.17639/nott.7124}.
The experiments were conducted on a Dell XPS~15~9570 with Intel i7-8750H CPU (2.20~GHz) and 32~GB of RAM\@.
CPLEX was allowed to use all the CPU cores.
Only one instance of CPLEX would run at any point in time.
Each experiment was repeated 10 times for 10 different instances produced with different random number generator seed values.
The results reported in this section are the averages over the 10 runs.

\newenvironment{customlegend}[1][]{%
        \begingroup
        \csname pgfplots@init@cleared@structures\endcsname
        \pgfplotsset{#1}%
    }{%
        \csname pgfplots@createlegend\endcsname
        \endgroup
    }%
    \def\addlegendimage{\csname pgfplots@addlegendimage\endcsname}

\pgfplotsset{
	my subfig/.style={
		compat=newest,
		width=1\textwidth,
		grid=major,
		typeset ticklabels with strut, 
		inner ysep=1pt},
	naive solver/.style={blue, ultra thick},
	up solver/.style={red, ultra thick},
	total penalty/.style={black, thick},
	sod penalty/.style={olive, thick},
	auth penalty/.style={blue, thick},
	user count penalty/.style={green, thick},
	card penalty/.style={red, thick},
}

\begin{figure}[t]
\centering
	\begin{subfigure}[b]{1\linewidth}
		\begin{center}
		\begin{tikzpicture}
			\begin{customlegend}[
				legend columns=-1,
				legend style={align=left,/tikz/every even column/.append style={column sep=1em}},
        		legend entries={Naive, UP}]
	        \addlegendimage{naive solver, line legend}
	        \addlegendimage{up solver, line legend}
        	\end{customlegend}
		\end{tikzpicture}
		\end{center}
	\end{subfigure}

%	\smallskip

	\begin{subfigure}[t]{0.49\linewidth}
		\begin{center}
		\begin{tikzpicture}[trim axis right,trim axis left]
			\begin{semilogyaxis}[
				height=4cm,
				xlabel={$n$},
				ylabel={Running time, sec},
				ymin=0,
				my subfig
			]
	
	        \addplot[naive solver] table[
				x=n,
				y=time,
				restrict expr to domain={\thisrow{alpha}}{1:1},
			] {times-k-naive.txt};
	
	        \addplot[up solver] table[
				x=n,
				y=time,
				restrict expr to domain={\thisrow{alpha}}{1:1},
			] {times-k-up.txt};
			\end{semilogyaxis}
		\end{tikzpicture}		
		\end{center}
	\end{subfigure}
	\hfill
	\begin{subfigure}[t]{0.49\linewidth}
		\begin{center}
		\begin{tikzpicture}[trim axis right,trim axis left]
			\begin{loglogaxis}[
				height=4cm,
				xlabel={$n$},
				ylabel={Running time, sec},
				ymin=0,
				my subfig
			]
	
	        \addplot[naive solver] table[
				x=n,
				y=time,
				restrict expr to domain={\thisrow{alpha}}{1:1},
			] {times-n-naive.txt};
	
	        \addplot[up solver] table[
				x=n,
				y=time,
				restrict expr to domain={\thisrow{alpha}}{1:1},
			] {times-n-up.txt};
			\end{loglogaxis}
		\end{tikzpicture}	
		\end{center}
	\end{subfigure}
	
	\bigskip
	
	\begin{subfigure}[t]{0.49\linewidth}
		\begin{center}
		\begin{tikzpicture}[trim axis right,trim axis left]
			\begin{axis}[
				height=3cm,
				xlabel={$n$},
				ylabel={$|A(R)|$},
				ymin=0,
				my subfig
			]
		
	        \addplot[black, ultra thick] table[
				x=n,
				y=users,
				restrict expr to domain={\thisrow{alpha}}{1:1},
			] {times-k-up.txt};
			\end{axis}
		\end{tikzpicture}		
		\end{center}
	\end{subfigure}
	\hfill
	\begin{subfigure}[t]{0.49\linewidth}
		\begin{center}
		\begin{tikzpicture}[trim axis right,trim axis left]
			\begin{semilogxaxis}[
				height=3cm,
				xlabel={$n$},
				ylabel={$|A(R)|$},
				ymin=0,
				my subfig
			]
	
	        \addplot[black, ultra thick] table[
				x=n,
				y=users,
				restrict expr to domain={\thisrow{alpha}}{1:1},
			] {times-n-up.txt};
			\end{semilogxaxis}
		\end{tikzpicture}		
		\end{center}
	\end{subfigure}
	
	\bigskip
		
	\begin{subfigure}[b]{1\linewidth}
		\begin{center}
		\begin{tikzpicture}
			\begin{customlegend}[
				legend columns=-1,
				legend style={align=left,/tikz/every even column/.append style={column sep=1em}},
        		legend entries={Total, Cardinality, User Count, Authorizations, SoD}]
	        \addlegendimage{total penalty, line legend}
	        \addlegendimage{card penalty, line legend}
	        \addlegendimage{user count penalty, line legend}
	        \addlegendimage{auth penalty, line legend}
	        \addlegendimage{sod penalty, line legend}
        	\end{customlegend}
		\end{tikzpicture}
		\end{center}
	\end{subfigure}
	
	\smallskip
	
	\begin{subfigure}[b]{0.49\linewidth}
		\begin{center}
		\begin{tikzpicture}[trim axis right,trim axis left]
			\begin{axis}[
				height=4cm,
				xlabel={$n$},
				ylabel={Penalty},
				ymin=0,
				my subfig
			]
			\addplot[total penalty] table[
				x=n,
				y=objective,
				restrict expr to domain={\thisrow{alpha}}{1:1},
			] {times-k-up.txt};	
	
	        \addplot[card penalty] table[
				x=n,
				y=card-penalty,
				restrict expr to domain={\thisrow{alpha}}{1:1},
			] {times-k-up.txt};
	
	        \addplot[user count penalty] table[
				x=n,
				y=users-penalty,
				restrict expr to domain={\thisrow{alpha}}{1:1},
			] {times-k-up.txt};
	
	        \addplot[auth penalty] table[
				x=n,
				y=authorisations-penalty,
				restrict expr to domain={\thisrow{alpha}}{1:1},
			] {times-k-up.txt};
			
	        \addplot[sod penalty] table[
				x=n,
				y=sod-penalty,
				restrict expr to domain={\thisrow{alpha}}{1:1},
			] {times-k-up.txt};
			\end{axis}
			\end{tikzpicture}
		\end{center}

		\caption{$k = \lfloor 0.1 n \rfloor$.}
		\label{fig:scaling-k}
	\end{subfigure}
	\hfill
	\begin{subfigure}[b]{0.49\linewidth}
		\begin{center}
		\begin{tikzpicture}[trim axis right,trim axis left]
			\begin{loglogaxis}[
				height=4cm,
				xlabel={$n$},
				ylabel={Penalty},
				ymin=0,
				my subfig
			]
			\addplot[total penalty] table[
				x=n,
				y=objective,
				restrict expr to domain={\thisrow{alpha}}{1:1},
			] {times-n-up.txt};
	
	        \addplot[card penalty] table[
				x=n,
				y=card-penalty,
				restrict expr to domain={\thisrow{alpha}}{1:1},
			] {times-n-up.txt};
	
	        \addplot[user count penalty] table[
				x=n,
				y=users-penalty,
				restrict expr to domain={\thisrow{alpha}}{1:1},
			] {times-n-up.txt};
	
	        \addplot[auth penalty] table[
				x=n,
				y=authorisations-penalty,
				restrict expr to domain={\thisrow{alpha}}{1:1},
			] {times-n-up.txt};
			
	        \addplot[sod penalty] table[
				x=n,
				y=sod-penalty,
				restrict expr to domain={\thisrow{alpha}}{1:1},
			] {times-n-up.txt};
			\end{loglogaxis}
		\end{tikzpicture}
		\end{center}

		\caption{$k = 10$.}
		\label{fig:scaling-n}
	\end{subfigure}
	\caption{Scaling of the solution time, number $|A(R)|$ of users involved in the solution and the penalties as the instance size changes.
	In all instances, $\alpha = 1$ and $\tau = \lfloor 0.05 n \rfloor$.}
\label{fig:scaling}
\end{figure}
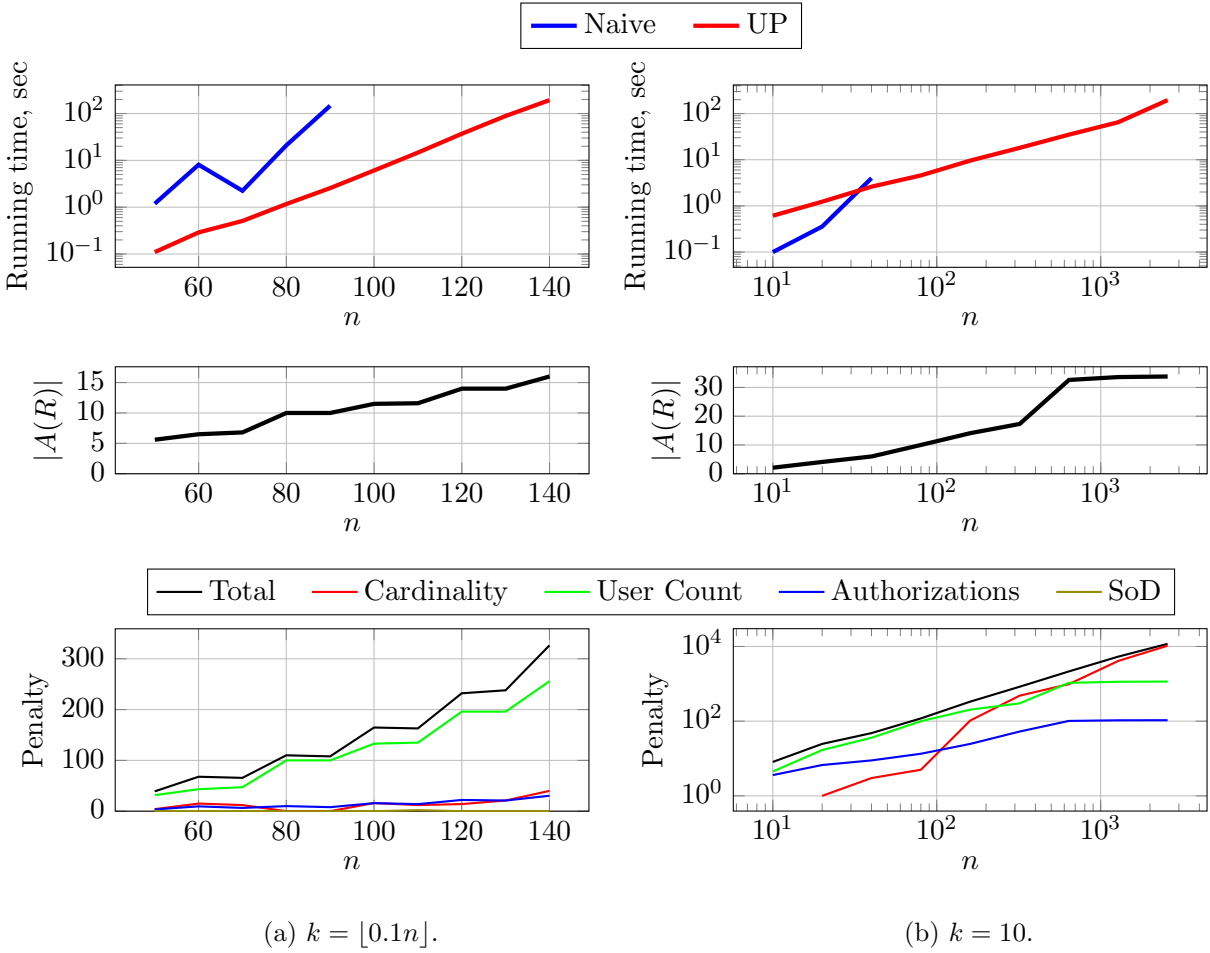

\subsubsection{Scaling}

In our first set of experiments we adjust the instance size $n$ and analyse how this affects the solution time and the optimal solution properties.
This is particularly important to understand the limitations of the methods in terms of the instance size that they can handle, as well as study the structure of solutions to large instances.
In Figure~\ref{fig:scaling-k}, we change both $k$ and $n$ (and all the associated instance generator inputs), to test how the runtime of the solvers scale.
However, as the problem is FPT, we also tested in Figure~\ref{fig:scaling-n} how the solution time and the optimal solution properties change if the value of the small parameter $k$ is fixed while the problem size $n$ changes.

We notice that the UP solver generally outperforms the Naive solver by a large margin; in fact, it scales much better, hence the gap between the solvers increases with the instance size.
The running time of the UP solver seems to be exponential only in $k$ and linear in $n$; i.e., it has FPT-like running time.
It is hard to determine how the Naive solver's running time scales as we could only obtain a few data points but it appears that its running time scales super-polynomially  even if $k$ is fixed meaning that its running time is not FPT-like.
In other words, we believe that the UP solver efficiently exploits the FPT structure of the problem whereas the Naive solver fails to do so.

When we scale both $k$ and $n$ (Figure~\ref{fig:scaling-k}), the number of users $|A(R)|$ in the optimal solutions grows linearly.
However when we fix $k$ (Figure~\ref{fig:scaling-n}), there seems to be an upper bound on $|A(R)|$.
This is consistent with our expectations; according to Proposition~\ref{rem:user-count}, the number of users is expected to be bounded by $0.5 (|C_\text{Card}| \cdot p_\text{Card} + 1) = 0.5 (10k + 1)$.
For $k = 10$, this gives us an upper bound of around 50.
The discrepancy with the practice is due to the influence of the SoD constraints and authorizations, both generating pressure to keep the number of users small.

As long as $k$ is comparable to $n$, the User Count constraint is the main cause of the penalty.
However, as $n$ gets bigger relative to $k$, the cardinality constraints penalty begins to dominate.
This is due to the relation between $n$ and $\tau$; while the number of users stays unchanged as we increase $n$, the value of $\tau$ grows as does the penalty caused by violations of the cardinality constraints.
With a few minor exceptions, all the SoD constraints are satisfied in all the experiments, whereas authorizations are often violated to a small extent.
This imbalance is due to the 10-fold difference in the corresponding penalties.

\subsubsection{Sensitivity to $\alpha$ and $\tau$}

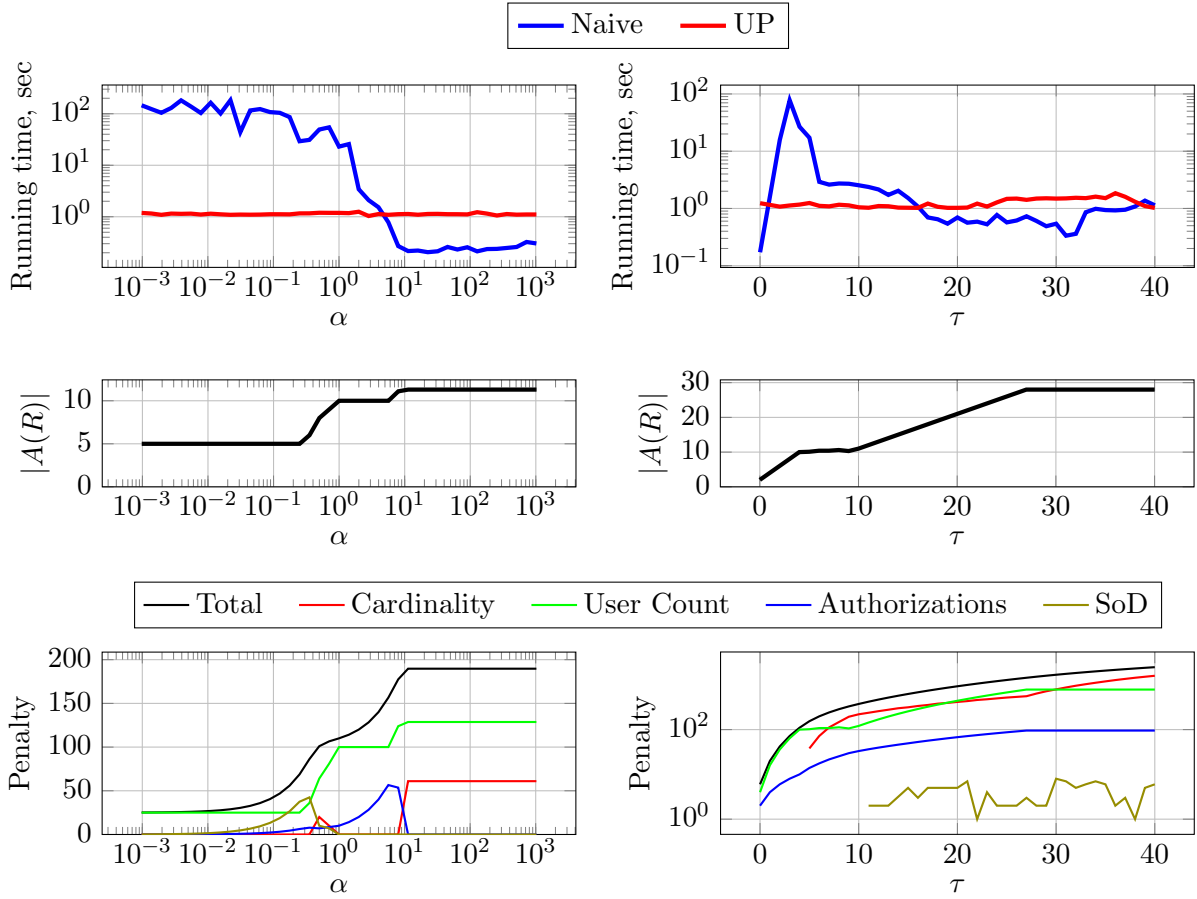
\begin{figure}[t]
\centering
	\begin{subfigure}[b]{1\linewidth}
		\begin{center}
		\begin{tikzpicture}
			\begin{customlegend}[
				legend columns=-1,
				legend style={align=left,/tikz/every even column/.append style={column sep=1em}},
        		legend entries={Naive, UP}]
	        \addlegendimage{naive solver, line legend}
	        \addlegendimage{up solver, line legend}
        	\end{customlegend}
		\end{tikzpicture}
		\end{center}
	\end{subfigure}

%	\smallskip

	\begin{subfigure}[t]{0.49\linewidth}
		\begin{center}
		\begin{tikzpicture}[trim axis right,trim axis left]
			\begin{loglogaxis}[
				height=4cm,
				xlabel={$\alpha$},
				ylabel={Running time, sec},
				ymin=0,
				my subfig
			]
	
	        \addplot[naive solver] table[
				x=alpha,
				y=time,
			] {times-alpha-naive.txt};
	
	        \addplot[up solver] table[
				x=alpha,
				y=time,
			] {times-alpha-up.txt};
			\end{loglogaxis}
		\end{tikzpicture}		
		\end{center}
	\end{subfigure}
	\hfill
	\begin{subfigure}[t]{0.49\linewidth}
		\begin{center}
		\begin{tikzpicture}[trim axis right,trim axis left]
			\begin{semilogyaxis}[
				height=4cm,
				xlabel={$\tau$},
				ylabel={Running time, sec},
				ymin=0,
				my subfig
			]
	
	        \addplot[naive solver] table[
				x=tau,
				y=time,
			] {times-tau-naive.txt};
	
	        \addplot[up solver] table[
				x=tau,
				y=time,
			] {times-tau-up.txt};
			\end{semilogyaxis}
		\end{tikzpicture}	
		\end{center}
	\end{subfigure}
	
	\bigskip
	
	\begin{subfigure}[t]{0.49\linewidth}
		\begin{center}
		\begin{tikzpicture}[trim axis right,trim axis left]
			\begin{semilogxaxis}[
				height=3cm,
				xlabel={$\alpha$},
				ylabel={$|A(R)|$},
				ymin=0,
				my subfig
			]
		
	        \addplot[black, ultra thick] table[
				x=alpha,
				y=users,
			] {times-alpha-up.txt};
			\end{semilogxaxis}
		\end{tikzpicture}		
		\end{center}
	\end{subfigure}
	\hfill
	\begin{subfigure}[t]{0.49\linewidth}
		\begin{center}
		\begin{tikzpicture}[trim axis right,trim axis left]
			\begin{axis}[
				height=3cm,
				xlabel={$\tau$},
				ylabel={$|A(R)|$},
				ymin=0,
				my subfig
			]
	
	        \addplot[black, ultra thick] table[
				x=tau,
				y=users,
			] {times-tau-up.txt};
			\end{axis}
		\end{tikzpicture}		
		\end{center}
	\end{subfigure}
	
	\bigskip
		
	\begin{subfigure}[b]{1\linewidth}
		\begin{center}
		\begin{tikzpicture}
			\begin{customlegend}[
				legend columns=-1,
				legend style={align=left,/tikz/every even column/.append style={column sep=1em}},
        		legend entries={Total, Cardinality, User Count, Authorizations, SoD}]
	        \addlegendimage{total penalty, line legend}
	        \addlegendimage{card penalty, line legend}
	        \addlegendimage{user count penalty, line legend}
	        \addlegendimage{auth penalty, line legend}
	        \addlegendimage{sod penalty, line legend}
        	\end{customlegend}
		\end{tikzpicture}
		\end{center}
	\end{subfigure}
	
	\smallskip
	
	\begin{subfigure}[b]{0.49\linewidth}
		\begin{center}
		\begin{tikzpicture}[trim axis right,trim axis left]
			\begin{semilogxaxis}[
				height=4cm,
				xlabel={$\alpha$},
				ylabel={Penalty},
				ymin=0,
				my subfig
			]
			\addplot[total penalty] table[
				x=alpha,
				y=objective,
			] {times-alpha-up.txt};	
	
	        \addplot[card penalty] table[
				x=alpha,
				y=card-penalty,
			] {times-alpha-up.txt};
	
	        \addplot[user count penalty] table[
				x=alpha,
				y=users-penalty,
			] {times-alpha-up.txt};
	
	        \addplot[auth penalty] table[
				x=alpha,
				y=authorisations-penalty,
			] {times-alpha-up.txt};
			
	        \addplot[sod penalty] table[
				x=alpha,
				y=sod-penalty,
			] {times-alpha-up.txt};
			\end{semilogxaxis}
			\end{tikzpicture}
		\end{center}

		\caption{Changing input $\alpha$ while $\tau = 4$.}
		\label{fig:alpha}
	\end{subfigure}
	\hfill
	\begin{subfigure}[b]{0.49\linewidth}
		\begin{center}
		\begin{tikzpicture}[trim axis right,trim axis left]
			\begin{semilogyaxis}[
				height=4cm,
				xlabel={$\tau$},
				ylabel={Penalty},
				ymin=0,
				my subfig
			]
			\addplot[total penalty] table[
				x=tau,
				y=objective,
			] {times-tau-up.txt};
	
	        \addplot[card penalty] table[
				x=tau,
				y=card-penalty,
			] {times-tau-up.txt};
	
	        \addplot[user count penalty] table[
				x=tau,
				y=users-penalty,
			] {times-tau-up.txt};
	
	        \addplot[auth penalty] table[
				x=tau,
				y=authorisations-penalty,
			] {times-tau-up.txt};
			
	        \addplot[sod penalty] table[
				x=tau,
				y=sod-penalty,
			] {times-tau-up.txt};
			\end{semilogyaxis}
		\end{tikzpicture}
		\end{center}

		\caption{Changing input $\tau$ while $\alpha = 1$.}
		\label{fig:tau}
	\end{subfigure}
	\caption{Analysis of the solution time, number $|A(R)|$ of users involved in the solution and the penalties as the instance generator inputs change.
	In all instances, $k = 8$ and $n = 80$.}
\label{fig:alpha-and-tau}
\end{figure}

The second set of experiments is designed to analyse the impact of the instance generator inputs $\alpha$ and $\tau$ on the instances, optimal solutions and the running times of the solvers.
The results are presented in Figure~\ref{fig:alpha-and-tau}.

These experiments reveal that the values of $\alpha$ and $\tau$ have little effect on the running time of UP\@.
In fact, the running time is consistently proportional to the size of the formulation $O(n \cdot 2^k)$.
Also, the composition of the formulation takes about half of the running time.
In other words, CPLEX solves this formulation in time linear in its size but the size of the formulation is exponential in $k$ putting a limit on how far this method can be scaled.

Thus, the Naive solver outperforms the UP solver in some extreme cases; when the instances are easy, the Naive formulation can exploit their special structure whereas the UP formulation remains large and as a result slow.
For example, when $\alpha$ is large, breaking the SoD constraints and authorizations becomes prohibitively expensive which significantly reduces the search space for the Naive solver.

However, when $\alpha$ is close to 1 and $\tau$ is small, the instances are particularly challenging as the optimal solutions tend to balance the penalties of all types, and this is where UP is particularly effective compared to Naive.

\subsubsection{Experiment conclusions}
For an organization that implements a workflow management system and has strict business continuity requirements, it will be important to find a trade-off between satisfying authorization policies and constraints and ensuring that workflow instances can complete when users are unavailable. We believe these experiments provide some useful insights into the interplay between authorization policies, separation of duty constraints and resiliency, and {form} a basis from which costs of violating policies and resiliency can be balanced.

The trade-offs between authorization and resiliency requirements are evident in Figure~\ref{fig:alpha}. For very small values of $\alpha$ (when penalties for violating authorization requirements are low) the penalties in Valued APEP solutions are dominated by $|A(R)|$, the number of users assigned to the extended plan. As $\alpha$ increases, the penalties associated with violations of authorization and constraints begin to dominate, and, as $\alpha$ increases further (meaning that authorizations and constraints become increasingly expensive to violate) the penalties associated with breaking resiliency and $|A(R)|$ dominate. We also see that $|A(R)$ reaches a maximum value when $\alpha$ equals around $10$, at which point the penalties associated with violating authorization requirements are negligible compared to those associated with resiliency and $|A(R)|$.

From Figure~\ref{fig:tau} we see that for {$\tau \le 4$} {resiliency requirements are always satisfied}. For {$\tau > 4$}, unsurprisingly, $\tau$-resilient extended plans would have to be a large proportion of the user population and the penalties associated with such plans begin to dominate. {For very large values of $\tau$} we find that $|A(R)|$ drops below the value of $\tau$, meaning the extended plans cannot be $\tau$-resilient and the penalties for such plans are mainly associated with resiliency and the  of the solution.

Finally, we note that the concept of user profiles, used to prove theoretical results about \vapep{}, also enables us to derive a MIP formulation that can be solved by CPLEX in FPT-like time and that is efficient across all our test instances.
In contrast, the straightforward (Naive) formulation of the problem scales super-polynomially with the size of the problem even if the small parameter is fixed.
This shows the importance of the FPT studies even if the researcher intends to use general-purpose solvers to address the problem.
% 
% Other observations are concerned with the properties of the optimal solutions.
% The input $\alpha$ effectively controls the significance of the WSP constraints and authorizations.
% When $\alpha$ is set to a large value, the solutions rarely break any WSP constraints or authorizations, whereas if its value is small then the resiliency and solution size are clearly prioritised.
% 
% It is apparent from Figure~\ref{fig:tau} that the concept of $\tau$-resiliency can be used for a wide range of values of $\tau$.
% However, if $\tau$ is large then the focus of the problem is mainly on the Cardinality-Lower-Bound and User Count constraints.
% The instances are particularly challenging and the solutions are in some sense diverse when the values of $\alpha$ and $\tau$ are reasonable.
% 

\section{Related work and discussion}\label{sec:related-work}

{\sc Valued APEP} builds on a number of different strands of recent research in access control, including workflow satisfiability, workflow resiliency and risk-aware access control.
Workflow satisfiability is concerned with finding an allocation of users to workflow steps such that every user is authorized for the steps to which they are assigned and all workflow constraints are satisfied.
Work in this area began with the seminal paper by Bertino {\em et al.}~\cite{BeFeAt99}.
Wang and Li initiated the use of parameterized complexity analysis to better understand workflow satisfiability~\cite{WaLi10}, subsequently extended to include user-independent constraints~\cite{CoCrGaGuJo14} and the study of {\sc Valued WSP}~\cite{CramptonGK15}.
As we have seen APEP can be used to encode workflow satisfiability problems.

Workflow resiliency is concerned with ensuring business continuity in the event that some (authorized) users are unavailable to perform steps in a workflow~\cite{LiWaTr09,MaMoMo14,Fong19}.
Berg\'e {\em et al.} showed that APEP can be used to encode certain kinds of resiliency policies~\cite[Section 6]{BergeCGW18}.

Researchers in access control have recognized that it may be necessary to violate access control policies in certain, exceptional circumstances~\cite{MaDuSl14,Petritsch14}, provided that those violations are controlled appropriately.
One means of controlling violations is by assigning a cost to policy violations, usually defined in terms of risk~\cite{ChCr11,DiBeEyBaMo04}.
Thus, the formalization of problems such as {\sc Valued WSP} and {\sc Valued APEP} and the development of algorithms to solve these problems may be of use in developing risk-aware access control systems.

Thus, we believe that APEP and {\sc Valued APEP} are interesting and relevant problems, and understanding the complexity of these problems and developing the most efficient algorithms possible to solve them is important.
A considerable amount of work has been done on the complexity of WSP, showing that the problem is FPT for many important classes of constraints~\cite{CrGuYe13,CoCrGaGuJo14,KarapetyanPGG19}.
It is also known that {\sc Valued WSP} is FPT and, for user-independent constraints, the complexity of the problem is identical to that for WSP (when polynomial terms in the sizes of the user set and constraint set are disregarded in the running time)~\cite{CramptonGK15}.
Roughly speaking, this is because (weighted) user-independent constraints in the context of workflow satisfiability allow us to restrict our attention to partitions of the set of steps when searching for solutions, giving rise to the exponential term $2^{k \log k}$ in the running time of an algorithm to solve ({\sc Valued}) WSP.

APEP, unsurprisingly, is known to be a more complex problem~\cite{BergeCGW18}.
The complexity of APEP differs from WSP because it is not sufficient to consider partitions of the set of resources, in part because an arbitrary relation $A$ is not a function.
The results in this paper provide the first complexity results for {\sc Valued APEP}, showing (in Corollary~\ref{cor:FPT_known_constraints}) that it is no more difficult than APEP for constraints in $\bodU$, $\bodE$, $\sodU$ and $\sodE$ (disregarding polynomial terms).

We believe the concept of a user profile and Theorem~\ref{thm:t_bounded_algorithm} are important contributions to the study of APEP as well as {\sc Valued APEP}, providing a generic way of establishing complexity results for different classes of constraints.
In particular, Corollary~\ref{cor:FPT_known_constraints} of Theorem~\ref{thm:t_bounded_algorithm} actually shows how to improve existing results for APEP$\langle \bodU,\bodE,\sodE,\sodU \rangle$ due to Berg\'e {\em et al}~\cite{BergeCGW18}.
Moreover, when an APEP instance is equivalent to a WSP instance (i.e, it contains a cardinality constraint $(r,\leq, 1)$ for each $r \in R$) then the instance is $k$-bounded, and a user profile is the characteristic function of some partition of $R$.
Thus we essentially recover the known FPT result for {\sc Valued WSP}, which is based on the enumeration of partitions of the set of workflow steps.

\section{Concluding remarks}
\label{sec:conclusions}

We believe this paper makes three significant contributions.
First, we introduce {\sc Valued APEP}, a generalization of APEP, which, unlike APEP, always returns some authorization relation.
%The generalization is achieved by replacing the base authorization relation and constraints with functions that assign a non-negative weight to potential solutions, the weight being $0$ iff the potential solution is authorized with respect to the base relation and the constraints.
Thus a solution to {\sc Valued APEP} is more useful than that provided by APEP:
if there exists a valid authorization relation {\sc Valued APEP} will return it; if not, {\sc Valued APEP} returns a solution of minimum weight.
This allows an administrator, for example, to decide whether to implement the solution for an instance of {\sc Valued APEP} or adjust the base authorization relation and/or the constraints in the input in an attempt to find a more appropriate solution.

The second contribution is to advance the techniques available for solving APEP as well as {\sc Valued APEP}.
Specifically, the notion of a user profile plays a similar role in the development of algorithms to solve ({\sc Valued}) APEP as patterns do in solving ({\sc Valued}) WSP.
The enumeration of user profiles is a powerful technique for analyzing the complexity of {\sc Valued} APEP, yielding general results for the complexity of the problem (which are optimal assuming the Exponential Time Hypothesis holds) and improved results for APEP.

The third contribution is the experimental study that involves a new set of realistic benchmark instances, two mixed integer programming formulations of \vapep{} and extensive analysis of the computational results.
Apart from the conclusions related to the new concept of $\tau$-resiliency in workflows, we demonstrate that a general-purpose solver can solve an FPT problem in FPT-like time if the formulation is `FPT-aware', i.e., if it exploits our understanding of the FPT properties of the problem, and that such an `FPT-aware' formulation significantly outperforms a naive formulation.
This is particularly significant for practitioners who often prefer to use general purpose solvers to address complex problems, as they can benefit from the otherwise theoretical computational complexity studies.

There are several opportunities for further work.
We  intend to investigate other (weighted) user-independent constraints for ({\sc Valued}) APEP.
First, we are interested in what other problems in access control can be encoded as APEP instances, apart from workflow satisfiability and resiliency problems.
Second, we would like to consider appropriate weight functions for such encodings, which would have the effect of providing more useful (weighted) solutions for the original problems (rather a binary yes/no solution).
Our work also paves the way for work on quantifying the trade-offs associated with violating security and resiliency requirements when it is impossible to satisfy both simultaneously. 
A better understanding of these trade-offs together with tools for computing optimal solutions would seem to have considerable value to commercial organizations, enabling them to manage conflicting security and business requirements in an informed manner.

\paragraph{Acknowledgement} Research in this paper was supported by Leverhulme Trust grant RPG-2018-161.

%\bibliography{references}

%\appendix
%\section{Proofs}

\end{document}